\documentclass[letterpaper,11pt]{article}

\pagestyle{plain}                                                      
\setlength {\topmargin}{-0.45in}                        
\setlength {\textheight}{8.8in}                 
\setlength {\oddsidemargin}{5pt}                
\setlength {\evensidemargin}{0pt}               
\setlength {\textwidth}{6.4in}                  
\setlength {\parskip}{8pt}   %
\setlength {\parsep}{6pt}   %
\setlength {\parindent}{10pt}                   %

\usepackage[T1]{fontenc}
\usepackage[utf8]{inputenc}
\usepackage{authblk}
\usepackage{amsthm}
\usepackage{amsmath}
\usepackage{amsfonts}
\usepackage{verbatim}
\usepackage{tikz}
\usepackage{multirow}

\newtheorem{lemma}{Lemma}
\newtheorem{theorem}{Theorem}
\newtheorem{definition}{Definition}

\newtheorem{conjecture}{Conjecture}

\newtheorem{claim}{Claim}
\newenvironment{claimproof}{{\bf Proof of the Claim:}}{\hspace*{\fill} \bf\tt End of Claim Proof}

\title{ \vspace*{-0.5in} Relaxed Byzantine Vector Consensus\thanks{This research is supported in part by National Science Foundation award 1421918. Any opinions, findings, and conclusions or recommendations expressed here are those of the authors and do not necessarily reflect the views of the funding agency or the U.S. government.}}
\author[1]{Zhuolun Xiang\thanks{xiangzhuolun@gmail.com}}
\author[2]{Nitin H.Vaidya\thanks{nhv@illinois.edu}}
\affil[1]{Institute for Interdisciplinary Information Sciences\protect\linebreak Tsinghua University}
\affil[2]{Department of Electrical and Computer Engineering\protect\linebreak University of Illinois at Urbana-Champaign}

\begin{document}
  \maketitle

\begin{abstract}
Exact Byzantine consensus problem requires that non-faulty processes reach agreement on a decision (or output) that is in the convex hull of the inputs at the non-faulty processes. It is well-known that exact consensus is
impossible in an asynchronous system in presence of faults, and in a synchronous system, the number of processes $n$ must be at least $3f+1$ to be able to achieve exact Byzantine consensus with scalar inputs, in presence of up to $f$ Byzantine faulty processes.
Recent work has shown that when the inputs are $d$-dimensional vectors of reals, $n\geq \max(3f+1,(d+1)f+1)$ is the tight bound on the number of processes $n$
to be able to achieve exact Byzantine consensus in a synchronous system. 
In an asynchronous system approximate Byzantine consensus is possible if and only if $n\geq (d+2)f+1$.

Due to the dependence of the lower bound on vector dimension $d$, the number
of processes necessary becomes large when the vector dimension is large.
With the hope of reducing the lower bound on $n$,
we consider two relaxed versions of Byzantine vector consensus: \textsl{k-Relaxed Byzantine vector consensus} and \textsl{$(\delta,p)$-Relaxed Byzantine vector consensus}. 
In $k$-relaxed Byzantine consensus, the validity condition requires that the output must be in the convex hull
of projection of the inputs onto any subset of $k$-dimensions of the vectors.
For $(\delta,p)$-consensus the validity condition requires that the output must be
within distance $\delta$ of the convex hull of the inputs
of the non-faulty processes, where $L_p$ norm is used as the distance metric. 
For $(\delta,p)$-consensus, we consider two versions: in one version, $\delta$ is a constant,
and in the second version, $\delta$ is a function of the inputs themselves.

We show that for
$k$-relaxed consensus and $(\delta,p)$-consensus with constant $\delta\geq 0$,
the bound on $n$ is identical to the bound stated above for the original
vector consensus problem.
On the other hand, when $\delta$ depends on the inputs, we show that the bound on $n$ is smaller
when $d\geq 3$.

\end{abstract}

\section{Introduction}

This paper considers Byzantine consensus in a complete network consisting of $n$ processes
of which up to $f$ processes may be Byzantine faulty \cite{lamport1982byzantine}.
$n$ is assumed to be at least 2, since consensus is trivial for $n=1$.
The exact Byzantine consensus problem requires that non-faulty processes reach agreement on an identical decision (or output) that is in the convex hull of the inputs at the non-faulty processes. It is well-known that exact consensus is
impossible in an asynchronous system in presence of faults \cite{fischer1985impossibility}. In a synchronous system, the number of processes $n$ must be at least $3f+1$ to be able to achieve exact Byzantine consensus with scalar inputs \cite{fischer1990easy}.
Recent work
 has shown that when the inputs are $d$-dimensional vectors of reals, $n\geq \max(3f+1,(d+1)f+1)$ is the tight bound on the number of processes $n$
to be able to achieve exact Byzantine consensus in a synchronous system
 \cite{vaidya2013byzantine}.
In an asynchronous system, it is shown that approximate Byzantine consensus is possible if and only if $n\geq (d+2)f+1$
  \cite{mendes2013multidimensional,vaidya2013byzantine}.

Due to the dependence of the lower bound on vector dimension $d$, the number
of processes necessary becomes large when the vector dimension is large.
With the hope of reducing the lower bound on $n$,
we consider two relaxed versions of Byzantine vector consensus: \textsl{k-Relaxed Byzantine vector consensus} and \textsl{$(\delta,p)$-Relaxed Byzantine vector consensus}. 
In $k$-relaxed Byzantine consensus, the validity condition requires that the output must be in the convex hull
of projection of the inputs onto any subset of $k$-dimensions of the vectors.
For $(\delta,p)$-consensus the validity condition requires that the output must be
within distance $\delta$ of the convex hull of the inputs
of the non-faulty processes, where $L_p$ norm is used as the distance metric. 
For brevity, we will often refer to these as $k$-consensus or $k$-relaxed consensus,
and ($\delta,p$)-consensus or ($\delta,p$)-relaxed consensus, respectively.
For $(\delta,p)$-consensus, we consider two versions: in one version, $\delta$ is a constant,
and in the second version, $\delta$ is a function of the inputs themselves.
Note that the vector consensus problem defined in \cite{mendes2013multidimensional,vaidya2013byzantine} is obtained as a special case of the above relaxed versions. In particular,
the previous problem is identical to $n$-relaxed consensus and ($0,2$)-relaxed consensus.



The main contributions of this paper are as follows:
\begin{itemize}
\item For synchronous and asynchronous systems both, we show that the tight bound on $n$ for $k$-relaxed consensus is identical for all $k$ such that $1<k\leq d$. That is, when $k>1$, the relaxation does not reduce the number of processes necessary. When $k=1$, $n\geq 3f+1$ is necessary and sufficient for all dimensions $d$.


\item For synchronous and asynchronous systems both, for a {\em constant} $\delta\geq 0$, we show that the tight bound on $n$ for $(\delta,p)$-relaxed consensus is identical to that for $\delta=0$
and $p=2$. That is, the relaxation does not reduce the number of processes necessary when $\delta$
is a constant.

\item For certain values of $\delta$ specified as a function of the inputs
of the non-faulty processes, we show that
 $(\delta,p)$-consensus can be achieved in both synchronous and asynchronous systems
with a smaller number of processes. 
We establish a relationship between $n$ and an achievable
value of $\delta$. For instance, for $f=1$ and $d\geq 3$, we show that $(\frac{e_{max}}{d-1},2)$-consensus and $(\frac{e_{min}}{2}$, 2)-consensus is achievable with $n=(d+1)$ processes, where $e_{max}$ ($e_{min}$) is the maximum (minimum) distance between the inputs of any two fault-free processes.
We also obtain partial results for other values of $f$, $n$ and $p$, and propose a conjecture for one remaining case.



\end{itemize}

\section{Related Work}

The necessary and sufficient condition for consensus in presense of Byzantine failure under various underlying network conditions is extensively studied in the literature. Lamport, Shostak and Pease \cite{lamport1982byzantine} developed the initial results on Byzantine fault-tolerant agreement. The FLP result \cite{fischer1985impossibility} showed that exact consensus is impossible even under single process failure in asynchronous systems. To circumvent this obstacle,
Dolev et al. \cite{dolev1986reaching} proposed \textsl{approximate consensus} for asynchronous systems. 

When $d=1$, the inputs are scalar, and all the $L_p$ norms are identical. For the case of $d=1$,
$(\delta,p)$-relaxed consensus is equivalent to a problem that was addressed in
prior work \cite{fischer1990easy}; for this special case, it was shown that
$n\geq 3f+1$ is necessary and sufficient \cite{fischer1990easy}.

The
\textsl{Byzantine vector consensus} (BVC) problem (also called {\em multidimensional} consensus) was introduced by Mendes and Herlihy \cite{mendes2013multidimensional}
and Vaidya and Garg \cite{vaidya2013byzantine}. Tight bounds
on number of processes $n$ for Byzantine vector consensus have been obtained
for synchronous \cite{vaidya2013byzantine} 
 and asynchronous \cite{mendes2013multidimensional,vaidya2013byzantine} systems both,
when the network is a complete graph. A necessary condition and a sufficient condition for {\em iterative} byzantine vector consensus were derived by Vaidya \cite{vaidya2014iterative}, however, there is a gap between these necessary and sufficient conditions.

A more generalized problem called Convex Hull Consensus problem was introduced by Tseng and Vaidya \cite{tseng2014asynchronous}. The tight bounds on number of processes $n$ is identical to the vector consensus case. Optimal fault resilient algorithms were proposed for asynchronous systems under crash faults \cite{tseng2014asynchronous} and Byzantine faults \cite{tseng2013byzantine}, respectively.

Herlihy et al. \cite{herlihy2014computing} study a new version of the approximate vector consensus problem, called \textsl{$(d, \epsilon)$-solo approximate agreement}, in the context of a $d$-solo
execution model that yields the message-passing model and the traditional shared memory model as special cases.
For $(d,\epsilon)$-solo approximate agreement, the inputs are $d$-dimensional vectors of reals,
and the outputs must be in the convex hull of all the inputs.
Up to $d$ processes may potentially choose as their ouputs any arbitrary
 points in the convex hull of all inputs
 (not necessarily approximately equal to each other),
while each remaining process must choose as its output a point within distance $\epsilon$ of the convex hull of the outputs of these $d$ processes (all outputs must be within the convex hull of the inputs). Although Herlihy et al. \cite{herlihy2014computing} only consider crash failures, the problem can be easily extended to the Byzantine fault model.
The relaxed consensus formulations considered in our work are different from
$(d,\epsilon)$-solo agreement.

\section{Notations and Terminology}

The network is assumed to be a complete graph, i.e., there is a reliable communication
channel from every process to each of the remaining processes.
The total number of processes is $n$, with up to $f$ processes suffering
Byzantine failures.
The input at each process is a $d$-dimensional vector, $d\geq 1$.
We will index the dimensions of a $d$-dimensional vector as as $1,2,\cdots,d$.
We will view the inputs as {\em column} vectors. Transpose of vector $u$ will be denoted as $u^T$.
We will also often view a {\em vector} as a {\em point} in an appropriate space.
The $i$-th element (or $i$-th coordinate) of vector $v$ will be denoted as $v[i]$.
We denote the set $\{1,2,\cdots,d\}$ by $[1,d]$.
For $u,v\in \mathbb{R}^d$, distance $\|u-v\|_p$ using $L_p$-norm is defined as 
$$\|u-v\|_p = \left(\sum_{i=1}^d |u[i]-v[i]|^p\right)^{1/p}$$.

A multiset may potentially contain repetitions of an element.
For instance, $\{1,1,3,5\}$ is a multiset in which value 1 is repeated.
Similarly, given $k$-dimensional vectors, $u,v,w$, $\{u,v,v,w,w,w\}$ is a multiset.
The standard set, in which each unique element appears at most once, is a special
case of a multiset.
Let $$\mathcal{H}(S)$$ denote the convex hull of a multiset $S$.
For a multiset $Y$, when we write $T\subseteq Y$, $T$ is a multiset
in which frequency of each element is no greater than its frequency in multiset $Y$.
Thus, $\{u,v,v,w,w\}\subseteq \{u,v,v,w,w,w\}$.
Size of the multiset $S$, denoted $|S|$, is the number of elements in $S$,
counting all repetitions. Thus, $|\{u,v,v,w,w,w\}|=6$. For a multiset $Y$ with
$|Y|\geq f$, define $\Gamma(Y)$
as follows.
$$\Gamma(Y)=\bigcap_{T\subseteq Y, |T|=|Y|-f}\mathcal{H}(T)$$.

\section{Previous Result}
\label{sec:previous}

The problem of \textsl{Byzantine vector consensus} (BVC) in complete graphs was
studied in \cite{mendes2013multidimensional,vaidya2013byzantine}. Here we briefly summarize the previous results.
The input of each process is assumed to be a $d$-dimensional vector of reals.

\textbf{Exact BVC:} Exact Byzantine vector consensus must satisfy the following three conditions \cite{vaidya2013byzantine}:
\begin{enumerate}
	\item\textsl{Agreement:} The decision (or output) vector at all the non-faulty processes must be identical.
	\item\textsl{Validity:} The decision vector at each non-faulty process must be in the convex hull of the input vectors at the non-faulty processes.
	\item\textsl{Termination:} Each non-faulty process must terminate after a finite amount of time.
\end{enumerate}

\textbf{Approximate BVC:} Approximate Byzantine vector consensus must satisfy the following three conditions \cite{mendes2013multidimensional,vaidya2013byzantine}:
\begin{enumerate}
	\item\textsl{$\epsilon-$Agreement:} The decision vectors at any two non-faulty processes must be within distance $\epsilon$ of each other, where $\epsilon>0$.

	\item\textsl{Validity:} The decision vector at each non-faulty process must be in the convex hull of the input vectors at the non-faulty processes.
	\item\textsl{Termination:} Each non-faulty process must terminate after a finite amount of time.
\end{enumerate}
For the distance in the $\epsilon$-agreement condition above, \cite{mendes2013multidimensional} uses the $L_2$ norm (or Euclidean distance) and \cite{vaidya2013byzantine} uses the $L_\infty$ norm (bounding the maximum difference in each vector coordinate
of the outputs). While the bound on $n$ is not affected by the choice of the norm in the prior work, the choice of the norm does affect some of our results (particularly, the results for $(\delta,p)$-consensus when $\delta$ depends on the inputs).

The following results have been obtained previously.

\begin{theorem}
\label{thm:exact}
	$n \geq \max(3f + 1, (d + 1)f + 1)$ is necessary and sufficient for Exact BVC in a synchronous system
 \cite{vaidya2013byzantine}.
\end{theorem}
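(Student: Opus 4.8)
The plan is to prove the two directions separately, since this is a tight-bound (necessary \emph{and} sufficient) statement. I will first argue the lower bound (necessity of $n \geq \max(3f+1, (d+1)f+1)$), and then sketch a matching algorithm for sufficiency. The $3f+1$ half is the classical Byzantine impossibility: even for scalar inputs ($d=1$), exact agreement in a synchronous system is impossible with $n \leq 3f$, and this carries over to vectors by restricting all inputs to a single coordinate. So the genuinely $d$-dependent obstruction is the $(d+1)f+1$ bound, and that is where the real work lies.

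\textbf{Necessity of $n \geq (d+1)f+1$.} The idea I would pursue is a reduction to the geometry of the validity condition via a partitioning/indistinguishability argument. Suppose for contradiction that $n \leq (d+1)f$. Partition the $n$ processes into $d+1$ groups, each of size at most $f$, so that any single group can be ``simulated away'' by the adversary. The point is that the validity condition forces the output to lie in $\Gamma(Y)$-type intersections: the output must be consistent with every scenario in which some group of up to $f$ processes is faulty. Geometrically, one places the fault-free inputs at the $d+1$ vertices of a $d$-simplex (one vertex per group). For each choice of which group is Byzantine, the convex hull of the \emph{remaining} fault-free inputs is a face of the simplex, and the output is constrained to lie in that face. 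The intersection of all $d+1$ facets of a $d$-simplex is empty, so no valid output can satisfy agreement and validity simultaneously — contradiction. Formally I would exhibit two executions that are indistinguishable to the non-faulty processes but demand outputs in disjoint regions.

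\textbf{Sufficiency.} For the upper bound I would invoke the standard machinery built around the $\Gamma(Y)$ operator defined in the Notations section. The key geometric lemma (a Helly-type / Tverberg-flavored statement) is that when $n \geq (d+1)f+1$, the set $\Gamma(Y)$ is nonempty for the relevant multiset $Y$ of received values, because any $d+1$ of the sub-hulls $\mathcal{H}(T)$ over $(|Y|-f)$-subsets have a common point, and Helly's theorem in $\mathbb{R}^d$ then forces the whole intersection to be nonempty. Non-faulty processes can each compute a point in a common such intersection; combined with the $n \geq 3f+1$ condition, a Byzantine-tolerant round of message exchange (the synchronous algorithm of \cite{vaidya2013byzantine}) lets them agree on an identical output lying in the convex hull of the fault-free inputs, giving Agreement, Validity, and Termination.

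\textbf{Main obstacle.} The hard part will be the necessity direction, and specifically making the simplex construction rigorous: I must ensure the two adversarial executions are truly indistinguishable to every non-faulty process (each non-faulty process sees the same multiset of messages in both) while the validity-forced output regions are provably disjoint. The emptiness of the intersection of all facets of a $d$-simplex is the geometric engine, but translating it into a clean indistinguishability argument — correctly handling the case analysis over which group is faulty and confirming that $n \leq (d+1)f$ is exactly the threshold at which the groups can be made small enough — is where the care is needed. Since this is a previously published result \cite{vaidya2013byzantine}, I expect the author to cite it rather than reprove it, and I would do the same while recording the proof strategy above for completeness.
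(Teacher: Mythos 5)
Your proposal is correct in substance, and importantly you identified the key contextual fact: this theorem is stated in the paper's ``Previous Result'' section with a citation to \cite{vaidya2013byzantine}, and the paper provides no proof of it at all, so citing rather than reproving is exactly what the paper does. Your recorded sketch is a faithful reconstruction of the known argument, with one genuine divergence worth noting: for sufficiency, the cited work (as this paper itself discusses in its Tverberg section) establishes non-emptiness of $\Gamma(Y)=\bigcap_{T\subseteq Y,|T|=|Y|-f}\mathcal{H}(T)$ via Tverberg's theorem, whereas you invoke Helly's theorem. Your route is valid and arguably more elementary: when $|Y|\geq (d+1)f+1$, any $d+1$ of the subsets $T$ omit at most $(d+1)f$ points and hence share an actual input point, so Helly yields $\Gamma(Y)\neq\emptyset$; Tverberg buys the stronger structural fact that a common point exists inside a partition-based intersection $\bigcap_{l}\mathcal{H}(Y_l)$, which this paper later exploits, but it is not needed merely for non-emptiness. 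On necessity, your simplex-plus-partition construction is the standard one, though your plan to exhibit two indistinguishable executions is more machinery than required: the cleaner device, used by this paper in its own analogous necessity proofs (Theorems \ref{thm:k:sync} and \ref{thm:delta:sync}), is to let the faulty process follow the protocol correctly, so that a \emph{single} execution is consistent with all $d+1$ fault hypotheses and the agreed output is forced into the intersection of all facets of the simplex, which is empty. With that simplification your outline closes the gap you flagged as the main obstacle.
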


\begin{theorem}
\label{thm:approximate}
	$n\geq (d + 2)f + 1$ is necessary and sufficient for Approximate BVC in an asynchronous system
 \cite{mendes2013multidimensional,vaidya2013byzantine}.
\end{theorem}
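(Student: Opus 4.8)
The plan is to prove the two directions separately: sufficiency by exhibiting an iterative asynchronous algorithm, and necessity by an indistinguishability argument that charges $f$ processes to asynchrony and $df$ to the dimension. For sufficiency I would make the operator $\Gamma$ defined above the engine of the algorithm. In each round every non-faulty process broadcasts its current estimate, waits until it has collected a multiset $Y$ of $n-f$ values (it cannot insist on more, since up to $f$ senders may be faulty and permanently silent in an asynchronous system), and then updates its estimate to some point of $\Gamma(Y)$. The first thing to establish is that $\Gamma(Y)$ is non-empty whenever $|Y|\geq (d+1)f+1$, which holds here because $|Y|=n-f\geq (d+1)f+1$ follows from $n\geq (d+2)f+1$.

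The non-emptiness of $\Gamma(Y)$ is a Helly-type fact that I would isolate as a lemma. The sets $\mathcal{H}(T)$ over all $T\subseteq Y$ with $|T|=|Y|-f$ are convex, so by Helly's theorem in $\mathbb{R}^d$ it suffices to show that every $d+1$ of them have a common point. Any $d+1$ such subsets omit at most $(d+1)f$ of the $|Y|\geq (d+1)f+1$ values of $Y$, so at least one value $y\in Y$ lies in all $d+1$ subsets; then $y$ is a common point of those $d+1$ convex hulls, and Helly's theorem yields a point common to all of them. Validity then comes for free: at most $f$ of the $n-f$ collected values originate from faulty processes, so the honest sub-multiset has size at least $n-2f=|Y|-f$ and hence contains some admissible $T$, giving $\Gamma(Y)\subseteq \mathcal{H}(T)\subseteq \mathcal{H}(\text{honest inputs})$, which is preserved round after round. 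Finally I would prove $\epsilon$-agreement (and thus termination) by a standard contraction estimate showing that each application of $\Gamma$ shrinks the diameter of the multiset of non-faulty estimates by a factor bounded away from $1$, so after finitely many rounds all non-faulty estimates lie within $\epsilon$.

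For necessity I would assume $n\leq (d+2)f$ and derive a contradiction. The scheme is to split the processes into $d+2$ groups of size about $f$ and let the adversary, using asynchronous scheduling, delay one group so that every non-faulty process effectively decides on only $n-f$ values while a chosen group of $f$ processes behaves Byzantine. Placing the $d+2$ groups in convex position in $\mathbb{R}^d$ and building executions that are pairwise indistinguishable to the non-faulty processes, one forces the set of outputs simultaneously admissible under validity in all such executions to be empty, contradicting the existence of a valid common decision. The combinatorial core mirrors the Helly threshold from the positive direction: with only $(d+2)f$ processes, the $n-2f=df$ surviving honest values can be arranged so that the relevant convex hulls have empty intersection.

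I expect the necessity direction to be the main obstacle. Making the indistinguishability argument rigorous in the asynchronous model requires scheduling messages so that non-faulty processes genuinely cannot separate the scenarios, while simultaneously arranging the $d$-dimensional inputs so that validity alone rules out any common output. This is precisely where the dimension $d$ enters, and here a topological formulation (via connectivity of the protocol complex and a Sperner-type obstruction), as in the original multidimensional-consensus proofs, may be cleaner than a direct geometric separation argument.
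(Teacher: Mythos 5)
First, for context: the paper does not prove this statement at all --- Theorem~\ref{thm:approximate} is quoted as a prior result of Mendes--Herlihy and Vaidya--Garg, so your attempt has to be measured against those cited works (and against Appendices~\ref{app:b} and~\ref{app:c}, where the paper adapts their necessity construction to the relaxed problems). Your sufficiency half has the right skeleton --- the safe-area operator $\Gamma$, the Helly-based non-emptiness of $\Gamma(Y)$ when $|Y|\geq (d+1)f+1$, and the validity induction are all sound and are exactly what the cited algorithms compute --- but the step you dismiss as a ``standard contraction estimate'' is precisely where the naive algorithm breaks, and this is a genuine gap. If each process merely waits for $n-f$ values and moves into $\Gamma$ of whatever it received, Byzantine processes can equivocate (send different values to different receivers) while the scheduler controls which $n-f$ values each process sees. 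Already for $d=1$, $f=1$, $n=4$, your update rule is the median of three received values, and the adversary can hold non-faulty estimates at $0,1,0$ forever: the process holding $1$ is always shown a Byzantine value of $1$ and never hears the second $0$, and symmetrically for the others, so the diameter never shrinks. This is exactly why optimal-resilience asynchronous approximate agreement requires Bracha-style reliable broadcast (to eliminate equivocation) combined with the witness technique of Abraham, Amit and Dolev, which guarantees that any two non-faulty processes' multisets share at least $n-f$ identical values; both cited works build their contraction argument on that overlap, and the paper's own Relaxed Verified Averaging Algorithm inherits the same machinery from \cite{tseng2013byzantine}. Without it, $\epsilon$-agreement simply does not follow.

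Your necessity sketch also aims at the wrong target: for \emph{approximate} consensus one cannot derive a contradiction from ``no valid common decision exists,'' because no common decision is required --- each process may output a different point. The argument (as in Vaidya--Garg, and as reproduced in Appendix~\ref{app:b} for the $k$-relaxed variant) must instead exhibit an explicit input matrix and indistinguishable executions under which the admissible output sets $\Psi_1(S)$ and $\Psi_2(S)$ of two particular non-faulty processes are forced to lie at distance greater than $\epsilon$ from each other, violating $\epsilon$-agreement rather than validity. Your plan of charging $f$ processes to slowness and placing $d+2$ groups in convex position has the right flavor, but the separation statement (not emptiness) and the concrete geometry realizing it are the actual content of the lower bound, and they are missing from the proposal.
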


\section{Relaxed Byzantine Vector Consensus}

This section
formally defines $k$-relaxed consensus and $(\delta,p)$-relaxed consensus,
where $\delta$ is either a constant or depends on the inputs.


\subsection{k-Relaxed Consensus}

To be able to define $k$-relaxed consensus, we first need to present some
other definitions.
Given a size $k$ subset $D$ of $[1,d]$, we define a projection function $g_D$ below.
For any $d$-dimensional vector $u$, $g_D$ yields a $k$-dimensional vector $v$
retaining only those elements of $u$ whose indices are included in $D$.

\begin{definition}
	Let $D = \{d_1, d_2,\cdots, d_k\}$ where $1\leq d_i < d_j\leq d$ for $1\leq i<j\leq k$.
	For $u \in \mathbb{R}^d$ define projection $g_D(u) = v$ where $v\in \mathbb{R}^k$
	and $v[i]=u[d_i]$.
\end{definition}

We will also refer to $g_D$ as the $D$-projection.
For example, suppose that $d=4$, $D=\{1,3\}$ and $u=(7,-4,-2,0)^T$.  Then $g_D(u)=(7,-2)^T$.

\begin{definition}
Define $\mathcal{D}_k$ to be the set of all size $k$ subsets of $[1,d]$. That is, $$\mathcal{D}_k=\{ D~|~D \subseteq [1,d],~|D|=k\}.$$
\end{definition}

While $g_D$ is not a one-to-one function, with an abuse of terminology, we will define
its inverse.
Inverse of $g_D$, namely $g_D^{-1}$, maps each $k$-dimensional
vector $v$ to the set of $d$-dimensional vectors whose $D$-projection is $v$.

\begin{definition}
	For $D\in\mathcal{D}_k$ and
	$v \in \mathbb{R}^k$, define $g_D^{-1}(v) = U$ where $U\subset \mathbb{R}^d$,
	such that $u\in U$ if and only if $g_D(u)=v$.
\end{definition}
For example, suppose that $d=4$, $D=\{1,3\}$ and $v^T=(7,-2)$.
Then $g_D^{-1}(v)=\{ (7,a,-2,b)~|~ a,b \in \mathbb{R}\}$.
We will use the shorthand $g_D^{-1}(v) = (7,*,-2,*)^T$ to denote the above set,
which consists of all 4-dimensional column vectors whose first element is 7 and
third element is $-2$.

We now define function $g_D$ with a multiset of points $S$ in the $d$-dimensional space
as its argument. $g_D(S)$ is also a multiset. With some liberty with terminology, we will define
$g_D(S)$ using $g_D(u)$ defined previously with a single point as the argument.
\begin{definition}
	For $D\in\mathcal{D}_k$ and
	multiset $S$ consisting of points in  $\mathbb{R}^d$, define $g_D(S) = \{g_D(u)~|~u\in S\}$.
	$g_D(S)$ is a multiset.
\end{definition}
The inverse function $g_D^{-1}$ is similarly extended to multisets of points
in the $k$-dimensional space.

\begin{definition}
For $D\in\mathcal{D}_k$ and
	multiset $S$ consisting of points in $\mathbb{R}^k$, define $$g_D^{-1}(S) = \bigcup_{v\in S}~ g_D^{-1}(v).$$
\end{definition}

\begin{definition}
	k-relaxed convex hull $H_k$ of $S \subset \mathbb{R}^d$ is defined as
	$$
	H_k(S)= \{u~|~g_D(u)\in \mathcal{H}(g_D(S)), \forall D \in \mathcal{D}_k  \}
	$$
	Equivalently,
	$$H_{k}(S)=\bigcap_{D\in \mathcal{D}_k} ~g_D^{-1}(\mathcal{H}(g_D(S)))$$
\end{definition}

Now we can formally define $k$-relaxed consensus.

\begin{definition}[$k$-Relaxed Exact BVC]
	k-Relaxed exact Byzantine vector consensus must satisfy the following three conditions.
	\begin{enumerate}
		\item\textsl{Agreement:} The decision (or output) vector at all the non-faulty processes must be identical.
		\item\textsl{k-Relaxed Validity:} The decision vector at each non-faulty process must be in the \textsl{k-relaxed convex hull} of the set of input vectors at the non-faulty processes.
		\item\textsl{Termination:} Each non-faulty process must terminate after a finite amount of time.
	\end{enumerate}
\end{definition}

\begin{definition}[$k$-Relaxed Approximate BVC]
	k-Relaxed approximate Byzantine vector consensus must satisfy the following three conditions.
	\begin{enumerate}
		\item\textsl{$\epsilon-$Agreement:} For $1\leq l\leq d$, the $l^{th}$ elements of the decision vectors at any two non-faulty processes must be within $\epsilon$ of each other, where $\epsilon>0$ is a pre-defined constant.\footnote{Effectively,
this definition uses $L_\infty$-norm to define distance between output vectors,
and bounds this distance by $\epsilon$. However, the bounds on $n$ hold
for any $L_p$-norm, $p\geq 1$, due to norm equivalence (as discussed later).
}

		\item\textsl{k-Relaxed Validity:} The decision vector at each non-faulty process must be in the \textsl{k-relaxed convex hull} of the set of input vectors at the non-faulty processes.
		\item\textsl{Termination:} Each non-faulty process must terminate after a finite amount of time.
	\end{enumerate}
\end{definition}

\subsection{$\boldsymbol{(\delta,p)}$-Relaxed BVC}

We define $(\delta,p)$-relaxed consensus using another relaxed notion of a convex hull.

\begin{definition}
\label{def:Hdelta}
	For $\delta\geq 0$ and $p\geq 1$, $(\delta,p)$-relaxed convex hull $H_{(\delta,p)}$ of $S\subseteq \mathbb{R}^d$ is
	$$
	H_{(\delta,p)}(S)= \{u~|~\|u-v\|_p\leq \delta,~v\in \mathcal{H}(S)\}
	$$
\end{definition}

Now we define exact and approximate $(\delta,p)$-relaxed consensus. These definitions are
independent of
whether $\delta$ is a constant, or depends on the inputs at non-faulty processes.

\begin{definition}[${(\delta,p)}$-Relaxed Exact BVC, $\delta\geq 0$]
	$(\delta,p)$-Relaxed exact Byzantine vector consensus must satisfy the following three conditions.
	\begin{enumerate}
		\item\textsl{Agreement:} The decision (or output) vector at all the non-faulty processes must be identical.
		\item\textsl{$(\delta,p)$-Relaxed Validity:} The decision vector at each non-faulty process must be in the \textsl{$(\delta,p)$-relaxed convex hull} of the input vectors at the non-faulty processes.
		\item\textsl{Termination:} Each non-faulty process must terminate after a finite amount of time.
	\end{enumerate}
\end{definition}

\begin{definition}[${(\delta,p)}$-Relaxed Approximate BVC, $\delta\geq 0$]
	$(\delta,p)$-Relaxed approximate Byzantine vector consensus must satisfy the following three conditions.
	\begin{enumerate}
		\item\textsl{$\epsilon-$Agreement:} For $1\leq l\leq d$, the $l^{th}$ elements of the decision vectors at any two non-faulty processes must be within $\epsilon$ of each other, where $\epsilon>0$ is a pre-defined constant.\footnote{
This definition uses $L_\infty$-norm to define distance between output vectors,
and bounds this distance by $\epsilon$. However, the bounds on $n$ hold
for any $L_q$-norm, $ q\geq 1$, due to norm equivalence (as discussed later).
On the other hand, the choice of $p$ in defining
the $(\delta,p)$-relaxed convex hull does affect some of the results, as also discussed
later.}



		\item\textsl{$(\delta,p)$-Relaxed Validity:} The decision vector at each non-faulty process must be in the \textsl{$(\delta,p)$-relaxed convex hull} of the input vectors at the non-faulty processes.
		\item\textsl{Termination:} Each non-faulty process must terminate after a finite amount of time.
	\end{enumerate}
\end{definition}

\subsection{Discussion}
\textsl{k-Relaxed BVC} and \textsl{$(\delta,p)$-Relaxed BVC} are both relaxed version of \textsl{BVC}. As we can see, both relaxed convex hulls of a set of points $S$ contain the convex hull  $\mathcal{H}(S)$ of $S$; therefore, solutions of the original BVC problem discussed
in Section \ref{sec:previous} are also solutions to \textsl{k-Relaxed BVC} and \textsl{$(\delta,p)$-Relaxed BVC}.

For $k$-relaxed consensus, notice that when $k=d$, the problem becomes the
same as the original BVC problem. So the necessary and sufficient conditions to the problem are known when $k=d$: $n\geq \max(3f+1,(d+1)f+1)$ is necessary and sufficient for \textsl{d-Relaxed Exact BVC}, and $n\geq (d+2)f+1$ is necessary and sufficient for \textsl{d-Relaxed Approximate BVC}.
When $k=1$, the $k$-relaxed consensus (i.e., 1-relaxed BVC) 
can be achieved using Byzantine scalar consensus as follows.
Each process chooses the $i$-th coordinate of the output vector
as the output of the scalar Byzantine consensus algorithm for
which the input of each process is the $i$-th coordinate of its input vector.
It is easy to verify that this solves 1-relaxed consensus.
Therefore,
the bounds for $k=1$ follow from previous results: $n\geq 3f+1$
is necessary and sufficient for \textsl{1-Relaxed Exact BVC} and \textsl{1-Relaxed Approximate BVC} both.

When $\delta=0$,
the \textsl{$(\delta,p)$-relaxed BVC} becomes identical to the original BVC problem
in Section \ref{sec:previous}. So the necessary and sufficient conditions for solving the problem are known: $n\geq \max(3f+1,(d+1)f+1)$ for \textsl{$(0,p)$-Relaxed Exact BVC} and $n\geq (d+2)f+1$ for \textsl{$(0,p)$-Relaxed Approximate BVC}. When $\delta=\infty$, the
validity condition is vacuous, allowing the processes to choose any fixed vector
in $\mathbb{R}^d$ as the output (e.g., the processes may always choose the all-0 vector
as their output and still satisfy the validity condition with $\delta=\infty$).

As we will see soon, our results are somewhat
disappointing: the tight bound on $n$ for both \textsl{k-Relaxed BVC} where $2\leq k \leq d-1$ and \textsl{$(\delta, p)$-Relaxed BVC}, where
$\delta$ is constant, $0<\delta<\infty$ and $p\geq 1$, is not lower than the original
formulations of the exact and approximate BVC problems, respectively.
However, when $\delta$ may depend on the inputs, we obtain a weaker requirement on
the number of processes.

\subsection{Useful Lemmas}

The following lemmas can be proved easily. Some of the proofs are omitted here for brevity.

\begin{lemma}
\label{lemma:Hk}
	For $S\subset \mathbb{R}^d$, the following
containment order holds for the k-relaxed convex hulls:
	$$
	H_{i}(S)\subseteq H_{j}(S), ~~ d \geq i\geq j\geq 1
	$$
\end{lemma}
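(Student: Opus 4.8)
The plan is to show that satisfying all size-$i$ projection constraints is \emph{stronger} than satisfying all size-$j$ constraints, by reducing each size-$j$ constraint to a size-$i$ constraint that is already assumed to hold. It suffices to prove the single-step containment $H_{i}(S)\subseteq H_{j}(S)$ for arbitrary $d\geq i\geq j\geq 1$ directly, so I would fix an arbitrary point $u\in H_i(S)$ and show $u\in H_j(S)$; unwinding the definition of $H_j$, this amounts to verifying that $g_{D'}(u)\in\mathcal{H}(g_{D'}(S))$ for every $D'\in\mathcal{D}_j$.

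The key structural observation is that, because $j\leq i\leq d$, any $D'\in\mathcal{D}_j$ can be extended to some $D\in\mathcal{D}_i$ with $D'\subseteq D$ (simply add $i-j$ further indices from $[1,d]$). The $D'$-projection then factors through the $D$-projection: writing $D=\{d_1,\dots,d_i\}$ in increasing order, the coordinates of $D'$ occupy some positions within $D$, so there is a coordinate-selecting linear map $\pi:\mathbb{R}^i\to\mathbb{R}^j$ such that $g_{D'}=\pi\circ g_D$ as maps on $\mathbb{R}^d$. This factorization is the heart of the argument.

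From $u\in H_i(S)$ and $D\in\mathcal{D}_i$ I get $g_D(u)\in\mathcal{H}(g_D(S))$. Applying $\pi$ and using that a linear map commutes with the convex-hull operation, namely $\pi(\mathcal{H}(A))=\mathcal{H}(\pi(A))$ for any finite multiset $A$, I obtain $g_{D'}(u)=\pi(g_D(u))\in\pi(\mathcal{H}(g_D(S)))=\mathcal{H}(\pi(g_D(S)))=\mathcal{H}(g_{D'}(S))$, where the last equality uses $\pi(g_D(S))=g_{D'}(S)$. Since $D'\in\mathcal{D}_j$ was arbitrary, this shows $u\in H_j(S)$, completing the single-step containment; the full chain $H_i(S)\subseteq H_j(S)$ for all $i\geq j$ then follows by transitivity (or directly, since the argument already handles arbitrary $i\geq j$).

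The only nontrivial ingredient is the fact that linear (indeed coordinate-projection) maps commute with taking convex hulls, which I would either cite as standard or justify in one line from the definition of convex combinations: $\pi$ applied to a convex combination $\sum_m \lambda_m x_m$ equals the convex combination $\sum_m \lambda_m \pi(x_m)$ of the images, giving both inclusions $\pi(\mathcal{H}(A))\subseteq\mathcal{H}(\pi(A))$ and $\mathcal{H}(\pi(A))\subseteq\pi(\mathcal{H}(A))$. I do not anticipate a serious obstacle here; the one point requiring care is that $\pi$ must be the correct coordinate projection matching $D'$ inside $D$, which is guaranteed by choosing $D\supseteq D'$ so that the indices of $D'$ are literally a subset of those of $D$.
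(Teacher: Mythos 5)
Your proof is correct and takes essentially the same route as the paper's: both arguments extend each size-$j$ index set $D'$ to a size-$i$ superset $D\in\mathcal{D}_i$ and observe that the $D$-constraint implies the $D'$-constraint, which is exactly the paper's claim that $g_D^{-1}(\mathcal{H}(g_D(S)))\subseteq g_{D'}^{-1}(\mathcal{H}(g_{D'}(S)))$ for $D'\subseteq D$. The only difference is that the paper asserts this key containment without justification, whereas you actually prove it via the factorization $g_{D'}=\pi\circ g_D$ and the commutation of coordinate projections with convex hulls.
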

\begin{proof}
	By the definition of k-relaxed convex hull, we have
 $$H_i(S)=\bigcap_{I\in \mathcal{D}_i}g_I^{-1}(\mathcal{H}(g_I(S)))$$
and
 $$H_j(S)=\bigcap_{J\in \mathcal{D}_j}g_J^{-1}(\mathcal{H}(g_J(S))).$$
For all $I\in \mathcal{D}_i$ and $J\in \mathcal{D}_j$ such that
$J\subseteq I$,
it is true that $g_I^{-1}(\mathcal{H}(g_I(S)))\subseteq g_J^{-1}(\mathcal{H}(g_J(S)))$.
Then the lemma follows from the above expressions for $H_i(S)$ and $H_j(S)$.
\end{proof}

\begin{lemma}
\label{lemma:k:1}
A necessary condition for k-Relaxed Exact BVC, $1\leq k<d$, is also necessary for (k+1)-Relaxed Exact BVC.
\end{lemma}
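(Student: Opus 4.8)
The plan is to prove the lemma by a solvability reduction followed by a contrapositive argument. First I would unpack the meaning of "necessary condition" in this setting: a condition $C$ (on the parameters $n$, $f$, $d$) is necessary for $k$-Relaxed Exact BVC precisely when solvability of $k$-Relaxed Exact BVC implies $C$; equivalently, $\neg C$ forces $k$-Relaxed Exact BVC to be unsolvable. Thus, to show that \emph{every} such $C$ is also necessary for $(k+1)$-Relaxed Exact BVC, it suffices to establish one solvability implication: whenever $(k+1)$-Relaxed Exact BVC is solvable for given $n$, $f$, $d$, so is $k$-Relaxed Exact BVC. Taking the contrapositive then yields $\neg C \Rightarrow k\text{-unsolvable} \Rightarrow (k+1)\text{-unsolvable}$, which is exactly the claim. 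So the whole lemma reduces to verifying this one implication.

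For the solvability implication, the key ingredient is Lemma \ref{lemma:Hk}. The hypothesis $1 \leq k < d$ guarantees $k+1 \leq d$, so $H_{k+1}$ is defined and the lemma applies with $i = k+1$, $j = k$, giving the containment $H_{k+1}(S) \subseteq H_k(S)$ for every $S \subset \mathbb{R}^d$. This says that the $(k+1)$-relaxed validity region is contained in the $k$-relaxed validity region, i.e.\ $(k+1)$-relaxed validity is the more restrictive condition. I would then argue the reduction directly: let $\mathcal{A}$ be any algorithm solving $(k+1)$-Relaxed Exact BVC for the given parameters, and run $\mathcal{A}$ unchanged on a $k$-relaxed instance. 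The Agreement and Termination conditions are identical in the two problem statements, so they are met; and each non-faulty output lies in $H_{k+1}(S) \subseteq H_k(S)$, where $S$ is the multiset of inputs at the non-faulty processes, so the output also satisfies $k$-relaxed validity. Hence $\mathcal{A}$ solves $k$-Relaxed Exact BVC.

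There is no real technical obstacle here; the entire argument rests on Lemma \ref{lemma:Hk}. The only point that needs care is the conceptual one of correctly translating "necessary condition" into a solvability reduction and orienting the containment in Lemma \ref{lemma:Hk} in the right direction, so that the more refined problem (larger $k$) carries the more restrictive validity constraint. Once these are pinned down, the reduction is immediate and the contrapositive completes the proof.
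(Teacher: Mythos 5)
Your proof is correct and follows essentially the same route as the paper: both arguments rest on the containment $H_{k+1}(S)\subseteq H_k(S)$ from Lemma \ref{lemma:Hk}, which makes $(k+1)$-relaxed validity the more restrictive condition, so that solvability of $(k+1)$-Relaxed Exact BVC implies solvability of $k$-Relaxed Exact BVC and the necessity claim follows by contraposition. Your write-up merely spells out the reduction and the meaning of ``necessary condition'' more explicitly than the paper's terse version.
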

\begin{proof}
Suppose that $S$ is the set of inputs at non-faulty processes.
By Lemma \ref{lemma:Hk}, for $S\subset \mathbb{R}^d$, $H_{k+1}(S)\subseteq H_k(S)$.
Thus, for a given set of inputs at non-faulty processes,
if $k$-consesus is not achieved, then $(k+1)$-consensus is also not achieved.
The lemma then follows.
\end{proof}
The next three lemmas follow using similar arguments as the above proof. We omit their proofs for brevity.

\begin{lemma}
\label{lemma:k:2}
A sufficient condition for (k+1)-Relaxed Exact BVC, $1\leq k<d$, is also sufficient for k-Relaxed Exact BVC.
\end{lemma}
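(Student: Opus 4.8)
The plan is to mirror the proof of Lemma~\ref{lemma:k:1}, again leveraging the containment $H_{k+1}(S)\subseteq H_k(S)$ supplied by Lemma~\ref{lemma:Hk}, but now reading that containment in the direction of sufficiency rather than necessity. The single idea I need is that an algorithm meeting the \emph{stronger} validity requirement automatically meets the \emph{weaker} one, so a solution to the $(k+1)$-problem is, verbatim, a solution to the $k$-problem.

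First I would fix an arbitrary algorithm $A$ that solves $(k+1)$-Relaxed Exact BVC whenever the given condition holds, and let $S$ denote the multiset of inputs at the non-faulty processes in an execution. By the $(k+1)$-relaxed validity guaranteed by $A$, every non-faulty output lies in $H_{k+1}(S)$. Applying Lemma~\ref{lemma:Hk} with $i=k+1$ and $j=k$ gives $H_{k+1}(S)\subseteq H_k(S)$, so each such output also lies in $H_k(S)$; that is, $A$ already satisfies the $k$-relaxed validity condition. The agreement and termination conditions are stated identically in the two problems, so $A$ continues to satisfy them without modification.

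Consequently the very same algorithm $A$ solves $k$-Relaxed Exact BVC under exactly the hypothesis assumed. Since the condition in question was taken to be sufficient for $(k+1)$-consensus --- i.e.\ it guarantees the existence of such an $A$ --- the same condition guarantees a solver for $k$-consensus and is therefore sufficient there as well, which is what the lemma asserts.

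I do not anticipate a genuine obstacle: the whole argument rests on the lone set containment from Lemma~\ref{lemma:Hk}, and the only thing that requires care is the logical direction. Necessity propagates upward from $k$ to $k+1$ (Lemma~\ref{lemma:k:1}) precisely because every $(k+1)$-solution is a $k$-solution; sufficiency propagates back downward from $k+1$ to $k$ for the identical reason. It is worth stating explicitly that ``sufficient'' means the promised algorithm exists, and that it is this \emph{fixed} algorithm, rather than a newly constructed one, that discharges the smaller index $k$.
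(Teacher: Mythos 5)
Your proof is correct and follows exactly the route the paper intends: the paper omits this proof, stating it follows by "similar arguments" to Lemma~\ref{lemma:k:1}, namely the containment $H_{k+1}(S)\subseteq H_k(S)$ from Lemma~\ref{lemma:Hk}, which is precisely what you use to show that any algorithm solving the $(k+1)$-problem already solves the $k$-problem. Your explicit care with the logical direction (sufficiency propagating downward because the same fixed algorithm works) is a faithful spelling-out of the paper's omitted argument.
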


\begin{lemma}
\label{lemma:k:3}
	A necessary condition for k-Relaxed Approximate BVC, $1\leq k<d$, is also necessary for (k+1)-Relaxed Approximate BVC.
\end{lemma}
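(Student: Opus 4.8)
The plan is to mirror the proof of Lemma~\ref{lemma:k:1}, since that argument is entirely insensitive to whether the agreement condition is exact or approximate. The engine of both proofs is the containment $H_{k+1}(S)\subseteq H_k(S)$ supplied by Lemma~\ref{lemma:Hk}, valid for any $S\subset\mathbb{R}^d$ and any $1\leq k<d$. I would first observe that the three conditions defining $(k+1)$-Relaxed Approximate BVC and $k$-Relaxed Approximate BVC differ only in the validity requirement: the $\epsilon$-Agreement and Termination conditions are literally identical across the two formulations, while validity asks the output to lie in $H_{k+1}(S)$ in the former case and in $H_k(S)$ in the latter.

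Next, using the containment, I would note that the $(k+1)$-validity condition is stronger than the $k$-validity condition, so any execution in which every non-faulty output lies in $H_{k+1}(S)$ automatically has every non-faulty output lying in $H_k(S)$. Consequently, any algorithm that solves $(k+1)$-Relaxed Approximate BVC also solves $k$-Relaxed Approximate BVC; equivalently, for a fixed set $S$ of inputs at the non-faulty processes, if $k$-relaxed approximate consensus cannot be achieved, then neither can $(k+1)$-relaxed approximate consensus.

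Finally, I would phrase this in the language of necessary conditions. Let $C$ be any condition that is necessary for $k$-Relaxed Approximate BVC, so that failure of $C$ rules out achieving $k$-consensus. By the previous step, failure of $C$ then also rules out achieving $(k+1)$-consensus, which is precisely the statement that $C$ is necessary for $(k+1)$-Relaxed Approximate BVC. I do not anticipate a genuine obstacle here; the only point that warrants care is confirming that the $\epsilon$-Agreement condition, being a constraint purely on the output vectors rather than on the convex-hull structure, is unchanged between the two problems, so that the reduction from $(k+1)$-consensus to $k$-consensus carries over verbatim from the exact case.
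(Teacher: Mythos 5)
Your proposal is correct and follows exactly the route the paper intends: the paper omits this proof but states it follows by the same argument as Lemma~\ref{lemma:k:1}, namely the containment $H_{k+1}(S)\subseteq H_k(S)$ from Lemma~\ref{lemma:Hk}, which makes $(k+1)$-relaxed validity stronger than $k$-relaxed validity while the $\epsilon$-Agreement and Termination conditions are unchanged. Your additional remark that the agreement condition constrains only the outputs and not the convex-hull structure is precisely the observation that justifies carrying the exact-case argument over verbatim.
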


\begin{lemma}
\label{lemma:k:4}
	A sufficient condition for (k+1)-Relaxed Approximate BVC, $1\leq k<d$, is also sufficient for k-Relaxed Approximate BVC.
\end{lemma}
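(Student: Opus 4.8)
The plan is to mirror the argument used for Lemma~\ref{lemma:k:2} (its exact-BVC counterpart), relying on the containment $H_{k+1}(S) \subseteq H_k(S)$ established in Lemma~\ref{lemma:Hk}. Let $S$ denote the set of inputs at the non-faulty processes, and suppose that some condition (for instance, a bound on $n$ in terms of $f$ and $d$) is sufficient for $(k+1)$-Relaxed Approximate BVC. By the meaning of sufficiency, whenever this condition holds there is an algorithm $A$ whose executions, for every admissible input assignment and every behavior of the faulty processes, satisfy the three requirements of $(k+1)$-Relaxed Approximate BVC: $\epsilon$-agreement, $(k+1)$-relaxed validity, and termination.

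The key observation is that, among these three requirements, only the validity condition depends on the parameter $k$; the $\epsilon$-agreement and termination conditions are stated identically for every $k$. I would therefore argue that the same algorithm $A$, run on the same inputs, already meets all three requirements of $k$-Relaxed Approximate BVC. Agreement and termination carry over verbatim. For validity, each non-faulty process produces an output lying in $H_{k+1}(S)$ by hypothesis; since $H_{k+1}(S) \subseteq H_k(S)$ by Lemma~\ref{lemma:Hk}, that same output also lies in $H_k(S)$, which is precisely $k$-relaxed validity. This step is applied separately to each process's output, so it does not rely on the outputs being identical, and hence the relaxation of agreement to $\epsilon$-agreement causes no difficulty.

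Consequently, whenever the assumed condition holds, algorithm $A$ solves $k$-Relaxed Approximate BVC, so the condition is sufficient for $k$-Relaxed Approximate BVC as well. I do not expect any genuine obstacle here: the only feature distinguishing the approximate setting from the exact setting treated in Lemma~\ref{lemma:k:2} is the additional $\epsilon$-agreement requirement, and since that requirement is independent of $k$ it is preserved automatically. The entire content of the proof reduces to the one-line containment from Lemma~\ref{lemma:Hk}, which is exactly why the authors are justified in omitting it for brevity.
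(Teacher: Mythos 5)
Your proof is correct and matches the argument the paper intends: the paper omits this proof as following "similar arguments" to Lemma~\ref{lemma:k:1}, namely the containment $H_{k+1}(S)\subseteq H_k(S)$ from Lemma~\ref{lemma:Hk}, which is exactly the observation you use. Your additional remark that $\epsilon$-agreement and termination are independent of $k$ and therefore carry over verbatim is precisely the (routine) detail the authors suppress for brevity.
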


Now we show similar relationships for $(\delta,p)$-relaxed consensus.

\begin{lemma}
\label{lemma:delta:1}
	A necessary condition for $(\delta,p)$-Relaxed Exact BVC is also necessary for $(\delta',p)$-Relaxed Exact BVC, where $\delta\geq \delta'\geq 0$.
\end{lemma}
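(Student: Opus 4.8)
The plan is to mirror the structure of the proof of Lemma~\ref{lemma:k:1}, replacing the containment $H_{k+1}(S)\subseteq H_k(S)$ from Lemma~\ref{lemma:Hk} with an analogous monotonicity statement for the $(\delta,p)$-relaxed convex hull. First I would establish the key geometric fact that the relaxed convex hull shrinks as $\delta$ decreases, namely that for any fixed $S\subseteq\mathbb{R}^d$ and any $p\geq 1$,
\[
H_{(\delta',p)}(S)\subseteq H_{(\delta,p)}(S) \qquad \text{whenever } \delta\geq\delta'\geq 0.
\]
This follows immediately from Definition~\ref{def:Hdelta}: if $u\in H_{(\delta',p)}(S)$ then there is some $v\in\mathcal{H}(S)$ with $\|u-v\|_p\leq\delta'\leq\delta$, so the same witness $v$ shows $u\in H_{(\delta,p)}(S)$.

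Next I would argue by contrapositive exactly as in Lemma~\ref{lemma:k:1}. Fix an arbitrary set $S$ of inputs at the non-faulty processes. Suppose the necessary condition for $(\delta,p)$-Relaxed Exact BVC fails to hold; I want to conclude that $(\delta',p)$-consensus is unsolvable for this input configuration as well. Since the smaller-$\delta'$ validity requirement demands the output lie in $H_{(\delta',p)}(S)$, and by the containment above this set is contained in $H_{(\delta,p)}(S)$, any solution to $(\delta',p)$-consensus would in particular produce an output in $H_{(\delta,p)}(S)$, hence would also solve $(\delta,p)$-consensus for the same inputs. Therefore a failure of the $(\delta,p)$ necessary condition forces a failure of $(\delta',p)$-consensus, so the $(\delta,p)$ necessary condition is also necessary for $(\delta',p)$-consensus.

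I expect the only point requiring care to be the direction of the containment, since the larger $\delta$ gives the larger relaxed hull and hence the weaker validity condition; the lemma asserts that a necessary condition for the \emph{weaker} problem ($\delta$ large) is inherited by the \emph{stronger} problem ($\delta'$ small), which is the intuitively correct direction and matches the $k$-relaxed case where the stronger validity corresponds to the smaller hull. The main conceptual step is simply recognizing this correspondence; the remaining reasoning is the same routine contrapositive argument used for Lemma~\ref{lemma:k:1}, and no further calculation is needed.
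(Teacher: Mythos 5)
Your proof is correct and takes essentially the same route as the paper: establish $H_{(\delta',p)}(S)\subseteq H_{(\delta,p)}(S)$ directly from Definition~\ref{def:Hdelta}, then transfer impossibility from the weaker problem ($\delta$) to the stronger one ($\delta'$), exactly mirroring Lemma~\ref{lemma:k:1}. Incidentally, your write-up states the transfer in the correct direction, whereas the paper's own one-line proof says ``if exact $(\delta',p)$-consensus is not achieved, then exact $(\delta,p)$-consensus is also not achieved,'' which has $\delta$ and $\delta'$ transposed (given the containment, the valid implication is the reverse); your version avoids that slip.
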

\begin{proof}
Suppose that $S$ is the set of inputs at non-faulty processes.
By Definition \ref{def:Hdelta}, $H_{(\delta',p)}(S)\subseteq H_{(\delta,p)}(S)$. Thus, for a given set of inputs
at non-faulty processes, if exact $(\delta',p)$-consensus is not achieved,
then exact $(\delta,p)$-consensus is also not achieved. The lemma then follows.
\end{proof}

The next three lemmas below can be proved similarly. Their proofs
are omitted for brevity.

\begin{lemma}
\label{lemma:delta:2}
	A sufficient condition for $(\delta',p)$-Relaxed Exact BVC is also sufficient for $(\delta,p)$-Relaxed Exact BVC, where $\delta\geq \delta'\geq 0$.
\end{lemma}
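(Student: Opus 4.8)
The plan is to mirror the structure of the proof of Lemma~\ref{lemma:delta:1}, but reverse the direction of the implication, using the same containment of the relaxed convex hulls established there. The statement to prove is Lemma~\ref{lemma:delta:2}: a sufficient condition for $(\delta',p)$-Relaxed Exact BVC is also sufficient for $(\delta,p)$-Relaxed Exact BVC whenever $\delta\geq\delta'\geq 0$. The key observation is the monotonicity of the relaxed convex hull in the parameter $\delta$, which follows immediately from Definition~\ref{def:Hdelta}: for any fixed input set $S$ and fixed $p$, enlarging $\delta$ only enlarges the set of admissible output points, so $H_{(\delta',p)}(S)\subseteq H_{(\delta,p)}(S)$.

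First I would fix an arbitrary condition $C$ on $n$ (and possibly on $f$, $d$) that is sufficient for $(\delta',p)$-Relaxed Exact BVC, and assume a system satisfies $C$. By definition of sufficiency, there exists an algorithm $A$ that solves $(\delta',p)$-Relaxed Exact BVC under $C$: for every execution and every set $S$ of inputs at non-faulty processes, $A$ guarantees Agreement, Termination, and the $(\delta',p)$-Relaxed Validity condition, namely that every non-faulty output lies in $H_{(\delta',p)}(S)$. The claim is that the very same algorithm $A$ also solves $(\delta,p)$-Relaxed Exact BVC.

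Next I would verify the three conditions for $A$ under the larger $\delta$. Agreement and Termination are unaffected by the choice of $\delta$, since those conditions do not reference $\delta$ at all; they hold verbatim from the fact that $A$ already guarantees them. For $(\delta,p)$-Relaxed Validity, I would invoke the containment $H_{(\delta',p)}(S)\subseteq H_{(\delta,p)}(S)$: since each non-faulty output produced by $A$ lies in $H_{(\delta',p)}(S)$, it lies a fortiori in $H_{(\delta,p)}(S)$, which is exactly the $(\delta,p)$-Relaxed Validity requirement. Hence $A$ solves $(\delta,p)$-Relaxed Exact BVC under $C$, so $C$ is sufficient for the latter problem, proving the lemma.

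There is essentially no hard step here; the argument is a short ``monotonicity of validity in $\delta$'' observation, which is precisely why the paper groups it with Lemmas~\ref{lemma:delta:1} and \ref{lemma:delta:2} and omits the proof for brevity. The only point requiring care is the direction of the implication relative to Lemma~\ref{lemma:delta:1}: because the relaxed hull \emph{grows} with $\delta$, a \emph{necessary} condition propagates from larger $\delta$ to smaller $\delta$ (as in Lemma~\ref{lemma:delta:1}), whereas a \emph{sufficient} condition propagates from smaller $\delta$ to larger $\delta$ (the present lemma). Keeping these two monotonicity directions straight is the main thing to get right, but it follows mechanically once the containment $H_{(\delta',p)}(S)\subseteq H_{(\delta,p)}(S)$ is in hand.
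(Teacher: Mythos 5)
Your proof is correct and follows essentially the same route as the paper: the paper omits the proof of Lemma~\ref{lemma:delta:2}, noting it is proved like Lemma~\ref{lemma:delta:1}, whose entire content is the containment $H_{(\delta',p)}(S)\subseteq H_{(\delta,p)}(S)$ from Definition~\ref{def:Hdelta} --- exactly the observation you use, spelled out as ``any algorithm solving the $(\delta',p)$ problem already solves the $(\delta,p)$ problem.'' Your explicit care with the direction of monotonicity (necessity propagates toward smaller $\delta$, sufficiency toward larger $\delta$) is exactly the point the paper's terse treatment leaves implicit.
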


\begin{lemma}
\label{lemma:delta:3}
	A necessary condition for $(\delta,p)$-Relaxed Approximate BVC is also necessary for $(\delta',p)$-Relaxed Approximate BVC, where $\delta\geq \delta'\geq 0$.
\end{lemma}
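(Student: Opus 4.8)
The plan is to mirror the argument used for Lemma \ref{lemma:delta:1}, adapting it to the approximate setting. The essential observation is that for a fixed set $S$ of non-faulty inputs, Definition \ref{def:Hdelta} gives the containment $H_{(\delta',p)}(S)\subseteq H_{(\delta,p)}(S)$ whenever $\delta\geq\delta'\geq 0$, since enlarging the radius $\delta$ can only add points to the relaxed convex hull.

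First I would note that $(\delta,p)$-Relaxed Approximate BVC and $(\delta',p)$-Relaxed Approximate BVC share exactly the same $\epsilon$-agreement and termination conditions; the two problems differ only in their validity condition, and these differ solely in which relaxed convex hull the output must lie in. Next I would argue that any algorithm solving $(\delta',p)$-Relaxed Approximate BVC also solves $(\delta,p)$-Relaxed Approximate BVC: in every execution the outputs satisfy $\epsilon$-agreement and termination unchanged, and by the containment above, an output lying in $H_{(\delta',p)}(S)$ automatically lies in $H_{(\delta,p)}(S)$, so the $(\delta,p)$-validity condition is met as well.

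With this reduction in hand, I would complete the argument by contraposition. Suppose $C$ is a necessary condition for $(\delta,p)$-Relaxed Approximate BVC; that is, whenever $C$ fails the problem is unsolvable. If $(\delta',p)$-Relaxed Approximate BVC were solvable in a system where $C$ fails, then by the reduction the same algorithm would solve $(\delta,p)$-Relaxed Approximate BVC in that system, contradicting the assumption that $C$ is necessary for the latter. Hence $(\delta',p)$-Relaxed Approximate BVC is also unsolvable whenever $C$ fails, which is exactly the assertion that $C$ is necessary for $(\delta',p)$-Relaxed Approximate BVC.

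I expect no substantive obstacle, as the proof is a direct analogue of Lemma \ref{lemma:delta:1}; the only point needing care is keeping the direction of the implication straight. Because the smaller radius $\delta'$ yields the smaller, and hence more restrictive, target hull, solving the $(\delta',p)$ problem is at least as demanding as solving the $(\delta,p)$ problem. It is precisely this ordering of difficulty that lets a necessary condition for the $(\delta,p)$ problem carry over to the $(\delta',p)$ problem, and one must verify that the $\epsilon$-agreement clause is genuinely identical across the two formulations so that the reduction preserves all conditions other than validity.
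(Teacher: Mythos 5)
Your proof is correct and follows essentially the same route as the paper: the paper proves Lemma \ref{lemma:delta:1} via the containment $H_{(\delta',p)}(S)\subseteq H_{(\delta,p)}(S)$ and then states that Lemma \ref{lemma:delta:3} follows "similarly," which is exactly your reduction (an algorithm for the harder $(\delta',p)$ problem also solves the easier $(\delta,p)$ problem, so necessary conditions transfer downward to $\delta'$). If anything, you state the key implication in the correct direction---unsolvability of the $(\delta,p)$ problem implies unsolvability of the $(\delta',p)$ problem---whereas the paper's one-line justification of Lemma \ref{lemma:delta:1} appears to have $\delta$ and $\delta'$ transposed.
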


\begin{lemma}
\label{lemma:delta:4}
	A sufficient condition for $(\delta',p)$-Relaxed Approximate BVC is also sufficient for $(\delta,p)$-Relaxed Approximate BVC, where $\delta\geq \delta'\geq 0$.
\end{lemma}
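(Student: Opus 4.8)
The plan is to reuse the monotonicity of the $(\delta,p)$-relaxed convex hull in exactly the same way as in the proof of Lemma~\ref{lemma:delta:1}, but following the implication in the opposite direction, since we are now transferring a \emph{sufficient} condition rather than a necessary one. The single structural fact I need is the containment
$$H_{(\delta',p)}(S)\subseteq H_{(\delta,p)}(S), \qquad \delta\geq\delta'\geq 0,$$
which I would establish directly from Definition~\ref{def:Hdelta}: if a point $u$ satisfies $\|u-v\|_p\leq\delta'$ for some $v\in\mathcal{H}(S)$, then it also satisfies $\|u-v\|_p\leq\delta$ because $\delta\geq\delta'$, so $u\in H_{(\delta,p)}(S)$.

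Next I would compare the two problem specifications. The only difference between $(\delta',p)$- and $(\delta,p)$-Relaxed Approximate BVC is the validity condition: the $\epsilon$-agreement and termination conditions are literally identical in both. Hence any algorithm whose fault-free outputs always lie in $H_{(\delta',p)}(S)$ automatically has its outputs in the larger set $H_{(\delta,p)}(S)$, so it satisfies $(\delta,p)$-relaxed validity while preserving agreement and termination. In other words, an algorithm solving $(\delta',p)$-Relaxed Approximate BVC also solves $(\delta,p)$-Relaxed Approximate BVC.

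Finally I would cash this out at the level of conditions. Suppose $C$ is a sufficient condition for $(\delta',p)$-Relaxed Approximate BVC, i.e.\ whenever $C$ holds there exists an algorithm solving that problem. Under $C$ that algorithm exists, and by the previous paragraph it also solves $(\delta,p)$-Relaxed Approximate BVC; hence $C$ is sufficient for $(\delta,p)$-Relaxed Approximate BVC, which is the claim. The only point demanding care---and the closest thing to an obstacle---is keeping the direction of the implication straight: because the hull grows with $\delta$, the \emph{harder} problem is the one with the smaller $\delta'$, so sufficiency propagates upward in $\delta$ while (as in Lemma~\ref{lemma:delta:1}) necessity propagates downward.
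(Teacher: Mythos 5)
Your proposal is correct and follows essentially the same route as the paper: the paper omits the proof of Lemma~\ref{lemma:delta:4}, noting it is proved like Lemma~\ref{lemma:delta:1}, i.e.\ via the containment $H_{(\delta',p)}(S)\subseteq H_{(\delta,p)}(S)$ from Definition~\ref{def:Hdelta}, which is exactly the fact you establish and exploit. Your added care about the direction of the implication (sufficiency propagating from the harder, smaller-$\delta'$ problem to the easier, larger-$\delta$ one) is precisely the right reading, and your argument is complete.
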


\section{k-Relaxed Byzantine Vector Consensus}
\label{sec:k}

\subsection{Synchoronous Systems}
In this section, we prove the necessary and sufficient condition for \textsl{k-Relaxed Exact BVC} in a synchronous system, where $2\leq k\leq d-1$ (thus, $d\geq 3$). 
As noted earlier, for $k=1$, $n\geq 3f+1$ is necessary and sufficient,
and for $k=d$, $n\geq \max(3f+1,(d+1)f+1)$ is necessary and sufficient.
Bounds for $d=1,2$ are included in the above results.

\begin{theorem}
\label{thm:k:sync}
	$n\geq (d+1)f+1$ is necessary and sufficient for k-Relaxed Exact BVC in a synchronous system when $2\leq k\leq d-1$.
\end{theorem}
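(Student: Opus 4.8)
The plan is to prove the two directions separately, exploiting the monotonicity Lemmas~\ref{lemma:k:1} and~\ref{lemma:k:2} so that in each direction I only have to work at one extreme value of $k$. Since $2\le k\le d-1$ forces $d\ge 3$, we have $\max(3f+1,(d+1)f+1)=(d+1)f+1$ throughout, so the target bound coincides with the $k=d$ (ordinary BVC) bound of Theorem~\ref{thm:exact}.

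For \textbf{sufficiency}, I would start from the fact that $n\ge(d+1)f+1$ already solves $d$-Relaxed Exact BVC, since that problem coincides with ordinary Exact BVC and $d\ge 3$ makes $(d+1)f+1$ the operative term in Theorem~\ref{thm:exact}. Applying Lemma~\ref{lemma:k:2} repeatedly (a sufficient condition for $(k+1)$-relaxed consensus is sufficient for $k$-relaxed consensus) then pushes this down through $k=d-1,d-2,\dots$ to every $k$ with $2\le k\le d-1$. Thus no new algorithm is needed: the existing BVC algorithm already works.

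For \textbf{necessity}, Lemma~\ref{lemma:k:1} lets me reduce to the single case $k=2$: if $n\ge(d+1)f+1$ is necessary for $2$-Relaxed Exact BVC, then it is necessary for every $k\ge 2$ in the range. So it suffices to show that with $n=(d+1)f$ processes $2$-relaxed exact consensus is impossible. I would set this up with a standard indistinguishability/simulation argument. Partition the $n=(d+1)f$ processes into $d+1$ groups of size $f$, and assign every process in group $i$ the same input $v_i\in\mathbb{R}^d$, for points $v_1,\dots,v_{d+1}$ to be chosen. In the execution where all processes are non-faulty the algorithm must produce a common output $z$ by Agreement. For each $j$, the execution in which group $j$ is Byzantine but simulates having input $v_j$ is indistinguishable from this one to every non-faulty process, so the output is again $z$; but now $2$-Relaxed Validity requires $z\in H_2(\{v_i:i\ne j\})$. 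Hence $z\in\bigcap_{j=1}^{d+1}H_2(\{v_i:i\ne j\})$, and the impossibility is established once the $v_i$ are chosen so that this intersection is empty.

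The main obstacle is therefore the purely geometric claim that $d+1$ points $v_1,\dots,v_{d+1}\in\mathbb{R}^d$ can be chosen with $\bigcap_{j}H_2(\{v_i:i\ne j\})=\emptyset$. Using the definition of $H_2$ and commuting intersections with the preimage $g_D^{-1}$, one rewrites the intersection as $\{z: g_D(z)\in R_D \text{ for all } D\in\mathcal{D}_2\}$, where $R_D=\bigcap_{j}\mathcal{H}(g_D(\{v_i:i\ne j\}))$ is the ``remove-one'' intersection of the $d+1$ planar points obtained by projecting the $v_i$ onto the coordinate pair $D$. The whole problem thus becomes: choose the $v_i$ so that the planar membership constraints $g_D(z)\in R_D$, one for each of the $\binom{d}{2}$ coordinate pairs, are jointly infeasible. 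For $d=3$ this is clean, since each projection has exactly four points, for which $R_D$ is a single point (the diagonal intersection of the projected quadrilateral, or the lone interior point), and the three resulting points $R_{\{1,2\}},R_{\{1,3\}},R_{\{2,3\}}$ fail to agree on their shared coordinates for generic $v_i$, so no consistent $z$ exists. The hard part is the general case $d\ge 4$, where each $R_D$ is a positive-area planar region rather than a point; here one must exhibit a configuration of the $v_i$ for which the $\binom{d}{2}$ cylindrical constraints $g_D^{-1}(R_D)$ intersect emptily, and verifying this emptiness is the crux of the argument. I would look for a configuration that degenerates the relevant projections, forcing each $R_D$ into low-dimensional pieces, so that a coordinate-level inconsistency analogous to the $d=3$ case can be extracted.
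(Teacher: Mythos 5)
Your overall skeleton matches the paper's: sufficiency is obtained exactly as you describe (Theorem~\ref{thm:exact} plus Lemma~\ref{lemma:k:2}), and necessity is reduced to $k=2$ via Lemma~\ref{lemma:k:1} followed by an indistinguishability argument confining the common output to $\bigcap_{j} H_2(\{v_i : i\neq j\})$ --- the paper does the same, first for $f=1$ with $n=d+1$ and then lifting to general $f$ by simulation. The problem is that your proof stops at exactly the point where the real work begins: you never exhibit points $v_1,\dots,v_{d+1}\in\mathbb{R}^d$ for which that intersection is empty, and you explicitly concede that for $d\geq 4$ you do not know how to do this (``the crux of the argument''). Even your $d=3$ case rests on an unproven genericity assertion: that the three planar points $R_{\{1,2\}}, R_{\{1,3\}}, R_{\{2,3\}}$ generically fail to be consistent is plausible but is asserted, not shown --- you would need either an explicit example (which is what is missing anyway) or an argument that the consistency conditions are not identically satisfied. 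So as it stands, the necessity direction is incomplete.

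The paper resolves precisely this geometric crux with an explicit configuration that works uniformly for all $d\geq 3$: the inputs are the columns of the $d\times(d+1)$ matrix whose $i$-th column ($1\leq i\leq d$) has zeros in positions $1,\dots,i-1$, the value $\gamma$ in position $i$, and $\epsilon$ in positions $i+1,\dots,d$ (with $0<\epsilon\leq\gamma$), and whose last column is the constant vector $-\gamma$. Four observations, each using only a single coordinate pair $D$ and a single leave-one-out set $T$, then pin down any point of the intersection: dropping the last input forces every coordinate to be nonnegative; the pairs $D=\{i,i+1\}$ with the $(i+1)$-st input dropped force the coordinates to be nonincreasing along the chain; $D=\{1,2\}$ with the first input dropped forces the first coordinate to be nonpositive, hence by induction all coordinates equal $0$; but $D=\{d-1,d\}$ with the last input dropped forces the last coordinate to be at least $\epsilon>0$ --- a contradiction. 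Note the economy of this argument: it never needs to analyze the full planar regions $R_D$ (your framing in terms of $\binom{d}{2}$ cylindrical constraints), only a handful of adjacent coordinate pairs combined with well-chosen leave-one-out sets. Supplying a construction of this kind is what your proposal still needs in order to be a proof.
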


\begin{proof}
Since $2\leq k\leq d-1$, we have $d\geq 3$.

~\newline
{\em Sufficiency:}
By Theorem \ref{thm:exact}, and due to the equivalence of the original Exact BVC and
$d$-Relaxed Exact BVC, for $d\geq 2$, $n\geq (d+1)f+1$ is sufficient for $d$-Relaxed Exact BVC.
Then by Lemma \ref{lemma:k:2}, this condition is also sufficient for $k$-Relaxed Exact BVC where $2\leq k\leq d-1$.

{\em Necessity:}
	We first prove that $n\geq d+2$ is necessary for $f=1$ and $2\leq k\leq d-1$.
 By Lemma \ref{lemma:k:1}, we only need to prove the necessity for $k=2$.
The proof is by contradition. Suppose that $n=d+1$ and $k$-Relaxed BVC with $k=2$ is achievable using
a certain algorithm.

	
Let us suppose that exactly one process is Byzantine faulty, but the faulty
process correctly follows any specified algorithm. Due to this restricted behavior,
it is possible for all the processes to correctly learn the input of all the other processes.
If we can show that $d+1$ processes are insufficient despite the above constraint on the faulty process,
then $d+1$ are insufficient when arbitrary behaviors are allowed for the faulty process.
Hereafter, we assume that all the processes follow the specified algorithm.

Let $Y$ denote the multiset of inputs at all the $d+1$ processes.
and $N$ denote the multiset of the inputs of the non-faulty processes,
The output must satisfy the $k$-Relaxed Validity condition.
Thus, the output must be in $H_k(N)$. However,
since the identity of the faulty process is unknown, every process is potentially faulty.
Thus, to satisfy the $k$-Relaxed Validity condition, the output chosen by the algorithm
must be 
$$\Psi(Y) = \bigcap_{T\subseteq Y, |T|=|Y|-f}~{H_k}(T).$$
Recall that $k=2$ and $f=1$ presently.
Also recall the definition of $H_k$ presented earlier.
Observe that
\begin{eqnarray*}
\Psi(Y) & = & \bigcap_{T\subseteq Y, |T|=|Y|-f}~{H_k}(T) \\
& = & \bigcap_{T\subseteq Y, |T|=|Y|-f} \left(\bigcap_{D\in \mathcal{D}_k}~g_D^{-1}(\mathcal{H}(g_D(T)))\right) \\
& = & \bigcap_{\substack{D\in \mathcal{D}_k \\ T \subseteq Y, |T|=|Y|-f}} g_D^{-1}(\mathcal{H}(g_D(T))) 
\end{eqnarray*}

To guarantee that the chosen output is in the convex hull of the inputs of non-faulty
processes, regardless of which process is faulty, the output must be contained in $\Psi(Y)$ defined above.

	Let the $i^{th}$ column of the following $d\times (d+1)$ matrix $S$ be an input vector of the $i^{th}$ process, where $0<\epsilon\leq\gamma$. We now show that these $d+1$ inputs lead to empty $\Psi(Y)$ when $k=2$.
	\begin{equation*}
	S=\begin{pmatrix}
	\gamma & 0 & \cdots & \cdots & 0 &-\gamma\\
	\epsilon& \gamma & 0 & \cdots & 0 &-\gamma\\
	\vdots&  \ddots & \ddots & \ddots & \vdots &\vdots\\
	\epsilon & \cdots & \epsilon & \gamma & 0 & -\gamma\\
	\epsilon & \cdots & \cdots & \epsilon & \gamma & -\gamma\\
	\end{pmatrix}
	\end{equation*}
In column $i$, $1\leq i\leq d$, the first $i-1$ elements equal 0, the $i$-th element equals $\gamma$, and the rest of the elements equal $\epsilon$. In column $d+1$, all elements are $-\gamma$.
Let $s_i$ denote the $i$-th column of matrix $S$ above, that is, the
input of $i$-th process.
Thus, $Y=\{s_i~|~1\leq i\leq d+1\}$.
We will consider several different combinations of $D$ and $T$ now:
\begin{itemize}
\item Observation 1: Consider $D=\{i,j\}\subseteq [1,d]$ and $T=Y-\{s_{d+1}\}$. Since the $i$-th coordinate of all the vectors in $T$ is non-negative, it follows that the $i$-th coordinate of all the vectors in $g_D^{-1}(\mathcal{H}(g_D(T)))$ for these $D,T$ must be non-negative. Therefore, $i$-th coordinate of all vectors in $\Psi(Y)$ must be non-negative. This holds for all $i$, $1\leq i\leq d$.

\item Observation 2: Consider $D=\{i,i+1\}$ where $1\leq i\leq d-1$, and $T=Y-\{s_{i+1}\}$. For each vector in $T$, its $i+1$-th coordinate is 
smaller than or equal to its $i$-th coordinate since $0<\epsilon\leq\gamma$. Therefore, the $i+1$-th coordinate of each vector in $g_D^{-1}(\mathcal{H}(g_D(T)))$ for these $D,T$ must be $\leq$ its $i$-th coordinate. Therefore, the $i+1$-th coordinate of each vector in $\Psi(Y)$ must be $\leq$ its $i$-th coordinate.

\item Observation 3: Consider $D=\{1,2\}$ and $T=Y-\{s_1\}$. For each vector in $T$, its first coordinate is 
non-positive. Therefore, the first coordinate of each vector in $g_D^{-1}(\mathcal{H}(g_D(T)))$ for these $D,T$ must be non-positive.
Therefore, the first coordinate of each vector in $\Psi(Y)$ must be non-positive.

\item Observation 4: Consider $D=\{d-1,d\}$ and $T=Y-\{s_{d+1}\}$. Since the last element of each vectors in $T$ is $\geq \epsilon$, the last coordinate of each vector in $g_D^{-1}(\mathcal{H}(g_D(T)))$ for these $D,T$ must be $\geq \epsilon$.
Therefore, the last coordinate of each vector in $\Psi(Y)$ must be $\geq \epsilon$.

\end{itemize}
Observations 1 and 3 together imply that the first element of the output of 2-Relaxed Exact BVC must be $0$.
This conclusion and observations 1 and 2 together imply that the $i$-th element of the output for $i\leq d$ must be $0$.
But this contradicts Observation 4 that the $d$-th element of the output of 2-Relaxed Exact BVC must be $\geq\epsilon>0$.


Therefore, we have proved $n=d+1$ is not sufficient for $f=1$, $k=2$.
By Lemma \ref{lemma:k:1}, $n= d+1$ is not sufficient for $k>2$ as well.
For $f>1$, we can then use the well-known simulation approach \cite{lamport1982byzantine}
 to show that $n=(d+1)f$ is not sufficient $k\geq 2$.
Therefore, $n \geq (d + 1)f + 1$ is necessary for $f \geq 1$ and $k\geq 2$, completing the proof.
	
\end{proof}

\subsection{Asynchronous Systems}

In this section, we prove the necessary and sufficient condition for \textsl{k-Relaxed Approximate BVC} in an asynchronous system, where $2\leq k\leq d-1$. 
For $k=1$, $n\geq 3f+1$ is necessary and sufficient,
and for $k=d$, $n\geq (d+2)f+1$ is necessary and sufficient.
Bounds for $d=1,2$ are included in the above results.

\begin{theorem}
\label{thm:k:async}
	$n\geq (d+2)f+1$ is necessary and sufficient for $2\leq k\leq d-1$ in k-Relaxed BVC in an asynchronous system.
\end{theorem}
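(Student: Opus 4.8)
\emph{Sufficiency.} The plan is to inherit sufficiency from the unrelaxed problem. By Theorem~\ref{thm:approximate} and the equivalence of Approximate BVC with $d$-Relaxed Approximate BVC, $n\geq(d+2)f+1$ is sufficient for $d$-Relaxed Approximate BVC. Applying Lemma~\ref{lemma:k:4} repeatedly (stepping down from $k{=}d$ through each $k{+}1\to k$) then shows that the same bound is sufficient for $k$-Relaxed Approximate BVC for every $2\leq k\leq d-1$.

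\emph{Necessity, reductions.} By Lemma~\ref{lemma:k:3} a necessary condition propagates from $k$ to $k+1$, so it suffices to prove necessity for the smallest case $k=2$ (which already forces $d\geq3$). With $k=2$ fixed, I would first treat $f=1$, showing that $n=d+2$ is insufficient, and then lift to arbitrary $f$ by the standard group-simulation argument of \cite{lamport1982byzantine}: partition $(d+2)f$ processes into $d+2$ groups of $f$, let each group emulate one process of the $f=1$ instance, and observe that one faulty emulated process corresponds to one faulty group, i.e.\ $f$ faulty processes. This yields that $n=(d+2)f$ is insufficient, hence $n\geq(d+2)f+1$ is necessary.

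\emph{Necessity, core impossibility ($f=1,k=2,n=d+2$).} I would argue by contradiction, following the asynchronous indistinguishability methodology of \cite{mendes2013multidimensional,vaidya2013byzantine} but with inputs engineered to defeat the relaxation. The plan is to extend the $d\times(d+1)$ matrix $S$ from the proof of Theorem~\ref{thm:k:sync} to a $d\times(d+2)$ configuration $Y$ whose two-dimensional coordinate projections still satisfy sign and monotonicity constraints in the spirit of Observations~1--4. Asynchrony is exploited as follows: since up to $f=1$ process may be faulty and silent while every non-faulty process must terminate, a non-faulty process may be forced to decide after receiving only $n-f=d+1$ inputs, one of which may be faulty; thus what it can certify for $2$-Relaxed Validity is governed by the $2$-relaxed hulls of the $(n-2f)=d$-element cores of its received set. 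I would then fix two non-faulty processes whose received sets (hence certifiable cores) differ, build a pair of executions that are indistinguishable to each of them and in which the true sets of non-faulty inputs are $Y\setminus\{s_a\}$ and $Y\setminus\{s_b\}$ respectively, and use the projection constraints to show their $2$-relaxed validity regions are separated by a positive distance $\rho$. Taking $\epsilon<\rho$ then makes $\epsilon$-Agreement and $2$-Relaxed Validity jointly unsatisfiable.

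\emph{Main obstacle.} The crux is that the obstruction must be \emph{global}: it cannot be read off from any single process. Indeed, the set of outputs valid for a lone process that received $R$ equals $\bigcap_{T\subseteq Y,\,|T|=|Y|-f}H_2(T)$, the $2$-relaxed analogue of $\Gamma(Y)$ (the omitted process re-enters as a possibly non-faulty input), and by Helly's theorem this intersection is nonempty for \emph{every} configuration once $|Y|\geq d+2$, because any $d+1$ of the $d+2$ convex sets $H_2(Y\setminus\{s_j\})$ share the input omitted from that subfamily. Consequently the impossibility has to come entirely from the interplay between $\epsilon$-Agreement and the asynchronous fact that distinct non-faulty processes observe distinct $(n-f)$-subsets. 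Making this interplay rigorous---pinning down which views force which cores, proving that $\epsilon$-Agreement genuinely binds the two chosen processes, and verifying that the extended configuration $Y$ keeps the two $2$-relaxed core-hulls strictly separated despite the relaxation---is where essentially all the difficulty resides.
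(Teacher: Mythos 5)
Your sufficiency argument and your reduction scaffolding (Lemma~\ref{lemma:k:3} to reduce to $k=2$, proving $f=1$ first, then lifting to general $f$ by the simulation argument of \cite{lamport1982byzantine}) coincide exactly with the paper's proof in Appendix~\ref{app:b}, and your Helly-based structural observation is sound in spirit: the contradiction cannot come from a single process's validity set being empty. One correction there, though: the valid-output set of a process $i$ that hears only $s_1,\dots,s_{d+1}$ is \emph{not} $\bigcap_{T\subseteq Y,\,|T|=|Y|-f}H_2(T)$; it is $\Psi_i(S)=\bigcap_{j\neq i,\,1\leq j\leq d+1}H_2(S^j)$ where $S^j=\{s_1,\dots,s_{d+1}\}\setminus\{s_j\}$ --- the unheard input is removed entirely (its value is unknown, so the worst case over it shrinks the hull to the $d$-element core), and $j=i$ is excluded because a non-faulty process need not suspect itself. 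This set is still trivially nonempty (it contains $s_i$), so your conclusion that the obstruction must be global survives, but the characterization matters for what follows.

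The genuine gap is that the core impossibility ($f=1$, $k=2$, $n=d+2$) is never carried out: you give no explicit $d\times(d+2)$ configuration, do not identify which two processes are bound by $\epsilon$-agreement, and do not compute the separation $\rho$ --- you yourself defer ``essentially all the difficulty.'' Worse, your plan looks for two processes ``whose received sets differ,'' which points away from the mechanism that actually works. In the paper's execution, processes $1,\dots,d+1$ all have the \emph{same} view (process $d+2$ is silent for everyone); the asymmetry between the two compared processes is purely \emph{self-trust}, i.e., which index is excluded from the intersection defining $\Psi_i(S)$. Concretely, the paper takes the synchronous matrix of Theorem~\ref{thm:k:sync} with $\epsilon$ replaced by $2\epsilon$ and appends an all-zero column $s_{d+2}$. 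For process 2, the constraint from $j=1$ (first coordinates of $S^1$ are $\leq 0$) together with $j=d+1$ (first coordinates of $S^{d+1}$ are $\geq 0$) pins its first output coordinate to $0$. For process 1, the index $j=1$ is excluded, so that ceiling never applies; instead $D=\{d-1,d\}$ with $j=d$ and $j=d+1$ pins the $d$-th coordinate to exactly $2\epsilon$, and $D=\{t,t+1\}$ with $j=t+1$ gives the monotone chain forcing the first coordinate to be $\geq 2\epsilon$. Hence any admissible outputs of processes 1 and 2 differ by at least $2\epsilon$ in the first coordinate, violating $\epsilon$-agreement. Without this (or an equivalent) concrete construction and separation computation, the necessity direction --- and thus the theorem --- is not established.
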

The proof is provided in the Appendix \ref{app:b}.

\section{$\boldsymbol{(\delta, p)}$-Relaxed Byzantine Vector Consensus}

\subsection{Synchoronous Systems}

In this section, we prove the necessary and sufficient condition for \textsl{$(\delta, p)$-Relaxed Exact BVC} in a synchronous system.

\begin{theorem}
\label{thm:delta:sync}
	$n\geq \max(3f+1,(d+1)f+1)$ is necessary and sufficient for $(\delta, p)$-Relaxed Exact BVC in a synchronous system, where $0<\delta<\infty$ and $1\leq p$.
\end{theorem}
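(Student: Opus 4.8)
The plan is to separate sufficiency from necessity: sufficiency will be immediate from the monotonicity lemmas, while necessity will follow from a scaling argument that reduces the constant-$\delta$ problem to the exact ($\delta=0$) problem of Theorem \ref{thm:exact}. For sufficiency, I would note that at $\delta=0$ the $(\delta,p)$-relaxed convex hull coincides with the ordinary convex hull for every $p\geq 1$, so $(0,p)$-Relaxed Exact BVC is literally the original Exact BVC; hence $n\geq\max(3f+1,(d+1)f+1)$ suffices for $(0,p)$-consensus by Theorem \ref{thm:exact}, and Lemma \ref{lemma:delta:2} with $\delta'=0$ lifts this to every $\delta>0$.

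The core of necessity is a scaling lemma: for a finite family of multisets $\{T\}$ in $\mathbb{R}^d$ with $\bigcap_T \mathcal{H}(T)=\emptyset$, and for any fixed $\delta>0$ and $p\geq 1$, there is a scale $\Lambda$ so that $\bigcap_T H_{(\delta,p)}(\lambda T)=\emptyset$ for all $\lambda\geq\Lambda$, where $\lambda T$ denotes scaling every point of $T$ by $\lambda$. I would prove this by contradiction: since $\mathcal{H}(\lambda T)=\lambda\mathcal{H}(T)$, any common point $u$ of the sets $H_{(\delta,p)}(\lambda T)$ yields a point $u/\lambda$ within $L_p$-distance $\delta/\lambda$ of every $\mathcal{H}(T)$. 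If no finite $\Lambda$ worked, I could let $\lambda\to\infty$, extract a convergent subsequence of the (bounded) points $u/\lambda$, and use continuity of distance-to-a-closed-convex-set to place the limit in $\bigcap_T\mathcal{H}(T)$, contradicting emptiness.

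To finish necessity I would invoke the exact-BVC lower bound of Theorem \ref{thm:exact}: for $d\geq 2$ and $n=(d+1)f$ there is an input multiset $Y_0$ at the non-faulty processes whose exact validity target $\bigcap_{T\subseteq Y_0,\,|T|=|Y_0|-f}\mathcal{H}(T)$ is empty. Using the restricted-adversary setup of the proof of Theorem \ref{thm:k:sync} (the faulty processes follow the algorithm, so every input is learned), any $(\delta,p)$-consensus output must lie in $\bigcap_T H_{(\delta,p)}(T)$; feeding the scaled instance $\lambda Y_0$ and applying the scaling lemma to the hulls $\mathcal{H}(T)$ makes this target empty for large $\lambda$, so no valid output exists and $n=(d+1)f$ is insufficient. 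Because $(d+1)f+1\geq 3f+1$ precisely when $d\geq 2$, this yields the full bound $\max(3f+1,(d+1)f+1)$ in that regime; the remaining case $d=1$ has bound $3f+1$ and is the scalar problem already settled, with the $\delta$-relaxation included, by \cite{fischer1990easy}.

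The main obstacle is the scaling lemma, because $\delta$ is a fixed positive constant whereas an empty intersection of convex hulls guarantees only a strictly positive, not an a priori quantified, separation; the compactness argument is exactly what converts this qualitative separation into one that eventually dominates any fixed $\delta$ once the configuration is magnified. Care is needed to keep the family $\{T\}$ finite and to use that each $H_{(\delta,p)}(\lambda T)$ is a compact $\delta$-neighborhood of a compact convex set, so that the boundedness and subsequence extraction are legitimate.
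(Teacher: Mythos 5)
Your proposal is correct in substance, with the same sufficiency argument as the paper (Theorem \ref{thm:exact} plus Lemma \ref{lemma:delta:2}, and \cite{fischer1990easy} for $d=1$), but a genuinely different route for necessity. The paper's proof is concrete: it fixes $p=\infty$, takes $f=1$ and $n=d+1$, and uses the explicit inputs $x e_1,\dots,x e_d,\mathbf{0}$ with $x>2d\delta$ (a scaled simplex); deleting process $i$ forces every coordinate of the output to be at most $\delta$, while deleting process $d+1$ forces some coordinate to be at least $\frac{x}{d}-\delta>\delta$, a contradiction. It then extends to $f>1$ by the simulation approach and to finite $p$ via $\|x\|_\infty\le\|x\|_p$, which gives $H_{(\delta,p)}\subseteq H_{(\delta,\infty)}$. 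Your scaling lemma abstracts exactly what makes that construction work: any configuration whose exact-validity target is empty, once magnified enough, has empty $\delta$-relaxed target for any fixed norm. What this buys is uniformity --- a single argument for all $p\ge 1$ and all $f\ge 1$, with no simulation step and no norm-comparison step --- at the price of a non-quantitative compactness argument where the paper has explicit constants. The compactness step itself is sound (the points $u/\lambda$ are bounded, distance to a closed convex set is continuous), and the restricted-adversary reduction you use is the same one the paper uses.

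One caveat you should repair: the input multiset $Y_0$ of size $(d+1)f$ with $\bigcap_{T\subseteq Y_0,\,|T|=|Y_0|-f}\mathcal{H}(T)=\emptyset$ is \emph{not} literally given by the statement of Theorem \ref{thm:exact}, which is an algorithmic impossibility, not a geometric one. The contrapositive derivation you would need (``if all such intersections were nonempty, broadcast-then-pick would solve Exact BVC with $(d+1)f$ processes'') is valid only when $(d+1)f\ge 3f+1$, i.e.\ $d\ge 3$, since Byzantine broadcast itself requires $n\ge 3f+1$; at $d=2$ the inference is circular. The clean fix is to exhibit the configuration directly: take $f$ copies of each vertex of a $d$-simplex; removing the $f$ copies of vertex $i$ leaves the $i$-th facet, and the facets of a simplex have empty intersection. (This is exactly the shape of the paper's input matrix.) With that substitution, your argument is complete for all $d\ge 2$.
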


\begin{proof}
When $d=1$, the inputs are scalar, and all the $L_p$ norms are identical. For the case of $d=1$,
$(\delta,p)$-relaxed consensus is equivalent to a problem that was addressed in 
prior work \cite{fischer1990easy}. For this case, it was shown that
$n\geq 3f+1$ is necessary and sufficient.
Therefore, in the rest of the proof, we assume $d\geq2$.

	{\em Sufficiency:}
	By Theorem \ref{thm:exact}, and due to the equivalence of the original Exact BVC and
		$(0, p)$-Relaxed Exact BVC, for $d\geq 2$ and $1\leq p$, $n\geq (d+1)f+1$ is sufficient for $(0, p)$-Relaxed Exact BVC. Then by Lemma \ref{lemma:delta:2}, this condition is also sufficient for $(\delta, p)$-Relaxed BVC where $0<\delta<\infty$.
	
	{\em Necessity:}
	We first prove that $n\geq d+2$ is necessary for $f=1$ and $p=\infty$. The proof is by contradiction.
	Suppose that $n=d+1$ and $(\delta,\infty)$-Relaxed Exact BVC is achievable using a certain algorithm.
	
	Analogous to the proof of Theorem \ref{thm:k:sync}, we assume that any faulty process follows the algorithm correctly.
	Let the $i^{th}$ column of the following $d\times (d+1)$ matrix $S$ be an input vector of the $i^{th}$ process, where $x>2d\delta$.
	\begin{equation*}
	S=\begin{pmatrix}
	x & 0 & \cdots & \cdots & 0 & 0\\
	0& x & 0 & \cdots & 0 & 0\\
	\vdots&  \ddots & \ddots & \ddots & \vdots &\vdots\\
	0 & \cdots & 0 & x & 0 & 0\\
	0 & \cdots & \cdots & 0 & x & 0\\
	\end{pmatrix}
	\end{equation*}
	
For $1\leq i\leq d$,
the $i$-th coordinate of the $i$-th input is $x$, and the rest of the coordinates are 0.
The $d+1$-th input is all-0.
Let $Y$ denote the set of all inputs specified in matrix $S$.
If $N$ is the set of non-faulty processes, then the output must be in $H_{(\delta, \infty)}(N)$.
However, since the identity of any faulty process is unknown, the decision vector must be in 
	\begin{equation*}
	\bigcap_{T\subseteq Y, |T|=|Y|-f}H_{(\delta, \infty)}(T)
	\end{equation*}
where $f=1$.

Now we consider different choices of $T$:
\begin{itemize}
	\item 
	Observation 1: Consider $T$ as the set of all inputs except the input of process $i$, $1\leq i\leq d$.
	Then the $i^{th}$ element of each of the $d$ inputs in $T$ is $0$. Therefore then $i^{th}$ element of
	all the vectors in $H_{(\delta,\infty)}(T)$ -- and consequently in the output --  must be less or equal than $\delta$ due to the definition of $(\delta,\infty)$-Relaxed Validity. 
	
	\item
	Observation 2: Consider $T$ as the set of all inputs except the input of process $(d+1)$. The vectors
	in $H_{(\delta,\infty)}(T)$ are within distance $\delta$ (where the
distance is measured using the $L_\infty$ norm) of the convex hull $\mathcal{H}(T)$. In each convex combination of elements in $T$ used to obtain the convex hull $\mathcal{H}(T)$,
	at least one of the weight must be $\geq \frac{1}{d}$. Hence at least one element of each
	vector in $\mathcal{H}(T)$ must be $\geq \frac{x}{d}$.
	Thus, at least one element of each vector in $H_{(\delta,\infty)}(T)$ -- and consequently
	the output -- must be $\geq \frac{x}{d}-\delta>\delta$ (recall that $\frac{x}{d}-\delta>\delta$).
\end{itemize}

Thus, Observation 1 and 2 contradict each other, proving that $n=d+1$ is not sufficient for $f=1$. For $f>1$, we can use the simulation approach to show $n=(d+1)f$ is not sufficient \cite{lamport1982byzantine}. Therefore, $n \geq (d + 1)f + 1$ is necessary for \textsl{$(\delta, \infty)$-Relaxed Exact BVC} with $f>1$.


Now, for any vector $x$, $\|x\|_\infty\leq \|x\|_p$, for $1\leq p<\infty$ \cite{kothe1983topological}.
Therefore, we have
	$$
	H_{(\delta, p)}\subseteq H_{(\delta, \infty)}
	$$
	Then, the argument above for $(\delta,\infty)$-consensus would
imply that $n \geq (d + 1)f + 1$ is also necessary for \textsl{$(\delta, p)$-Relaxed Exact BVC}. 

\end{proof}

\subsection{Asynchoronous Systems}

In this section, we prove the necessary and sufficient condition for \textsl{$(\delta, p)$-Relaxed Approximate BVC} in an asynchronous system.

\begin{theorem}
\label{thm:delta:async}
	$n\geq (d+2)f+1$ is necessary and sufficient for $(\delta, p)$-Relaxed Approximate BVC in an asynchronous system, where $0<\delta<\infty$ and $1\leq p$.
\end{theorem}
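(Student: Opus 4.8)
The plan is to mirror the proof of Theorem~\ref{thm:delta:sync}, replacing its synchronous ingredients by asynchronous counterparts. As in that proof, I would first dispose of $d=1$: the inputs are scalar, all $L_p$ norms coincide, and $(d+2)f+1=3f+1$, so the claim follows from \cite{fischer1990easy}. Hence assume $d\geq 2$ in the rest of the argument.

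\textbf{Sufficiency.} This direction is immediate and parallels the synchronous case. By Theorem~\ref{thm:approximate}, $n\geq (d+2)f+1$ is sufficient for Approximate BVC, which is exactly $(0,p)$-Relaxed Approximate BVC, since for $\delta=0$ the relaxed validity condition reduces to membership in $\mathcal{H}(S)$ independently of $p$. Lemma~\ref{lemma:delta:4} then upgrades sufficiency for $(0,p)$ to sufficiency for $(\delta,p)$ with any $\delta>0$, finishing this direction.

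\textbf{Necessity.} Here I cannot invoke the $\delta=0$ lower bound of Theorem~\ref{thm:approximate} directly, because increasing $\delta$ relaxes the problem and Lemma~\ref{lemma:delta:3} runs in the unhelpful direction. So, exactly as in Theorem~\ref{thm:delta:sync}, I would establish the bound from scratch, first for $p=\infty$, $f=1$, and $n=d+2$, and then extend. I would reuse a scaled input configuration in the spirit of the synchronous proof, a simplex whose spread $x$ is chosen large relative to $\delta$ (e.g.\ $x>2d\delta$), so that the $\delta$-fattening of the relevant convex hulls in the $L_\infty$ metric still fails to intersect. The new ingredient is asynchrony: because one process may be faulty or arbitrarily slow and a correct process cannot distinguish a slow non-faulty process from a silent faulty one, a deciding process $q$ must commit after receiving messages from only $n-f=d+1$ processes, while still allowing that one of those $d+1$ is faulty. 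The upshot is that $q$'s output must lie in
$$\bigcap_{R:\,|R|=n-f}\ \bigcap_{T\subseteq R,\ |T|=|R|-f} H_{(\delta,\infty)}(T),$$
where the inner intersection ranges over the $(n-2f)=d$-element subsets $T$ of the heard multiset $R$. This reduces the asynchronous $n=d+2$ situation to the synchronous emptiness argument applied to the $d+1$ values in $R$: assigning the scaled-simplex inputs to the processes in $R$ makes this intersection empty, so no valid output exists, a contradiction. To make this rigorous I would build a pair (or short chain) of executions, differing only in which group of $f$ processes is silent/faulty, that are indistinguishable to some non-faulty process, and argue that the $(\delta,\infty)$-relaxed hulls of the two admissible non-faulty input sets are disjoint, so that the single output of that process cannot satisfy validity in both.

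To finish, I would lift $f=1$ to general $f$ by the standard grouping/simulation argument of \cite{lamport1982byzantine}, obtaining that $n=(d+2)f$ is insufficient and hence $n\geq (d+2)f+1$ is necessary for $(\delta,\infty)$-Relaxed Approximate BVC. Finally, since $\|x\|_\infty\leq\|x\|_p$ for all $1\leq p<\infty$ \cite{kothe1983topological}, we have $H_{(\delta,p)}(S)\subseteq H_{(\delta,\infty)}(S)$, so the safe region for $(\delta,p)$ is contained in that for $(\delta,\infty)$ and is therefore also empty; the impossibility transfers verbatim to every $(\delta,p)$ with $1\leq p$. The main obstacle I anticipate is the asynchronous execution construction: unlike the synchronous proof, where emptiness of a single intersection $\Psi(Y)$ settles the matter, here I must carefully track what the deciding process has heard (subsets of size $n-f$), discount a possibly-faulty $f$-subset among them, and exhibit two indistinguishable executions whose relaxed-validity constraints are jointly unsatisfiable despite the $\delta$-slack.
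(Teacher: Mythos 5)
Your sufficiency direction, the $d=1$ reduction, the simulation lift to $f>1$, and the $L_\infty\to L_p$ transfer all match the paper. The necessity core, however, has a genuine structural flaw: you try to derive a \emph{validity-only} contradiction (``no valid output exists''), and no such contradiction is available in this setting. First, the outer intersection over all heard sets $R$ is unjustified --- a process knows which $n-f$ processes it actually heard from, so its output need only respect the constraints arising from that one $R$. Second, and more fundamentally, the inner intersection you write includes subsets $T$ that exclude the deciding process's \emph{own} input, but process $q$ is non-faulty in every execution under consideration, so every admissible set of non-faulty inputs contains $s_q$. Consequently every constraint set $H_{(\delta,\infty)}(T)$ that the algorithm must actually respect contains $s_q$, and the true feasible set of process $q$ is never empty (the trivial rule ``output your own input'' always satisfies validity). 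The same objection kills your fallback plan of ``two indistinguishable executions whose validity constraints are jointly unsatisfiable for one process'': if $q$ is non-faulty in both executions, both validity sets contain $s_q$ and cannot be disjoint. An emptiness argument, synchronous-style, simply cannot close this proof.

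What the paper does instead --- and what any correct proof must do --- is use the \emph{$\epsilon$-agreement} condition, which your necessity argument never invokes. With $n=d+2$, $f=1$, inputs $s_i=xe_i$ for $i\le d$ and $s_{d+1}=s_{d+2}=0$, and $x>2d\delta+\epsilon$ (note the extra $+\epsilon$, needed precisely to beat the agreement slack), one considers an execution in which process $d+2$ is silent and both processes $1$ and $2$ decide after hearing $s_1,\dots,s_{d+1}$. Each process $i$ must output a point of $\Psi_i(S)=\bigcap_{j\neq i,\,1\leq j\leq d+1}H_{(\delta,\infty)}(S^j)$; these sets are individually non-empty ($xe_1\in\Psi_1$, for instance), but every point of $\Psi_1$ has first coordinate at least $x-(2d-1)\delta$, while every point of $\Psi_2$ has first coordinate at most $\delta$ (process $2$, unlike process $1$, must tolerate the hypothesis that process $1$ is the faulty one). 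Hence any two valid outputs differ by more than $\epsilon$ in $L_\infty$, contradicting $\epsilon$-agreement between two non-faulty processes in the \emph{same} execution. The asymmetry between the suspicion sets of two different deciders --- not emptiness of any single intersection --- is the missing idea in your proposal.
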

The proof is provided in the Appendix \ref{app:c}.

\section{Relationship with Tverberg's Theorem}

\begin{theorem}
	(Tverberg's Theorem\cite{tverberg1966generalization}) For any integer $f\geq1$, and for every multiset $Y$ containing at least $(d+1)f+1$ points in $\mathbb{R}^d$, there exists a partition $Y_1,\cdots, Y_{f+1}$ of $Y$ into $f+1$ non-empty multisets such that $\bigcap_{l=1}^{f+1}\mathcal{H}(Y_l)\neq\emptyset$.
\end{theorem}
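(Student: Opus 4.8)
The plan is to prove Tverberg's Theorem by reducing it to the Colorful Carath\'eodory theorem through Sarkaria's tensor trick, which is the cleanest known route. Writing $r = f+1$, the goal is to partition a multiset $Y = \{a_1, \ldots, a_N\}$ of $N = (d+1)(r-1)+1 = (d+1)f+1$ points of $\mathbb{R}^d$ into $r$ parts whose convex hulls share a common point. I would take as the single external ingredient the Colorful Carath\'eodory theorem: if $C_1, \ldots, C_{m+1}$ are point sets (``color classes'') in $\mathbb{R}^m$, each containing the origin in its convex hull, then some rainbow transversal---one point chosen from each class---also contains the origin in its convex hull. This would be either cited or established separately by the standard nearest-simplex argument.

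First I would homogenize each input by setting $\hat a_i = (a_i, 1) \in \mathbb{R}^{d+1}$, and fix $r$ auxiliary vectors $v_1, \ldots, v_r \in \mathbb{R}^{r-1}$ that are the vertices of a simplex centered at the origin, i.e. satisfying $\sum_{j=1}^r v_j = 0$ with every $r-1$ of them linearly independent (so that their only linear dependence is $\sum_j v_j = 0$). For each $i$ I would form the color class $C_i = \{\, \hat a_i \otimes v_j : 1 \le j \le r \,\}$, a set of $r$ points in $\mathbb{R}^{(d+1)(r-1)} = \mathbb{R}^{N-1}$. The key verification is that the origin lies in $\mathcal{H}(C_i)$ for every $i$, which is immediate from the equal-weight combination $\frac1r \sum_j \hat a_i \otimes v_j = \hat a_i \otimes \bigl(\frac1r \sum_j v_j\bigr) = 0$.

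Next, since we now have $N = (N-1)+1$ color classes in $\mathbb{R}^{N-1}$, Colorful Carath\'eodory yields a selection $j(i) \in \{1, \ldots, r\}$ for each $i$ together with convex weights $\lambda_i \ge 0$, $\sum_i \lambda_i = 1$, such that $\sum_i \lambda_i\, (\hat a_i \otimes v_{j(i)}) = 0$. Grouping this sum by color gives $\sum_{j=1}^r b_j \otimes v_j = 0$, where $b_j = \sum_{i : j(i)=j} \lambda_i \hat a_i \in \mathbb{R}^{d+1}$. Reading the relation coordinate by coordinate in the $\mathbb{R}^{d+1}$ factor and invoking the fact that the only dependence among the $v_j$ is $\sum_j v_j = 0$ forces $b_1 = \cdots = b_r =: b$. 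Then the partition $Y_j = \{a_i : j(i) = j\}$ does the job: the last coordinate of $b_j = b$ gives $\sum_{i:j(i)=j}\lambda_i = 1/r$ for each $j$ (so every $Y_j$ is non-empty), and rescaling the weights of each color by $r$ exhibits the point $p \in \mathbb{R}^d$ consisting of the first $d$ coordinates of $r\,b$ as a common element of every $\mathcal{H}(Y_j)$.

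The hard part will be the Colorful Carath\'eodory theorem itself, which must either be cited or established by the standard argument (take a rainbow simplex whose hull is nearest to the origin, and show via a point exchange that strictly decreases this distance that the hull must in fact contain the origin). The only other delicate point is the decoding step---confirming that $\sum_j b_j \otimes v_j = 0$ together with the unique dependence among the $v_j$ forces all $b_j$ equal---while checking non-negativity of the extracted weights and non-emptiness of each part are routine. Once the colorful lemma is in hand, the tensor construction reduces everything else to short linear-algebra verifications.
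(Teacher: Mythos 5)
Your proposal is correct, but note that the paper does not prove this statement at all: Tverberg's Theorem appears there purely as a cited classical result (Tverberg, 1966), used as a black box to discuss tightness of the bound $(d+1)f+1$. What you have reconstructed is the standard Sarkaria argument, and it checks out in every step: the lift $\hat a_i=(a_i,1)$, the choice of $v_1,\ldots,v_r\in\mathbb{R}^{r-1}$ whose only linear dependence is $\sum_j v_j=0$, the observation that $0\in\mathcal{H}(C_i)$ via the equal-weight combination, the dimension count $(d+1)(r-1)=N-1$ making Colorful Carath\'eodory applicable, and the decoding step (reading $\sum_j b_j\otimes v_j=0$ coordinate-by-coordinate in the $\mathbb{R}^{d+1}$ factor, so that each coordinate vector $(b_1[k],\ldots,b_r[k])$ lies in the one-dimensional kernel spanned by $(1,\ldots,1)$, forcing $b_1=\cdots=b_r$) are all sound, as is the extraction of the common point and the non-emptiness of the parts from the last coordinate being $1/r$. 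Two minor points: first, the theorem as stated allows $|Y|$ to exceed $(d+1)f+1$ and asks for a partition of all of $Y$; your argument uses exactly $(d+1)f+1$ points, so you should add that any surplus points can be thrown into (say) $Y_1$, which only enlarges $\mathcal{H}(Y_1)$ and preserves the non-empty intersection. Second, since $Y$ is a multiset, repeated points are handled automatically because you index the points rather than treat them as a set, but it is worth saying so. Compared with Tverberg's original 1966 proof (the one the paper cites), which is an intricate elementary moving-points argument, your route is shorter and conceptually cleaner, at the cost of importing Colorful Carath\'eodory --- which, as you note, must itself be proved (the nearest-rainbow-simplex exchange argument) or cited.
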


Since $\mathcal{H}(Y_l)\subseteq \mathcal{H}_k(Y_l)$ and 
$\mathcal{H}(Y_l)\subseteq {H}_{(\delta,p)}(Y_l)$,
it follows that Tverberg's theorem remains valid even if $\mathcal{H}(Y_l)$
is replaced in the statment of the theorem by
${H}_k(Y_l)$ or ${H}_{(\delta,p)}(Y_l)$.

The lower bound of $(d+1)f+1$ in Tverberg's theorem above is tight in the sense
that for $n\leq (d+1)f$, there exists a set of $n$ points such that for
every partition of the $n$ points into $f+1$ non-empty multisets
$Y_1,\cdots, Y_{f+1}$, $\bigcap_{l=1}^{f+1}\mathcal{H}(Y_l)=\emptyset$.  
Our impossibility results in the previous sections imply that the bound $(d+1)f+1$
remains tight even if we replace 
$\mathcal{H}(Y_l)$ by
${H}_k(Y_l)$ or ${H}_{(\delta,p)}(Y_l)$.
In particular, for $n\leq (d+1)f$, there exists a set of $n$ points such that for
every partition of the $n$ points into $f+1$ non-empty multisets
$Y_1,\cdots, Y_{f+1}$, $\bigcap_{l=1}^{f+1}{H}_k(Y_l)=\emptyset$.  
Similarly, for $n\leq (d+1)f$, there exists a set of $n$ points such that for
every partition of the $n$ points into $f+1$ non-empty multisets
$Y_1,\cdots, Y_{f+1}$, $\bigcap_{l=1}^{f+1}{H}_{(\delta,p)}(Y_l)=\emptyset$.

\section{Input-Dependent $\delta$ for $\boldsymbol{(\delta, p)}$-Relaxed Consensus in Synchronous Systems}

In the previous section, we showed that the tight necessary and sufficient condition for ${(\delta, p)}$-Relaxed Byzantine Vector Consensus
with constant $\delta>0$ is identical to that with $\delta=0$, when
$\delta$ is a constant. That is, the relaxation does not help. In this section, we show that for a {\em given} set of inputs, if we choose a relaxation parameter $\delta$ tha depends on the inputs themselves, then
the $(\delta, p)$-Relaxed Exact BVC problem is solvable with fewer than $(d+1)f+1$ processes, when $d>2$.

In particular, we define an input-dependent $\delta$ as follows. Let $v_i$ be the input at a non-faulty process $i$,
and let $I$ be the multiset of inputs at the non-faulty processes.
Define $E_+$ as the set of edges between the inputs at the non-faulty processes, where each input is viewed as
a point in the $d$-dimensional space.
Then, we require that input-dependent $\delta$ must be bounded as follows:
$$
\delta \leq \kappa(n,f,d,p) \max_{e\in E_+} \|e\|_p,
$$
where $\kappa(n,f,d,p)$ is a finite constant that may depend on number of processes $n$, number of failures $f$, dimension of the inputs $d$ and $L_p$ norm, but not on the inputs.
Intuitively, if the inputs at the non-faulty processes are far apart, then the above constraint allows the output to be farther away from the convex hull of the non-faulty inputs.

It is known that $n\geq 3f+1$ is the lower bound on the number of processes to achieve $(0,p)$-consensus \cite{lynch1989hundred}.
Similarly, we can also show the following.
\begin{lemma}
\label{lemma:imposs}
Input-dependent $(\delta,p)$-consensus is impossible with $n\leq 3f$.
\end{lemma}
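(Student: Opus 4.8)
The plan is to reduce the impossibility to the classical fact that exact Byzantine consensus on a common value is impossible when $n \le 3f$, and to exploit the specific form of the input-dependent bound $\delta \le \kappa(n,f,d,p)\,\max_{e\in E_+}\|e\|_p$ to force the relaxation to vanish on precisely the instances used in that classical argument. The key observation is that $\max_{e\in E_+}\|e\|_p = 0$ exactly when all non-faulty inputs coincide (for any $p \ge 1$), so on every instance in which the non-faulty processes all hold the same input $v$, the bound gives $\delta = 0$; since $\mathcal{H}(\{v,\ldots,v\}) = \{v\}$, the $(\delta,p)$-relaxed validity condition degenerates to the exact requirement that the output equal $v$. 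Thus, on equal-input instances, input-dependent $(\delta,p)$-consensus imposes exactly the validity of ordinary exact consensus, and we may hope to invoke the standard lower bound \cite{lamport1982byzantine,lynch1989hundred}.

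First I would reduce to $n = 3f$: given $n \le 3f$ processes, partition them into three nonempty groups $G_1, G_2, G_3$ with $|G_i| \le f$ (for $n \ge 3$; the cases $n \le 2$ are classical). Because each group has at most $f$ members, the adversary may corrupt an entire group, so each group can be treated as a single ``virtual'' process that is faulty exactly when its group is corrupted. An algorithm solving the problem for the $n$ real processes then yields a solution to a three-process, single-fault problem in which validity is required only when the two honest virtual processes agree on their input.

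Next I would run the standard hexagon (Fischer--Lynch--Merritt style) indistinguishability argument on these three virtual processes, carried out in a single coordinate so that it is dimension-agnostic. Arrange six copies of the virtual processes around a ring, assign the scalar inputs $0,0,0,1,1,1$, and let each copy run its group's code on its ring-neighbors. The two grounding executions of the real three-process system are $E_0$, in which the two honest virtual processes both have input $0$ (so by the degenerate validity above with $\delta = 0$ the output must be $0$), and $E_1$, in which both honest inputs are $1$ (forcing output $1$). Every triple of consecutive ring copies is indistinguishable to its endpoints from one of these grounding executions with the far side playing the faulty process, while every adjacent pair of copies is indistinguishable from a pair of honest processes and so must agree. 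Chaining agreement around the ring propagates the value $0$ from the all-$0$ side to the all-$1$ side, contradicting the value $1$ forced there by validity.

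The main obstacle, and the one point requiring care, is exactly the interaction between the relaxation and the classical argument: because $\delta$ can be large whenever the honest inputs differ, validity is \emph{useless} on mixed-input instances, so I must ensure that validity is invoked only in the grounding executions $E_0$ and $E_1$, where the honest inputs are equal and hence $\delta = 0$. Confirming that the hexagon argument uses validity solely at these equal-input configurations (and uses only agreement everywhere else) is what makes the relaxation harmless here. Finally, since the equivalence ``$\max_{e\in E_+}\|e\|_p = 0$ iff all honest inputs coincide'' holds for every $p \ge 1$, and since the construction lives in one coordinate, the conclusion holds for all $p \ge 1$ and all $d \ge 1$.
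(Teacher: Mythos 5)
Your proposal is correct and takes essentially the same route as the paper: the paper's proof is also the Fischer--Lynch--Merritt ring/hexagon argument, hinging on the same key observation that when all non-faulty inputs coincide the bound $\delta \leq \kappa(n,f,d,p)\max_{e\in E_+}\|e\|_p$ forces $\delta=0$ and hence exact validity, with validity invoked only in the equal-input scenarios and agreement in the mixed-input one, followed by the standard simulation (grouping) argument for $f>1$. The only cosmetic difference is ordering --- the paper proves the $n=3$, $f=1$ case first and then simulates for larger $f$, whereas you group into three virtual processes up front --- which does not change the substance.
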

The proof is similar to the proof in \cite{lynch1989hundred}, and is provided in the Appendix \ref{app:a}.

In the remaining discussion, we assume $n\geq 3f+1$. For
$d=1,2$, $n\geq 3f+1$ suffices to solve $(0,p)$-consensus. Thus, hereafter only $d\geq 3$ is interesting. However,
some of the claims below sometime hold for $d<3$ too, and therefore, sometimes we will also allow smaller values of $d$.

We will first
derive bounds on input-dependent $\delta$ for ${(\delta, 2)}$-Relaxed Byzantine Vector Consensus with $f=1$,
and then extend them to $(\delta,p)$-consensus with other values of $f$ and $p$.
We prove these results constructively, by showing that the algorithm presented below can
solve the problem
under a certain constraint on input-dependent $\delta$. The algorithm below is a modification of 
\textsl{Exact BVC algorithm} in \cite{vaidya2013byzantine} to incorporate $(\delta,p)$-relaxation.

\paragraph{Algorithm ALGO:}
\begin{itemize}
\item Step 1: Each process $i$ performs a Byzantine broadcast of its $d$-dimensional input $v_i$. Byzantine broadcast of each element of the vector $v_i$ can be 
performed separately by using any Byzantine broadcast algorithm, such as \cite{lamport1982byzantine}.
$n\geq 3f+1$ suffices for the correctness of Byzantine broadcast in a completely connected network.
At the completion of Step 1, each process will receive the multiset $S=\{a_i~|~1\leq i\leq n\}$, where
for a non-faulty process $i$, $a_i=v_i$, the input of process $i$, and for a faulty process $j$, $a_j$ may be any arbitrary point
in the $d$-dimensional space. 
Importantly, all non-faulty processes obtain identical set $S$.

The points in $S$ received from non-faulty processes are said to {\em non-faulty inputs},
and the remaining points are said to be {\em faulty inputs}.

\item Step 2: Each process
determines the smallest value $\delta$ such that
$\Gamma_{(\delta,2)}(S)=\bigcap_{T\subseteq S, |T|=|S|-f}H_{(\delta,2)}(T)$
is non-empty, and for this value of $\delta$, the process deterministically
chooses
a point in $\Gamma_{(\delta,2)}(S)$ as its output. All processes use identical
deterministic function to choose the output from $\Gamma_{(\delta,2)}(S)$.
\end{itemize}
Given set $S$,
let $\delta^*(S)$ denote the smallest value of $\delta$ for which 
$\Gamma_{(\delta,2)}(S)$ is non-empty. 
Although $\delta^*(S)$ will depend on $S$, the goal here is to determine a bound on
$\delta^*(S)$ that depends only on the inputs at the non-faulty processes.
In particular, we will first prove the claim below (in Theorem \ref{e_min_basic}).

\begin{itemize}
\item
Recall that the input at each process is a point in the $d$-dimensional Euclidean space. Let
$E_+$ denote the set of edges between the inputs at the non-faulty processes. Then,
$$\delta^*(S)< \min\left(\frac{\min_{e\in E_+}\|e\|_2}{2},~~ \frac{\max_{e\in E_{+}}\|e\|_2}{d-1}\right)$$
That is, $\delta^*(S)$ is upper bounded as shown above, and
${(\delta^*(S), 2)}$-Relaxed Byzantine Vector Consensus is achievable with  $n=d+1$, $f=1$ and $d\geq 3$.
Observe that although $\delta^*(S)$ will depend on the vectors in
$S$ corresponding to the faulty processes, the upper bound above does not depend
on those vectors.
\end{itemize}
Subsequently, we will obtain a bound for one case of $f\geq 2$, and propose a conjecture for the remaining case of $f\geq 2$.
In the rest of the discussion, we assume $n\geq 3f+1$, without necessarily stating
this explicitly again. The assumption of $n\geq 3f+1$ is necessary
for correctness of Byzantine broadcast used in the algorithm above.\footnote{When
the underlying network is a reliable broadcast channel, instead of
a point-to-point network, $n$ does not need to be exceed $3f$.
For such a broadcast network, the bounds for $d=2$ can be improved
similar to the bounds fo $d\geq 3$ derived in this paper.}

\subsection{Useful Lemmas}

In this section, we assume that $f=1$ and $n=d+1$.
Let the set of vectors obtained in Step 1 of algorithm ALGO be $S=\{a_1,a_2,\cdots,a_{d+1}\}$.
We consider the special case when the $d+1$ vectors in set $S$
are affinely independent.\footnote{When the vectors are not affinely independent, it is easy to show that $\delta=0$ can be
achieved, that is, $\delta^*=0$ (we will discuss this case later).}
Since the $d$-dimensional vectors $a_1,\cdots,a_{d+1}$ are affinely independent, they form a simplex.
Also, the vectors $b_i=\{a_i-a_{d+1}\}$, $1\leq i\leq d$, are linearly independent.

Let matrix $A=[a_1-a_{d+1},\cdots,a_d-a_{d+1}]$.
Consider matrix $B=[b_1,\cdots,b_{d}]$ such that
$B = (A^{-1})^T$. Define $b_{d+1}=-\sum_{i=1}^d b_i$. 
$\delta_{ij}$ is Kronecker's delta. Thus, $\delta_{ij}=1$ if and only if $i=j$, and 0 otherwise.
For $d$-dimensional vectors $a,b$,
$\langle a,b\rangle$ denotes their dot product.

The following two lemmas were proved by Akira \cite{akira2014radii}.

	\begin{lemma}\cite{akira2014radii}\label{akira2014radii_1}
		Let $a_1,\cdots,a_{d+1}$ and $b_1,\cdots,b_{d+1}$ be as defined above. Then, $\langle a_i-a_j,b_k\rangle=\delta_{ik}-\delta_{jk}$.
	\end{lemma}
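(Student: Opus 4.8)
The plan is to recognize that the entire lemma reduces to the single biorthogonality identity encoded in $B = (A^{-1})^T$, and then to bootstrap from a base case over the index range $\{1,\ldots,d\}$ to the full statement over $\{1,\ldots,d+1\}$ using nothing more than bilinearity of the dot product together with the defining relation $b_{d+1}=-\sum_{m=1}^d b_m$. I would stress at the outset that the operative definition of the $b_k$ is the matrix one ($b_k$ is the $k$-th column of $B=(A^{-1})^T$), since this is what makes the dual-basis structure available.

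First I would make the base relation explicit. From $B=(A^{-1})^T$ we get $B^T=A^{-1}$, hence $B^T A = I$. Reading off the $(k,i)$ entry of $B^T A$, and using that the $i$-th column of $A$ is $a_i-a_{d+1}$ while the $k$-th column of $B$ is $b_k$, yields $\langle b_k, a_i-a_{d+1}\rangle = \delta_{ki}$ for all $1\le i,k\le d$. This is the only place the affine independence of $a_1,\ldots,a_{d+1}$ enters: it guarantees $A$ is invertible, so that $B$ is well defined. Next I would widen the range of $i$ and $j$ while keeping $k\le d$: writing $a_i-a_j=(a_i-a_{d+1})-(a_j-a_{d+1})$ and applying bilinearity gives $\langle a_i-a_j, b_k\rangle = \langle a_i-a_{d+1}, b_k\rangle - \langle a_j-a_{d+1}, b_k\rangle = \delta_{ik}-\delta_{jk}$, where with the convention $a_{d+1}-a_{d+1}=0$ the formula stays correct when $i$ or $j$ equals $d+1$, since both the inner product and $\delta_{(d+1)k}$ vanish for $k\le d$.

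Finally I would handle $k=d+1$ via $b_{d+1}=-\sum_{m=1}^d b_m$: bilinearity gives $\langle a_i-a_j, b_{d+1}\rangle = -\sum_{m=1}^d(\delta_{im}-\delta_{jm})$, and since $\sum_{m=1}^d \delta_{im}$ equals $1$ when $i\le d$ and $0$ when $i=d+1$ (that is, $\sum_{m=1}^d\delta_{im}=1-\delta_{i(d+1)}$ for every $i\in\{1,\ldots,d+1\}$), this collapses to $\delta_{i(d+1)}-\delta_{j(d+1)}$, matching the claim. The only real obstacle is index bookkeeping around the distinguished value $d+1$ — verifying the convention $\sum_{m=1}^d\delta_{im}=1-\delta_{i(d+1)}$ and confirming the base identity degenerates correctly when an index hits $d+1$ — but no substantive difficulty arises, as every case ultimately rests on the single duality relation $B^T A=I$.
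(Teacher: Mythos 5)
Your proof is correct. Note that the paper itself gives no proof of this lemma --- it is cited directly from \cite{akira2014radii} --- so there is no in-paper argument to compare against; your derivation, which reads the biorthogonality relation $B^{T}A=I$ off the definition $B=(A^{-1})^{T}$, extends it to all indices $i,j\le d+1$ by bilinearity, and handles $k=d+1$ via $b_{d+1}=-\sum_{m=1}^{d}b_{m}$ together with the identity $\sum_{m=1}^{d}\delta_{im}=1-\delta_{i(d+1)}$, is the standard dual-basis argument and is complete, including the correct observation that affine independence of the $a_i$ is exactly what makes $A$ invertible.
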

	
	\begin{lemma}\cite{akira2014radii}\label{akira2014radii_2}
		Let $r$ be the radius of the inscribed sphere in the simplex formed by $a_1,\cdots,a_{d+1}$. Then,
 $r=\frac{1}{\sum_{i=1}^{d+1}\|b_i\|}$.
	\end{lemma}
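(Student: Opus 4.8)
The plan is to identify the incenter of the simplex through \emph{barycentric coordinates} and to show that, after scaling by $\|b_k\|$, the signed distance from an arbitrary point to the facet opposite $a_k$ equals that point's $k$-th barycentric coordinate. Once this identity is established, the incenter---the unique interior point equidistant from all $d+1$ facets---can be read off immediately, and summing the barycentric coordinates (which must equal $1$) yields the claimed formula.

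First I would use Lemma \ref{akira2014radii_1} to show that $b_k$ is orthogonal to the facet $F_k=\mathcal{H}(\{a_i : i\neq k\})$ opposite vertex $a_k$: for any two indices $i,j$ both different from $k$ we have $\langle a_i-a_j, b_k\rangle=\delta_{ik}-\delta_{jk}=0$, so $b_k$ is perpendicular to every edge of $F_k$, hence to the hyperplane spanned by $F_k$. Consequently, fixing any $a_j\in F_k$ (that is, any $j\neq k$), the quantity $\langle x-a_j, b_k\rangle$ is independent of the choice of $j$ and equals $\|b_k\|$ times the signed distance from $x$ to that hyperplane.

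Next I would evaluate this quantity at a general point written in barycentric coordinates, $x=\sum_{i=1}^{d+1}\lambda_i a_i$ with $\sum_i\lambda_i=1$ and $\lambda_i\geq 0$. Writing $a_j=\sum_i\lambda_i a_j$ so that $x-a_j=\sum_i\lambda_i(a_i-a_j)$ and then applying Lemma \ref{akira2014radii_1} termwise gives
\[
\langle x-a_j, b_k\rangle=\sum_{i=1}^{d+1}\lambda_i(\delta_{ik}-\delta_{jk})=\lambda_k-\delta_{jk}=\lambda_k,
\]
since $j\neq k$. Thus the (unsigned) distance from $x$ to the hyperplane of $F_k$ equals $\lambda_k/\|b_k\|$, which is positive for interior points, confirming that $b_k$ points into the simplex. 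The incenter $c$ is the unique interior point whose distances to all $d+1$ facets equal the inradius $r$; writing $c$ in barycentric coordinates $(\lambda_1,\dots,\lambda_{d+1})$, the equidistance condition reads $\lambda_k/\|b_k\|=r$, i.e.\ $\lambda_k=r\|b_k\|$ for every $k$. Imposing $\sum_k\lambda_k=1$ then yields $r\sum_{k=1}^{d+1}\|b_k\|=1$, which is exactly $r=1/\sum_{k=1}^{d+1}\|b_k\|$.

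The main thing to get right is the clean identity $\langle x-a_j, b_k\rangle=\lambda_k$: it is what converts the geometric inradius question into a one-line linear-algebra computation, and it depends both on the orthogonality of $b_k$ to $F_k$ (so that the inner product is a genuine scaled distance) and on the biorthogonality relation of Lemma \ref{akira2014radii_1}. The only other point needing a word of care is that this relation must also hold for the index $k=d+1$; this is guaranteed by the definition $b_{d+1}=-\sum_{i=1}^d b_i$ (which forces $\sum_{k=1}^{d+1}b_k=0$ and makes the treatment of all $d+1$ facets uniform). Beyond that I expect no substantive obstacle; an alternative volume route---decomposing the simplex from the incenter and combining $V=\tfrac1d\,r\sum_k V_k$ with $V=\tfrac1d\,h_k V_k$ and the heights $h_k=1/\|b_k\|$---also works, but the barycentric argument is shorter and uses Lemma \ref{akira2014radii_1} most directly.
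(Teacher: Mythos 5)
Your proof is correct. There is, however, no in-paper argument to compare it against: the paper states this lemma as a cited result (``The following two lemmas were proved by Akira \cite{akira2014radii}'') and gives no proof, so your barycentric-coordinate derivation supplies a self-contained justification of the quoted fact. The core identity is right: for $x=\sum_{i=1}^{d+1}\lambda_i a_i$ with $\sum_i\lambda_i=1$ and $j\neq k$, Lemma~\ref{akira2014radii_1} gives $\langle x-a_j,b_k\rangle=\lambda_k$, and since $b_k$ is orthogonal to the facet opposite $a_k$ (again by Lemma~\ref{akira2014radii_1}, as you note), the distance from $x$ to that facet's hyperplane is $\lambda_k/\|b_k\|$. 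Equidistance then becomes the linear system $\lambda_k=r\|b_k\|$, $\sum_k\lambda_k=1$, which immediately yields $r=1/\sum_{k=1}^{d+1}\|b_k\|$. You also correctly flag the one index needing care, $k=d+1$, for which the biorthogonality relation holds by the definition $b_{d+1}=-\sum_{i=1}^{d}b_i$.

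Two points deserve one explicit sentence each rather than being left implicit. First, $\|b_k\|>0$ for every $k$: for $k\leq d$ this holds because $B$ is invertible, and for $k=d+1$ because $b_{d+1}$ is a nonzero (being a nontrivial combination of linearly independent vectors) vector --- the paper itself records this in the proof of Lemma~\ref{e_max_lemma}; positivity is what lets you divide by $\|b_k\|$ and guarantees the equidistant point has strictly positive barycentric coordinates, hence is interior. Second, to identify your equidistant point with the center of the \emph{inscribed} sphere, it helps to observe that the simplex is the intersection of the half-spaces $\{x:\langle x-a_j,b_k\rangle\geq 0\}$, so a ball $B(c,\rho)$ lies in the simplex iff $\rho\leq\min_k \lambda_k(c)/\|b_k\|$; maximizing this minimum over $c$ forces all $d+1$ distances equal, which is exactly your characterization. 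Your closing remark about the alternative volume route is apt: the decomposition $V=\tfrac{1}{d}\sum_i S_i\, r$ is precisely the device the paper uses later, in the proof of Theorem~\ref{e_max_general}, so either argument would be consistent with the paper's toolkit, and yours uses Lemma~\ref{akira2014radii_1} more directly.
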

	
	\begin{lemma}\label{delta_lemma}
		Let $r$ be the radius of the inscribed sphere of the simplex formed by
	points in $S=\{a_1,\cdots,a_{d+1}\}$.
	Then, $\delta^*(S)=r$.
	\end{lemma}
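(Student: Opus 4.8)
The plan is to translate membership in $\Gamma_{(\delta,2)}(S)$ into linear inequalities on the barycentric coordinates of a candidate output, and then to identify the optimal $\delta$ as the inradius $r$ using Lemma \ref{akira2014radii_2}. Since $f=1$ and $|S|=d+1$, every subset $T\subseteq S$ with $|T|=|S|-f=d$ has the form $T_i=S\setminus\{a_i\}$, so $\Gamma_{(\delta,2)}(S)=\bigcap_{i=1}^{d+1}H_{(\delta,2)}(T_i)$. Each $T_i$ consists of $d$ affinely independent points, hence its affine hull is a hyperplane $H_i$, and $\mathcal{H}(T_i)$ is the corresponding facet of the simplex.

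First I would fix coordinates. Writing any $u$ uniquely (by affine independence) as $u=\sum_{i=1}^{d+1}\lambda_i a_i$ with $\sum_i\lambda_i=1$, I would use Lemma \ref{akira2014radii_1} to check that $b_i$ is orthogonal to $H_i$ and that $\lambda_i=\langle u-a_{d+1},\,b_i\rangle$ for $i\le d$; the index $d+1$ is handled by an analogous computation against $a_1$ using $b_{d+1}=-\sum_{i\le d}b_i$. This yields the distance formula $\mathrm{dist}(u,H_i)=|\lambda_i|/\|b_i\|$, which drives both bounds.

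For the lower bound $\delta^*(S)\ge r$, take any $u\in\Gamma_{(\delta,2)}(S)$. By Definition \ref{def:Hdelta}, $u$ lies within distance $\delta$ of $\mathcal{H}(T_i)$ for every $i$; since $\mathcal{H}(T_i)\subseteq H_i$, it lies within distance $\delta$ of the hyperplane $H_i$ as well, so $|\lambda_i|\le\delta\|b_i\|$. Summing over $i$ and invoking Lemma \ref{akira2014radii_2} gives
$$1=\sum_{i=1}^{d+1}\lambda_i\le\sum_{i=1}^{d+1}|\lambda_i|\le\delta\sum_{i=1}^{d+1}\|b_i\|=\frac{\delta}{r},$$
whence $\delta\ge r$, so no $\delta<r$ can make $\Gamma_{(\delta,2)}(S)$ nonempty. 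For the matching upper bound I would exhibit the incenter $c$, whose barycentric coordinates are $\lambda_i=r\|b_i\|>0$ (one verifies $\sum_i\lambda_i=1$ via Lemma \ref{akira2014radii_2}). The inscribed sphere $B(c,r)$ is tangent to each facet $\mathcal{H}(T_i)$ at a point $p_i$ lying in that facet, so $\mathrm{dist}(c,\mathcal{H}(T_i))=\|c-p_i\|=r$, placing $c$ in $H_{(r,2)}(T_i)$ for every $i$, hence in $\Gamma_{(r,2)}(S)$; thus $\delta^*(S)\le r$, and combining the bounds gives $\delta^*(S)=r$.

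I expect the main obstacle to be the upper bound. The lower bound only needs the distance to the hyperplane $H_i$, which is exactly the quantity controlled by the formula above, but the upper bound requires the distance to the \emph{bounded} facet $\mathcal{H}(T_i)$ rather than to its affine hull. The substantive point is that the orthogonal projection of the incenter onto $H_i$ — equivalently, the tangency point of the inscribed sphere — genuinely lands inside the facet, so that the facet distance equals the hyperplane distance $r$ for this particular point. This is the standard tangency property of the inscribed sphere of a simplex, which I would either cite or verify directly from $\lambda_i>0$, thereby completing the argument.
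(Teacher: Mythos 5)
Your proof is correct, and its overall shape matches the paper's: the incenter witnesses $\delta^*(S)\leq r$, and a min--max argument shows no point can be within distance less than $r$ of all facets. The difference is in how the lower bound is discharged, and here your version is genuinely stronger than the paper's. The paper's proof of $\delta^*(S)\geq r$ is a one-line assertion: a point at distance less than $r$ from every facet ``contradicts the fact that $r$ is the radius of the inscribed sphere.'' Read literally this does not follow, since the inradius is defined by a max--min quantity (the largest ball inside the simplex), whereas what is needed is the min--max statement $\min_{p}\max_{i} dist(p,\mathcal{H}(T_i))=r$, which must hold even for candidate points $p$ lying outside the simplex (where barycentric coordinates go negative). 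Your computation supplies exactly this missing step: from $|\lambda_i|\leq \delta\|b_i\|$ and $\sum_i \lambda_i = 1$ you get $1\leq \delta\sum_i\|b_i\|=\delta/r$ by Lemma \ref{akira2014radii_2}, handling arbitrary points in one stroke. The remaining item you flagged---that the tangency point lies in the facet, not merely on the hyperplane---closes just as you suggest: the barycentric coordinates of $c - r\,b_i/\|b_i\|$ are $\lambda_j = r\|b_j\| - r\langle b_i,b_j\rangle/\|b_i\|$, which is $0$ for $j=i$ and nonnegative for $j\neq i$ by Cauchy--Schwarz, so the projection of the incenter onto $H_i$ indeed lands in $\mathcal{H}(T_i)$ and $dist(c,\mathcal{H}(T_i))=r$. (The paper leaves this to ``the definition of the inscribed sphere.'') So your route is the paper's route carried out rigorously, and it buys a proof of the lower bound that the paper only gestures at.
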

	
	\begin{proof}
In Step 2 of the algorithm above, recall that
$\Gamma_{(\delta,2)}(S)=\bigcap_{T\subseteq S, |T|=|S|-f}H_{(\delta,2)}(T)$.
Presently, $f=1$ and $S=\{a_1,a_2,\cdots,a_{d+1}\}$.
Thus, the convex hull of each subset $T\subset S$ such that $|T|=|S|-1=d$ is simply a facet 
of the simplex formed by $S=\{a_1,a_2,\cdots,a_{d+1}\}$.
Therefore, it follows that any point in $\Gamma_{(\delta,2)}(S)=\bigcap_{T\subseteq S, |T|=|S|-f}H_{(\delta,2)}(T)$
must be at distance at most $\delta$ from each facet of the simplex. Then, by the
definition of the inscribed sphere, $(r,2)$-consensus is achievable. Thus, $\delta^*(S)\leq r$.

Now suppose that $\delta^*<r$. This means there exists a point $p$ such that the distance from $p$ to all the facets of the simplex is less than $r$. This contradicts with the fact that $r$ is the radius of the inscribed sphere. Therefore $\delta^*(S)\not< r$. That is, $\delta^*(S)=r$.
	\end{proof}
	
	\begin{lemma}\label{radius_lemma}
		Assume $d\geq 2$. Let $r$ be the radius of the inscribed sphere of the simplex
formed by $S=\{a_1,\cdots,a_{d+1}\}$.
Let $\pi_k$ denote the facet of the simplex that contains $\{a_i~|~i\neq k,~1\leq i\leq d+1\}$ (i.e.,
all vertices except $a_k$),  $k=1,\cdots,d+1$.
Then $\pi_k$ itself is a simplex in a $(d-1)$-dimensionsional subspace.
Let $r_k$ be the radius of the $(d-1)$-dimensional inscribed sphere of $\pi_k$
in this $(d-1)$-dimensionsional subspace
 Then, $r<\min_{1\leq k\leq d+1}\,{r_k}$.
	\end{lemma}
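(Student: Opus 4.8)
The plan is to apply the inradius formula of Lemma~\ref{akira2014radii_2} to both the full simplex and to each facet, and to relate the two families of ``$b$-vectors'' by orthogonal projection onto the facet hyperplane. The key observation is that the $b$-vectors of the facet $\pi_k$ are exactly the projections of the $b_i$ ($i\neq k$) onto $H_k$, and since projection cannot increase length, the facet's denominator in the inradius formula is smaller, forcing $r_k>r$.

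First I would record two facts about the $b_i$. By the definition $b_{d+1}=-\sum_{i=1}^d b_i$ we have $\sum_{i=1}^{d+1}b_i=0$, and each $b_k\neq 0$ since the $b_1,\dots,b_d$ are linearly independent. Moreover $b_k$ is normal to the hyperplane $H_k$ spanned by $\pi_k$: for any $i,j\neq k$, Lemma~\ref{akira2014radii_1} gives $\langle a_i-a_j,\,b_k\rangle=\delta_{ik}-\delta_{jk}=0$, and the differences $a_i-a_j$ with $i,j\neq k$ span the $(d-1)$-dimensional direction space of $H_k$ (because $\pi_k$ is a nondegenerate $(d-1)$-simplex), so $b_k\perp H_k$.

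Next, for each $i\neq k$ I would define $b_i^{(k)}=b_i-\frac{\langle b_i,b_k\rangle}{\|b_k\|^2}\,b_k$, the orthogonal projection of $b_i$ onto the direction space of $H_k$, and verify that $\{b_i^{(k)}\}_{i\neq k}$ is precisely the family of $b$-vectors of the $(d-1)$-simplex $\pi_k$ (with vertices $\{a_i:i\neq k\}$) computed inside $H_k$. Each $b_i^{(k)}$ lies in $H_k$, and using $\langle a_i-a_j,b_k\rangle=0$ for $i,j\neq k$ one gets $\langle a_i-a_j,\,b_\ell^{(k)}\rangle=\langle a_i-a_j,\,b_\ell\rangle=\delta_{i\ell}-\delta_{j\ell}$ for all $i,j,\ell\neq k$; these duality relations characterize the facet's $b$-vectors uniquely, so the projections must coincide with them. (As a consistency check, $\sum_{i\neq k}b_i^{(k)}=\mathrm{proj}_{H_k}\bigl(\sum_{i\neq k}b_i\bigr)=\mathrm{proj}_{H_k}(-b_k)=0$, matching the ``sum to zero'' property of the facet's $b$-vectors.) Lemma~\ref{akira2014radii_2} applied to $\pi_k$ then yields $r_k=1/\sum_{i\neq k}\|b_i^{(k)}\|$, while $r=1/\sum_{i=1}^{d+1}\|b_i\|$.

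Finally I would compare denominators. Orthogonal projection does not increase length, so $\|b_i^{(k)}\|\leq\|b_i\|$ for every $i\neq k$, giving $\sum_{i\neq k}\|b_i^{(k)}\|\leq\sum_{i\neq k}\|b_i\|<\sum_{i=1}^{d+1}\|b_i\|$, where the last inequality is strict because $\|b_k\|>0$. Taking reciprocals yields $r<r_k$, and since $k$ is arbitrary, $r<\min_{1\leq k\leq d+1}r_k$. I expect the main obstacle to be the identification in the third step, namely rigorously showing that the projected vectors $b_i^{(k)}$ are genuinely the intrinsic $b$-vectors of $\pi_k$ so that Lemma~\ref{akira2014radii_2} legitimately applies in the $(d-1)$-dimensional subspace; once that is secured, the norm comparison and the strictness obtained by discarding $\|b_k\|$ are routine.
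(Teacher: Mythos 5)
Your proposal is correct and follows essentially the same route as the paper: both construct the projected vectors $b_j - \frac{\langle b_j,b_k\rangle}{\|b_k\|^2}b_k$, establish $r_k = 1/\sum_{j\neq k}\|b_{jk}\|$, and conclude via $\|b_{jk}\|\leq\|b_j\|$ together with the strict inequality from $\|b_k\|>0$. The only difference is cosmetic: the paper re-derives the facet inradius formula directly (computing the incenter's barycentric weights $t_j$ and its distances to the sub-facets $\pi_{jk}$), whereas you obtain it by identifying the projections as the facet's intrinsic $b$-vectors through the duality relations and then reusing Lemma~\ref{akira2014radii_2} — a valid and slightly more modular packaging of the same step.
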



	\begin{proof}

		By Lemma \ref{akira2014radii_2}, we can write $r=1/\sum_{i=1}^{d+1}\|b_i\|$.
		
		$\pi_k$ is the facet of the simplex that contains all vertices except $a_k$. From Lemma \ref{akira2014radii_1}, we know $b_k$ is orthogonal to $\pi_k$. In order to derive $r_k$, we need to determine the distance of a point $x$ in $\pi_k$ to the face $\pi_{jk}$ consisting of all points except $a_k$ and $a_j$. We first show that $b_{jk}$ defined as $b_{jk}=b_j-\frac{\langle b_j, b_k \rangle}{\|b_k\|^2}b_k$ is orthogonal to $\pi_{jk}$ and $b_k$.

		By Lemma \ref{akira2014radii_1}, for $\forall l,m$ such that $m\neq j$, $m\neq k$, $l\neq j$, $l\neq k$ 
		$$
		\langle b_{jk},a_m-a_l\rangle=\langle b_j-\frac{\langle b_j, b_k \rangle}{\|b_k\|^2}b_k,a_m-a_l\rangle=\delta_{jm}-\delta_{jl}-\frac{\langle b_j, b_k \rangle}{\|b_k\|^2}(\delta_{km}-\delta_{kl})=0.
		$$
	Also,	
		$$
		\langle b_{jk},b_k\rangle=\langle b_j-\frac{\langle b_j, b_k \rangle}{\|b_k\|^2}b_k,b_k\rangle=\langle b_j, b_k \rangle-\langle b_j, b_k \rangle=0.
		$$
		
		Hence $b_{jk}$ is normal to $\pi_{jk}$ and $b_k$. \\

Let $x$ be the center of the inscribed sphere of $\pi_k$ in the $(d-1)$-dimensional
subspace containing $\pi_k$.
Then, $x$ is equi-distant from all $\pi_{jk}$, $j\neq k$, $1\leq j\leq d+1$.
Since $x$ is in $\pi_k$, we have $x=\sum_{i=1,\cdots,d+1,i\neq k} t_ia_i$, where $\sum_{i=1,\cdots,d+1,i\neq k}t_i=1$, $t_i\geq 0$. Then we have, for $m\neq j$ and $m\neq k$,
		\begin{equation*}
		\begin{aligned}
		r_k ~=~dist(x,\pi_{jk})&=\frac{|\langle x-a_m, b_{jk} \rangle|}{\|b_{jk}\|}\\
		&=\frac{1}{\|b_{jk}\|}\left|\left\langle\left(\left(\sum_{i=1,\cdots,d+1,i\neq k} t_ia_i\right)-\left(\sum_{i=1,\cdots,d+1,i\neq k} t_i\right)a_m\right),b_{jk} \right\rangle\right|\\
		&=\frac{1}{\|b_{jk}\|}|\sum_{i=1,\cdots,d+1,i\neq k}t_i\langle a_i-a_m,b_{jk} \rangle|\\
		&=\frac{1}{\|b_{jk}\|}\left|\sum_{i=1,\cdots,d+1,i\neq k}t_i\left\langle a_i-a_m,b_j-\frac{\langle b_j, b_k \rangle}{\|b_k\|^2}b_k \right\rangle\right|\\
		&=\frac{|t_j|}{\|b_{jk}\|}\\
		&=\frac{t_j}{\|b_{jk}\|}
		\end{aligned}
		\end{equation*}
		
		Now, 
		\begin{eqnarray}
		\|b_{jk}\|&=&\|b_j-\frac{\langle b_j, b_k \rangle}{\|b_k\|^2}b_k\| \nonumber \\
		&=&(\langle b_j-\frac{\langle b_j, b_k \rangle}{\|b_k\|^2}b_k,b_j-\frac{\langle b_j, b_k \rangle}{\|b_k\|^2}b_k \rangle )^{1/2} \nonumber \\
		&=& (\|b_j\|^2-\frac{2\langle b_j, b_k \rangle^2}{\|b_k\|^2}+\frac{\langle b_j, b_k \rangle^2}{\|b_k\|^2})^{1/2} \nonumber \\
		&=&(\|b_j\|^2-\frac{\langle b_j, b_k \rangle^2}{\|b_k\|^2})^{1/2} \nonumber \\
		&\leq& \|b_j\| \label{e:b}
		\end{eqnarray}
		Since, $r_k=\frac{t_j}{\|b_{jk}\|}$ for all $j\neq k$, $1\leq j\leq d+1$, we have
		$$r_k=\frac{\sum_{j=1,\cdots,d+1,j\neq k}t_j}{\sum_{j=1,\cdots,d+1,j\neq k}\|b_{jk}\|}=\frac{1}{\sum_{j=1,\cdots,d+1,j\neq k}\|b_{jk}\|}
		$$
		Therefore
		$$\frac{1}{r_k}=\sum_{j=1,\cdots,d+1,j\neq k}\|b_{jk}\|\leq \sum_{j=1,\cdots,d+1,j\neq k}\|b_{j}\|<\sum_{j=1}^{d+1}\|b_{j}\|=\frac{1}{r}$$ 
		where the first inequality above follows from (\ref{e:b}), and the second inequality follows
		from the fact that $B$ is invertible, and thus $\|b_k\|$ is non-zero.
		Therefore, we obtain $r<r_k$. Since for every facet $\pi_k$, the above inequality holds, we have $r<\min_{1\leq k\leq d+1}\,{r_k}$,
 completing the proof.
	\end{proof}

	\begin{lemma}\label{e_max_lemma}
	Let $d\geq 1$.
		Let $r$ be the radius of the inscribed sphere of the simplex formed by
the points in  $S=\{a_1,\cdots,a_{d+1}\}$. We have $r<\frac{\max_{e\in E}\|e\|_2}{d}$, where $E$ is the set of all edges of the simplex.
	\end{lemma}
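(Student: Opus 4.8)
The plan is to lower bound each $\|b_k\|$ in terms of the longest edge length using the two results of Akira (Lemma \ref{akira2014radii_1} and Lemma \ref{akira2014radii_2}), and then sum over $k$ and invert. Write $L = \max_{e\in E}\|e\|_2$ for the length of the longest edge of the simplex; since $a_1,\dots,a_{d+1}$ are affinely independent, every edge has positive length, so $L>0$. By Lemma \ref{akira2014radii_2} we already know $r = 1/\sum_{i=1}^{d+1}\|b_i\|$, so it suffices to show $\sum_{i=1}^{d+1}\|b_i\| > d/L$.

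The central step is to pair each vector $b_k$ with an edge incident to $a_k$. Fix $k$ with $1\leq k\leq d+1$ and choose any $j\neq k$. By Lemma \ref{akira2014radii_1}, $\langle a_k-a_j,\,b_k\rangle = \delta_{kk}-\delta_{jk} = 1$. The Cauchy--Schwarz inequality then gives $1 = \langle a_k-a_j,\,b_k\rangle \leq \|a_k-a_j\|_2\,\|b_k\|$, and since $\|a_k-a_j\|_2\leq L$ by the choice of $L$, we obtain $\|b_k\|\geq 1/L$. This holds for every $k$, so summing yields $\sum_{i=1}^{d+1}\|b_i\|\geq (d+1)/L$. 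Combining with Lemma \ref{akira2014radii_2} gives $r\leq L/(d+1)$, and since $L>0$ and $d+1>d$ we conclude $r\leq L/(d+1) < L/d = \max_{e\in E}\|e\|_2/d$, as required.

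I expect no real obstacle here: the only thing to notice is which edge to pair with each $b_k$, and the identity $\langle a_k-a_j,b_k\rangle=1$ supplied by Lemma \ref{akira2014radii_1} makes that choice immediate, after which the estimate is routine. It is worth remarking that this argument in fact establishes the sharper bound $r\leq L/(d+1)$, and the strict inequality claimed in the statement follows with room to spare from the gap between $d+1$ and $d$ (so no strictness in Cauchy--Schwarz is needed). A sanity check in the base case $d=1$, where the simplex is a segment of length $L$ and $r=L/2$, confirms $r = L/(d+1) < L/d = L$.
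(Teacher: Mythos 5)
Your proof is correct, and it follows the same basic strategy as the paper's: use Lemma \ref{akira2014radii_2} to write $r=1/\sum_{i=1}^{d+1}\|b_i\|$, then lower bound each $\|b_k\|$ via the duality relation of Lemma \ref{akira2014radii_1} together with Cauchy--Schwarz. The difference is in the bookkeeping, and yours is cleaner and yields a sharper constant. The paper only invokes the relation $\langle a_k-a_{d+1},b_k\rangle=1$ for $1\leq k\leq d$, so it cannot bound $\|b_{d+1}\|$ this way; instead it discards that term (obtaining strictness from $\|b_{d+1}\|>0$), and then chains $r<1/\sum_{i=1}^{d}\|b_i\|\leq 1/\sum_{i=1}^d \frac{1}{\|a_i-a_{d+1}\|}\leq \frac{\sum_{i=1}^d\|a_i-a_{d+1}\|}{d^2}\leq \frac{\max_{e\in E}\|e\|_2}{d}$ using the AM--HM inequality. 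You instead observe that Lemma \ref{akira2014radii_1} in its general form also pairs $b_{d+1}$ with any edge incident to $a_{d+1}$ (taking $i=k=d+1$, $j\neq d+1$ gives $\langle a_{d+1}-a_j,b_{d+1}\rangle=1$), so all $d+1$ terms satisfy $\|b_k\|\geq 1/L$; this avoids AM--HM entirely, gives the stronger conclusion $r\leq L/(d+1)$, and derives the lemma's strict inequality from the gap between $d+1$ and $d$ rather than from dropping a positive term. Both arguments are valid; yours buys a symmetric treatment of all vertices and the sharper bound $r\leq \max_{e\in E}\|e\|_2/(d+1)$, which would in fact slightly improve the constants propagated into Theorems \ref{e_min_basic} and \ref{e_max_general}.
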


	\begin{proof}
Since $B=[b_1,\cdots,b_d]$ is a invertible matrix, we have $b_i\neq 0$, for $1\leq i\leq d$ (i.e.,
$b_i$ does not equal the vector with all $d$ elements 0). Also, since matrix $B$ is
invertible, the $d$ vectors $b_i$,
$1\leq i\leq d$, are linearly independent; thus, we also have $b_{d+1}=-\sum_{i=1}^{d}b_i\neq 0$.
		From Lemma \ref{akira2014radii_2}, we have 
		$$
		r=\frac{1}{\sum_{i=1}^{d+1}\|b_i\|}<\frac{1}{\sum_{i=1}^{d}\|b_i\|}
		$$
		since $\|b_{d+1}\|>0$.
		
		For $1\leq k\leq d$, we have
		$$
		\|a_k-a_{d+1}\|\cdot \|b_k\|\geq|\langle a_k-a_{d+1},b_k \rangle|=1
		$$
		$$
		\Rightarrow~~\|b_k\|\geq \frac{1}{\|a_k-a_{d+1}\|}.
		$$
		Therefore,
		$$
		r<\frac{1}{\sum_{i=1}^{d}\|b_i\|}\leq \frac{1}{\sum_{i=1}^{d}\frac{1}{\|a_i-a_{d+1}\|}}\leq\frac{\sum_{i=1}^{d}\|a_i-a_{d+1}\|}{d^2}
		$$
		Here we used the inequality $\frac{d}{\sum_{i=1}^{d}\frac{1}{x_i}}\leq\frac{\sum_{i=1}^{d}x_i}{d}$.
		Let $E$ denote the set of all edges between the vertices $a_1,a_2,\cdots,a_{d+1}$. Then, from the last inequality above,
		it follows that
		$$r<\frac{\max_{e\in E}\|e\|_2}{d}$$
	\end{proof}

	\subsection{$(\delta, 2)$-Relaxed Exact BVC}

In this section, we derive upper bounds for an achievable input-dependent $\delta$.
In particular, we derive one bound that applies to $f\geq 1$, and another bound that only applies
to $f=1$.
	
	The tight necessary and sufficient condition for solving $(\delta, 2)$-Relaxed Exact BVC problem in synchronous systems is $n\geq\max\{(d+1)f+1,3f+1\}$ for inputs of dimension $d$. Recall that it is impossible to solve $(\delta, p)$-Relaxed Exact BVC for $n\leq 3f$. Therefore, we only need to discuss the situation when the number of processes $n$ is $3f+1\leq n \leq (d+1)f$. Thus, the dimension of inputs must be $d\geq 3$.
	
	\subsubsection{$f=1$ case}
	Consider the multiset $S=\{a_1,\cdots,a_n\}$ collected at the end of Step 1 of algorithm ALGO
	presented earlier. 
	We first consider the case when the vectors in the set
	$\{a_i-a_n~|~i\neq n, 1\leq i\leq n\}$ are not linearly independent.

	\begin{theorem}
	Let $f=1$, $d\geq 3$ and $4\leq n\leq d+1$.  Consider the set of $n$ inputs $a_1,\cdots,a_{n}$ in $S$ obtained in Step 1 of algorithm ALGO, and suppose that the vectors in $\{a_i-a_n~|~i\neq n, 1\leq i\leq n\}$ are not linearly independent.  Then,
$(0, 2)$-consensus can be achieved, and  $\delta^*(S)=0$.
	\end{theorem}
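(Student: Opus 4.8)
The plan is to recognize this as a Radon-type argument. Since $f=1$, the set $\Gamma_{(0,2)}(S)$ computed in Step~2 of ALGO collapses to a simple intersection of facet convex hulls: by Definition~\ref{def:Hdelta} we have $H_{(0,2)}(T)=\mathcal{H}(T)$, and each $T\subseteq S$ with $|T|=n-1$ is obtained by deleting a single point $a_k$, so
$$\Gamma_{(0,2)}(S)=\bigcap_{k=1}^{n}\mathcal{H}(S\setminus\{a_k\}).$$
Thus proving $\delta^*(S)=0$ reduces to exhibiting one point $p$ lying simultaneously in all $n$ of these convex hulls; such a $p$ is then selectable by the deterministic rule in Step~2, so $(0,2)$-consensus is achieved.

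First I would translate the hypothesis into an affine dependence. Linear dependence of $\{a_i-a_n : 1\leq i\leq n-1\}$ is precisely affine dependence of the multiset $\{a_1,\dots,a_n\}$: from a nontrivial relation $\sum_{i=1}^{n-1}c_i(a_i-a_n)=0$ I set $\lambda_i=c_i$ for $i<n$ and $\lambda_n=-\sum_{i<n}c_i$, obtaining coefficients, not all zero, with $\sum_{i=1}^{n}\lambda_i=0$ and $\sum_{i=1}^{n}\lambda_i a_i=0$. Next I would form the Radon partition. Let $I_+=\{i:\lambda_i>0\}$ and $I_-=\{i:\lambda_i<0\}$; because $\sum_i\lambda_i=0$ and the $\lambda_i$ are not all zero, both $I_+$ and $I_-$ are nonempty. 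Writing $\Lambda=\sum_{i\in I_+}\lambda_i=\sum_{i\in I_-}(-\lambda_i)>0$ and normalizing yields the common point
$$p=\sum_{i\in I_+}\frac{\lambda_i}{\Lambda}\,a_i=\sum_{i\in I_-}\frac{-\lambda_i}{\Lambda}\,a_i,$$
which lies in both $\mathcal{H}(\{a_i:i\in I_+\})$ and $\mathcal{H}(\{a_i:i\in I_-\})$.

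Finally I would verify membership in every facet. The key observation is that $p$ lives in the convex hull of \emph{two} disjoint index sets, so deleting a single point $a_k$ can destroy at most one of them. If $k\in I_+$ then $I_-\subseteq S\setminus\{a_k\}$ survives intact, giving $p\in\mathcal{H}(\{a_i:i\in I_-\})\subseteq\mathcal{H}(S\setminus\{a_k\})$; symmetrically if $k\in I_-$; and if $k\notin I_+\cup I_-$ both survive. In every case $p\in\mathcal{H}(S\setminus\{a_k\})$, hence $p\in\Gamma_{(0,2)}(S)$, so $\Gamma_{(0,2)}(S)\neq\emptyset$ and therefore $\delta^*(S)=0$.

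I do not expect a genuine obstacle: the argument is elementary once the affine dependence is rewritten as a signed partition. The only points needing care are the nonemptiness of \emph{both} $I_+$ and $I_-$ (guaranteed exactly by the sign condition $\sum_i\lambda_i=0$ together with nontriviality of the relation) and the fact that $S$ is a multiset---if two inputs coincide the same reasoning applies verbatim, with the corresponding $a_k$ still present as a duplicate in $S\setminus\{a_k\}$. Note also that this is not merely an instance of Radon's theorem: we have only $n\leq d+1$ points, too few to force affine dependence, so we genuinely rely on the hypothesis that the difference vectors are linearly dependent.
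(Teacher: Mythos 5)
Your proof is correct, but it takes a genuinely different route from the paper's. The paper argues by dimension reduction: the linear dependence of $\{a_i-a_n\}$ places all inputs in an affine subspace of dimension $d'<n-1$; an isometric projection onto $\mathbb{R}^{d'}$ then puts the problem in a regime where $n\geq d'+2=(d'+1)f+1$ and $n\geq 4=3f+1$, so the previously known tight sufficiency result for exact BVC (Theorem \ref{thm:exact}) applies in the lower dimension, and the consensus point lifts back. You instead construct the witness point explicitly: the affine dependence yields a Radon-type partition $I_+,I_-$, and the common point $p$ of $\mathcal{H}(\{a_i:i\in I_+\})$ and $\mathcal{H}(\{a_i:i\in I_-\})$ survives the deletion of any single input precisely because $I_+$ and $I_-$ are disjoint, so $p\in\Gamma_{(0,2)}(S)$. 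What each approach buys: the paper's reduction is short given the prior machinery and its template extends to general $f$ (the cited sufficiency result rests on Tverberg's theorem, of which Radon is the $f=1$ case); your argument is self-contained, constructive, and isolates the exact combinatorial reason the claim holds for $f=1$ --- a point in the hulls of two disjoint sets cannot be excluded by removing one point --- but it is intrinsically tied to $f=1$, since tolerating $f$ deletions would require a Tverberg partition into $f+1$ parts, which a single affine dependence does not provide. Your closing observations (nonemptiness of both $I_+$ and $I_-$, the multiset issue, and the fact that with $n\leq d+1$ points the affine dependence is a hypothesis rather than a consequence of Radon's theorem) are all accurate and handle the only delicate points.
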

	
	\begin{proof}
Note that $n\leq d+1$.
		Since the $n-1$ vectors in $\{a_i-a_n~|~i\neq n, 1\leq i\leq n\}$ are not independent of each other, $\{a_i-a_{n}\}$ belong to a $d^\prime$-dimensional subspace $W$, where $d^\prime<n-1$. Then we can find a projection $P$ from $d$-dimensions to $d'$-dimensions, while perserving the distances between the points in $S$. That is, $\|a_i-a_j\|_2 = \|Pa_i-Pa_j\|_2$,
$1\leq i,j\leq n$.
Since $f=1$ and $n>d'+1$, from previous results, we know that $(0, 2)$-consensus is
achievable.
Equivalently, $\delta^*(S)=0$.
	\end{proof}

Now we focus on the case when the vectors in the set $\{a_i-a_n~|~i\neq n, 1\leq i\leq n\}$ are linearly independent. 
Recall that we defined $\delta^*(S)$ for a given set $S$ such that
for all $\delta\geq\delta^*(S)$, $(\delta,2)$-consensus is achieved
when set $S$ is the set obtained Step 1 of algorithm ALGO.

	\begin{theorem}\label{e_min_basic}
	Let $f=1$, $d\geq 3$ and $4\leq n\leq d+1$.  Consider the set $S=\{a_1,\cdots,a_{n}\}$ obtained in Step 1 of algorithm ALGO, and suppose that the vectors in $\{a_i-a_n~|~i\neq n, 1\leq i\leq n\}$ are linearly independent.  Then $$\delta^*(S)<\frac{\min_{e\in E}\|e\|_2}{2}\leq \frac{\min_{e\in E_+}\|e\|_2}{2},$$
and
$$\delta^*(S)<\frac{\max_{e\in E_+}\|e\|_2}{n-2},$$
 where $E$ is the set of edges between any pair of inputs in $S$, and $E_+$ is the set of edges between any pair of non-faulty inputs in $S$.
	\end{theorem}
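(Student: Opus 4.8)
The plan is to reduce to a full-dimensional simplex and then bound its inradius. Since the $n-1$ vectors $\{a_i - a_n : i \neq n\}$ are linearly independent, the points $a_1,\dots,a_n$ form a nondegenerate $(n-1)$-simplex lying in their affine hull $V$, an $(n-1)$-dimensional affine subspace of $\mathbb{R}^d$. First I would identify $V$ isometrically with $\mathbb{R}^{n-1}$; this preserves all $L_2$ distances and the inscribed-sphere radius, so the construction and all lemmas of the Useful Lemmas subsection apply verbatim with $d$ replaced by $n-1$. In particular, writing $A = [a_1 - a_n,\dots,a_{n-1}-a_n]$, $B = (A^{-1})^T = [b_1,\dots,b_{n-1}]$ and $b_n = -\sum_{i=1}^{n-1} b_i$, Lemmas \ref{akira2014radii_1} and \ref{akira2014radii_2} hold. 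I also need to extend Lemma \ref{delta_lemma} to the case $n < d+1$: decomposing any candidate output $p$ as $p = p_V + p_W$ with $p_W$ orthogonal to $V$, its distance to any facet (which lies in $V$) equals $\sqrt{\rho^2 + \|p_W\|^2}$, where $\rho$ is the distance from $p_V$ to that facet measured inside $V$, so the optimal choice takes $p_W = 0$ and the problem collapses to the inscribed-sphere problem inside $V$. Hence $\delta^*(S) = r$, the inradius of the $(n-1)$-simplex, and it remains only to bound $r$.

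For the first bound I would argue edge by edge. Fix an edge $a_i a_j$. By Lemma \ref{akira2014radii_1}, $\langle a_i - a_j, b_k\rangle = \delta_{ik} - \delta_{jk}$, so $\langle a_i - a_j, b_i - b_j\rangle = 2$. Cauchy--Schwarz then gives $2 \leq \|a_i - a_j\|\,\|b_i - b_j\|$, while the triangle inequality gives $\|b_i - b_j\| \leq \|b_i\| + \|b_j\|$. Because $n \geq 3$ there is at least one index $k \notin \{i,j\}$, and every $\|b_k\| > 0$ (each $b_k \neq 0$: the columns of the invertible $B$ are nonzero, and $\langle a_n - a_j, b_n\rangle = 1$ forces $b_n \neq 0$), so $\|b_i\| + \|b_j\| < \sum_{k=1}^{n}\|b_k\| = 1/r$ by Lemma \ref{akira2014radii_2}. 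Combining, $\|a_i - a_j\| \geq 2/\|b_i - b_j\| > 2r$, i.e. $r < \|a_i - a_j\|_2/2$. Taking the minimum over all edges gives $\delta^*(S) = r < \min_{e\in E}\|e\|_2/2$; the inequality $\min_{e\in E}\|e\|_2/2 \leq \min_{e\in E_+}\|e\|_2/2$ is immediate since $E_+ \subseteq E$.

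For the second bound I would use the recursive facet structure, restricting attention to the non-faulty inputs. With $f = 1$ at most one input is faulty, so the non-faulty inputs form a face of the simplex of dimension at least $n-2$. If all $n$ inputs are non-faulty then $E_+ = E$, and Lemma \ref{e_max_lemma} (with $d \to n-1$) already gives $r < \max_{e\in E}\|e\|_2/(n-1) < \max_{e\in E_+}\|e\|_2/(n-2)$. If instead some $a_m$ is faulty, the non-faulty inputs are exactly the vertices of the facet $\pi_m$, an $(n-2)$-simplex whose edge set is precisely $E_+$. Here Lemma \ref{radius_lemma} (with $d \to n-1$, valid since $n-1 \geq 2$) gives $r < r_m$, and Lemma \ref{e_max_lemma} applied to the $(n-2)$-simplex $\pi_m$ gives $r_m < \max_{e\in E_+}\|e\|_2/(n-2)$. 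Either way $\delta^*(S) = r < \max_{e\in E_+}\|e\|_2/(n-2)$.

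The main obstacle is the bookkeeping in the reduction step: the lemmas of the previous subsection are stated only for $n = d+1$, so the crux is justifying that the isometric identification of $V$ with $\mathbb{R}^{n-1}$ legitimately transfers each of them (especially the identity $\delta^*(S) = r$) to the degenerate case $n < d+1$, and that the quantities $r_k$ and $b_k$ are computed intrinsically within $V$. The second conceptual point is being careful to express the max bound only through $E_+$: applying the edge lemma to the whole simplex would introduce edges incident to the faulty vertex, which is exactly why the argument must descend to the facet $\pi_m$ spanned by the non-faulty inputs before invoking Lemma \ref{e_max_lemma}. The min bound, by contrast, is dimension-free and follows cleanly from the Cauchy--Schwarz estimate above.
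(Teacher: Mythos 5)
Your proof is correct, and its skeleton matches the paper's for everything except the minimum-edge bound: like the paper, you reduce to a nondegenerate $(n-1)$-simplex, invoke $\delta^*(S)=r$ via Lemma \ref{delta_lemma}, and obtain the bound $\max_{e\in E_+}\|e\|_2/(n-2)$ by descending to the facet spanned by the non-faulty inputs (Lemma \ref{radius_lemma} followed by Lemma \ref{e_max_lemma}), which is exactly the paper's Case I argument. Where you genuinely diverge is the bound $r<\min_{e\in E}\|e\|_2/2$. The paper proves it by induction on the dimension: the base case is a triangle handled with Heron's formula, and the inductive step combines the hypothesis with Lemma \ref{radius_lemma}. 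You instead get it in one stroke from Lemma \ref{akira2014radii_1}: since $\langle a_i-a_j,\,b_i-b_j\rangle=2$, Cauchy--Schwarz and the triangle inequality give $\|a_i-a_j\|\geq 2/(\|b_i\|+\|b_j\|)$, and strictness follows because the omitted normals $b_k$, $k\notin\{i,j\}$, are nonzero, so $\|b_i\|+\|b_j\|<\sum_k\|b_k\|=1/r$ by Lemma \ref{akira2014radii_2}. This is shorter, avoids Heron's formula and the induction entirely, and makes the strict inequality transparent; the paper's route, in exchange, reuses only machinery (Lemma \ref{radius_lemma}) that it needs anyway for the other bound. Two further points in your write-up actually tighten the paper's exposition rather than merely reproduce it: (i) for $4\leq n<d+1$ the paper projects to $\mathbb{R}^{n-1}$ and cites Case I without addressing that candidate outputs range over all of $\mathbb{R}^d$, whereas your Pythagorean decomposition $p=p_V+p_W$ supplies exactly the missing justification that the optimum lies in the affine hull, so Lemma \ref{delta_lemma} applies intrinsically; and (ii) since $f=1$ only means \emph{at most} one faulty process, the paper's ``assume process 1 is faulty'' silently skips the all-non-faulty configuration, which you treat explicitly (where the bound even improves to $\max_{e\in E}\|e\|_2/(n-1)$).
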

	
	\begin{proof}

We divide the proof into two cases: $n=d+1$ and $4\leq n<d+1$.

\paragraph{Case I: $n=d+1$:}
In this case, $n-1=d$.
		Since the $d$ vectors in $\{a_i-a_n~|~i\neq n, 1\leq i\leq n\}$ are linearly independent, we know that the inputs in $S$ form a simplex in $d$ dimensions. By Lemma \ref{delta_lemma}, $\delta(S)=r$ where $r$ is the radius of the inscribed sphere of the simplex. We can prove the theorem by induction on $d$.
		
		First consider $d=2$. When $d=2$, the simplex is simply a triangle.
Let the lengths of the three sides of the triangle be denoted as $a,b,c$, where $c\leq b\leq a$,
and define $p=\frac{a+b+c}{2}$.
$c$ must be positive, since no two inputs in $S$ are identical (otherwise,
the vectors in $\{a_i-a_n~|~i\neq n, 1\leq i\leq n\}$ will not be linearly independent).

By Heron's formula, 
the area of the triangle is given by $\sqrt{p(p-a)(p-b)(p-c)}$.  Then, the radius $r$ of the inscribed sphere (or
incircle, since $d=2$) is given by, 
		\begin{equation*}
		\begin{aligned}
		r&=\frac{\sqrt{p(p-a)(p-b)(p-c)}}{p}\\
		&=\sqrt{\frac{(p-a)(p-b)(p-c)}{p}}\\
		&\leq \frac{(p-a)+(p-b)}{2}\sqrt{\frac{p-c}{p}} \mbox{~~~~because $\frac{\alpha^2+\beta^2}{2}\geq \alpha\beta$}\\
		&<\frac{c}{2} \mbox{~~because $c>0$} \\
		&\leq \frac{\min_{e\in E}\|e\|_2}{2}\\
		\end{aligned}
		\end{equation*}
Thus, $r<\frac{\min_{e\in E}\|e\|_2}{2}$ when $d=2$.\\

Now, suppose that, for every simplex of dimension $k$,  $2\leq k$,
the radius of the inscribed sphere is less than half
the minimum distance between any two of its vertices.
Consider a simplex of dimension $k+1$.
Lemma \ref{radius_lemma} and the above assumption together imply that,
for a simplex in $k+1$ dimensions as well, the radius of the inscribed sphere is less than $\frac{\min_{e\in E}\|e\|_2}{2}$,
and therefore, also less than $\leq \frac{\min_{e\in E_+}\|e\|_2}{2}$. 
	
Now we prove that 
$$\delta(S)<\frac{\max_{e\in E_+}\|e\|_2}{d-1}$$.
	
Without loss of generality, assume that process 1 is faulty, and thus $a_1\in S$ is the
only faulty input $S$.
Recall that the $n=d+1$ points in $S$ form a simplex.
Let $\pi_1$ be the facet of the simplex formed by the points in $S-\{a_1\}$.
Observe that $\pi_1$ is isomorphic to a simplex in $d-1$ dimensions. Let $r_1$ be the radius of
$(d-1)$-dimensional inscribed sphere of $\pi_1$. By Lemma \ref{e_max_lemma}, we have $r_1<\frac{\max_{e\in E'}\|e\|_2}{d-1}$, where $E'$ is the set of edges between the inpue corresponding to $\pi_1$ (i.e., inputs in $S-\{p_1\}$). Since $\pi_1$ only contains non-faulty inputs, we have $E'=E_{+}$. By Lemma \ref{radius_lemma}, we have $r<r_1<\frac{\max_{e\in E_+}\|e\|_2}{d-1}$, completing the proof of Case I (recall that $d-1=n-2$
in this case).

\paragraph{Case II: $4\leq n<d+1$:}
	
		Since the vectors in $\{a_i-a_{n}~|~1\leq i<n\}$ are linearly independent, these vectors form a $n-1$ dimensional subspace $W$ (where $n-1<d$). Then we can find a projection matrix $P$ that projects these $d$-dimensional vectors into a $(n-1)$-dimensional space, while perserving the distances between the points in $S=\{a_1,\cdots,a_n\}$. Then the $n$ points $Pa_1,\cdots,Pa_{n}$ form a simplex in a $(n-1)$-dimensional subspace. By
the results in Case I, and substituting $d$ by $n-1$, the claim follows in Case II.
	\end{proof}

	\subsubsection{$f\geq 2$}

In this section, we focus on $f\geq 2$.
The proof for the case of $f\geq 2$ can potentially be adapted for $f\geq 1$. However,
we handled the case of $f=1$ in the previous section, because the proof for $f=1$ is simpler
than that for  $f\geq 2$. 
We first give a proof of bound for $n=(d+1)f$ inputs case, and then leave a conjecture for the remaining case.



	\begin{theorem}[Helly's theorem\cite{danzer1963helly}]\label{helly}
		Let $X_1,\cdots,X_n$ be a collection of compact convex subsets of $\mathbb{R}^d$, where $n\geq d+1$. If the intersection of every $d+1$ of these sets is nonempty, then
		$$
		\bigcap_{i=1}^{n} X_i\neq \emptyset
		$$
	\end{theorem}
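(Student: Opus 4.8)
The plan is to prove Helly's theorem by induction on the number of sets $n$, using Radon's theorem as the key geometric ingredient. Since the statement involves only finitely many sets, I note that compactness is not actually needed here; convexity alone suffices, and the argument proceeds purely combinatorially.

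First I would establish (or invoke) Radon's theorem: any collection of $d+2$ points $p_1,\dots,p_{d+2}$ in $\mathbb{R}^d$ admits a partition of its index set into two disjoint subsets whose convex hulls share a common point. This follows from a dimension count. The $d+2$ points, viewed with an extra constant coordinate equal to $1$, are affinely dependent, so there exist scalars $\lambda_i$, not all zero, with $\sum_i \lambda_i p_i = 0$ and $\sum_i \lambda_i = 0$. Splitting the indices according to the sign of $\lambda_i$ and normalizing the positive and negative weights (which have equal total mass) produces a single point lying in both convex hulls, the so-called Radon point.

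Next comes the induction. The base case $n = d+1$ is exactly the hypothesis of the theorem. For the inductive step with $n \geq d+2$, I would apply the induction hypothesis to each of the $n$ subfamilies obtained by deleting one set $X_i$, which yields a point $p_i \in \bigcap_{j \neq i} X_j$ for every $i$. This gives $n \geq d+2$ points to which Radon's theorem applies: partition $\{1,\dots,n\}$ into $I$ and $J$ so that the convex hulls of $\{p_i : i \in I\}$ and $\{p_j : j \in J\}$ contain a common point $q$.

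The crux is then to verify that $q \in \bigcap_{i=1}^n X_i$. I would fix an arbitrary index $k$, which lies in $I$ or in $J$; say $k \in I$. Every $p_j$ with $j \in J$ belongs to $X_k$, since $k \neq j$ and $p_j \in \bigcap_{\ell \neq j} X_\ell$, so by convexity $X_k$ contains the entire convex hull of $\{p_j : j \in J\}$, and hence contains $q$. The symmetric argument disposes of the case $k \in J$. Thus $q$ lies in every $X_k$, completing the induction. I expect this final verification — correctly pairing each set $X_k$ with the convex hull on the opposite side of the Radon partition — to be the main (if modest) obstacle; once Radon's theorem is available, the remainder is routine.
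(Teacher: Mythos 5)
Your proof is correct, and it is the standard Radon-based induction for Helly's theorem: the base case $n=d+1$ is the hypothesis itself, the points $p_i\in\bigcap_{j\neq i}X_j$ exist by the induction hypothesis applied to the $n$ deleted subfamilies, and your final verification --- that each $X_k$ contains the convex hull of the Radon points lying on the side of the partition \emph{not} containing $k$, hence contains the Radon point $q$ --- is exactly the right pairing. One detail worth stating explicitly: Radon's theorem is usually formulated for exactly $d+2$ points, while your inductive step has $n\geq d+2$ of them; your affine-dependence argument does extend verbatim (any $n\geq d+2$ points are affinely dependent, and indices with coefficient $\lambda_i=0$ may be assigned to either side), so the partition of all of $\{1,\dots,n\}$ that you invoke indeed exists. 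Your remark that compactness is superfluous is also correct for the finite families considered here; closedness or compactness only becomes essential for infinite families. Be aware, however, that there is no in-paper argument to compare yours against: the paper imports Helly's theorem by citation \cite{danzer1963helly} and uses it purely as a black box (in the proof of Theorem~\ref{e_max_general} and its claims), so supplying this proof goes beyond what the paper itself does, and your write-up would serve as a self-contained replacement for that citation.
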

	\begin{theorem}[Caratheodory's theorem\cite{barany1982generalization}]\label{caratheodory}
		$S$ is a set of points in $\mathbb{R}^d$. If $x\in \mathcal{H}(S)$, then $x\in \mathcal{H}(R)$ for some $R\subseteq S$, $|R|\leq d+1$.
	\end{theorem}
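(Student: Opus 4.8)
The plan is to take an arbitrary representation of $x$ as a convex combination of points of $S$ and repeatedly shrink the number of points used whenever it exceeds $d+1$, until at most $d+1$ remain. By the definition of the convex hull, since $x\in\mathcal{H}(S)$ there is a finite subset $\{p_1,\dots,p_m\}\subseteq S$ and weights $\lambda_1,\dots,\lambda_m\ge 0$ with $\sum_{i=1}^m\lambda_i=1$ and $x=\sum_{i=1}^m\lambda_i p_i$. If $m\le d+1$ we are already done, so I would assume $m\ge d+2$ and exhibit a representation using strictly fewer points.

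The key step is to extract an affine dependence among the $p_i$. The $m-1$ vectors $p_2-p_1,\dots,p_m-p_1$ lie in $\mathbb{R}^d$, and since $m-1\ge d+1>d$ they are linearly dependent; hence there exist scalars $\mu_2,\dots,\mu_m$, not all zero, with $\sum_{i=2}^m\mu_i(p_i-p_1)=0$. Putting $\mu_1=-\sum_{i=2}^m\mu_i$ yields scalars $\mu_1,\dots,\mu_m$, not all zero, satisfying both $\sum_{i=1}^m\mu_i p_i=0$ and $\sum_{i=1}^m\mu_i=0$. I would then consider the perturbed weights $\lambda_i-t\mu_i$: the identity $\sum_i\mu_i=0$ keeps them summing to $1$, and $\sum_i\mu_i p_i=0$ keeps them representing $x$, for every $t$. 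Because the $\mu_i$ are not all zero yet sum to zero, some $\mu_i>0$; choosing $t=\min\{\lambda_i/\mu_i:\mu_i>0\}$, attained at some index $j$, forces the $j$-th weight to vanish while leaving all weights nonnegative, so $x$ becomes a convex combination of the $m-1$ points $\{p_i:i\neq j\}\subseteq S$.

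Iterating this elimination strictly decreases the count each time, so after finitely many steps I reach a representation with at most $d+1$ points, yielding $R\subseteq S$ with $|R|\le d+1$ and $x\in\mathcal{H}(R)$. The main obstacle is the sign bookkeeping in the elimination step: one must verify simultaneously that the perturbed weights sum to $1$, still represent $x$, and stay nonnegative, and that the chosen $t$ drives exactly one weight to zero. Nonnegativity requires splitting into the cases $\mu_i>0$ (handled by the defining minimum, which also guarantees $t\ge 0$) and $\mu_i\le 0$ (where $-t\mu_i\ge 0$ since $t\ge 0$); this case analysis is the only delicate point, with everything else being routine linear algebra.
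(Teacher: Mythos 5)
Your proof is correct. Note, however, that the paper does not prove this statement at all: it quotes Carath\'eodory's theorem as a known result, with a citation to the literature, and uses it as a black box (e.g., inside the proof of Theorem \ref{e_max_general}). What you have supplied is the canonical dependence-elimination proof: extract an affine dependence $\sum_i \mu_i p_i = 0$, $\sum_i \mu_i = 0$ from any representation with $m \geq d+2$ points, perturb the weights by $-t\mu_i$, and choose $t = \min\{\lambda_i/\mu_i : \mu_i > 0\}$ to kill one weight while preserving nonnegativity, the unit sum, and the represented point. All the steps check out, including the sign analysis (the existence of some $\mu_i > 0$ follows because the $\mu_i$ are not all zero yet sum to zero, and the case split $\mu_i > 0$ versus $\mu_i \leq 0$ is handled correctly), and the induction terminates since each elimination strictly reduces the number of points. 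So your argument is a complete, self-contained proof of a statement the paper leaves as an external citation.
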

	
	\begin{theorem}\label{e_max_general}
	Let $f\geq 2$, $d\geq 3$ and $n=(d+1)f$.  Consider the set of $n$ inputs $S=\{a_1,\cdots,a_{n}\}$ obtained in Step 1 of algorithm ALGO. Then,
$$\delta^*(S)<\frac{\max_{e\in E_+}\|e\|_2}{d-1},$$
where $E_+$ is the set of edges between pairs of non-faulty inputs in $S$.

	\end{theorem}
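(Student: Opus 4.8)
The plan is to bound $\delta^*(S)$ from above by exhibiting a single point that belongs to $\Gamma_{(\delta,2)}(S)$ for $\delta=\frac{1}{d}\max_{e\in E_+}\|e\|_2$; since $\delta^*(S)$ is by definition the least $\delta$ making $\Gamma_{(\delta,2)}(S)$ nonempty, this gives $\delta^*(S)\leq\frac{1}{d}\max_{e\in E_+}\|e\|_2<\frac{1}{d-1}\max_{e\in E_+}\|e\|_2$, which is even stronger than the stated bound. Write $G\subseteq S$ for the multiset of non-faulty inputs; since at most $f$ of the $n=(d+1)f$ processes are faulty, $|G|\geq df$. The key reduction is that for every $T\subseteq S$ with $|T|=|S|-f=df$, the set $T$ still contains at least $|G|-f\geq(d-1)f$ non-faulty inputs, so $\mathcal{H}(T)\supseteq\mathcal{H}(G\cap T)$. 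Consequently it is enough to produce a point $p$ lying in $H_{(\delta,2)}(G')$ for every $G'\subseteq G$ with $|G\setminus G'|\leq f$, because then $p\in H_{(\delta,2)}(G\cap T)\subseteq H_{(\delta,2)}(T)$ for every admissible $T$, i.e.\ $p\in\Gamma_{(\delta,2)}(S)$.

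For $p$ I would take the centroid $c_G=\frac{1}{|G|}\sum_{a\in G}a$ of the non-faulty inputs. Fix $G'\subseteq G$ with $F:=G\setminus G'$ satisfying $|F|\leq f$, and let $c_{G'}$ and $c_F$ be the centroids of $G'$ and $F$. A direct computation gives $c_G-c_{G'}=\frac{|F|}{|G|}\,(c_F-c_{G'})$. Both $c_F$ and $c_{G'}$ are convex combinations of points of $G$, hence lie in $\mathcal{H}(G)$, so $\|c_F-c_{G'}\|_2\leq\operatorname{diam}\mathcal{H}(G)=\max_{e\in E_+}\|e\|_2$. Using $|F|\leq f$ and $|G|\geq df$ then yields $\|c_G-c_{G'}\|_2\leq\frac{f}{df}\max_{e\in E_+}\|e\|_2=\frac{1}{d}\max_{e\in E_+}\|e\|_2$. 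Since $c_{G'}\in\mathcal{H}(G')$, the point $c_G$ is within distance $\delta=\frac{1}{d}\max_{e\in E_+}\|e\|_2$ of $\mathcal{H}(G')$, that is $c_G\in H_{(\delta,2)}(G')$. By the reduction this holds for every admissible $G'$, so $c_G\in\Gamma_{(\delta,2)}(S)$ and the bound follows.

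The only facts this route uses are elementary: the diameter of a convex hull is attained between two vertices (so it equals $\max_{e\in E_+}\|e\|_2$), and enlarging a point set only enlarges its convex hull (justifying $\mathcal{H}(T)\supseteq\mathcal{H}(G\cap T)$). On this route Helly's and Carath\'eodory's theorems are not actually needed, although one could instead establish $\Gamma_{(\delta,2)}(S)\neq\emptyset$ by checking through Helly that every $d+1$ of the compact convex sets $H_{(\delta,2)}(T)$ intersect. I expect the main thing to watch to be conceptual rather than computational: the point $p=c_G$ is selected using knowledge of which inputs are faulty, but this is legitimate because $p$ is used only in the analysis to upper-bound the input-independent quantity $\delta^*(S)$ (the algorithm itself never identifies the faulty inputs). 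Finally, the strict inequality presumes $\max_{e\in E_+}\|e\|_2>0$; in the degenerate case where all non-faulty inputs coincide, every $\mathcal{H}(T)$ contains that common value and $\delta^*(S)=0$ holds trivially.
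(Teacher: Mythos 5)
Your proof is correct, and it takes a genuinely different --- and far more elementary --- route than the paper. The paper works with the min--max characterization $\delta^*(S)=\min_{p}\max_i \mathrm{dist}(p,\mathcal{H}(P_i))$, fixes an optimal point $p_0$, and splits on how many size-$df$ subsets are tight at $p_0$: the case of at most $d$ tight subsets is eliminated by a perturbation argument, and the main case uses Helly's and Carath\'eodory's theorems to extract a partition of $S$ into $d+1$ blocks $F_i'$ of size $f$, shows every transversal of the blocks spans a simplex containing $p_0$, and finally bounds $\delta$ by inscribed-sphere radii via Lemmas \ref{radius_lemma} and \ref{e_max_lemma}, which is where the constant $\frac{1}{d-1}$ originates. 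You bypass all of this machinery: taking the centroid $c_G$ of the non-faulty inputs as an explicit witness, the identity $c_G-c_{G'}=\frac{|F|}{|G|}(c_F-c_{G'})$ together with the fact that $\mathrm{diam}\,\mathcal{H}(G)=\max_{e\in E_+}\|e\|_2$ yields $\mathrm{dist}(c_G,\mathcal{H}(T))\leq\frac{f}{n-f}\max_{e\in E_+}\|e\|_2$ for every admissible $T$, hence $\delta^*(S)\leq\frac{1}{d}\max_{e\in E_+}\|e\|_2$ when $n=(d+1)f$ --- strictly stronger than the paper's $\frac{1}{d-1}$ bound whenever $\max_{e\in E_+}\|e\|_2>0$. (When $\max_{e\in E_+}\|e\|_2=0$ the strict inequality in the theorem statement is unsatisfiable for any proof; both you and the paper fall back on showing $\delta^*(S)=0$ there, so this is a defect of the statement, not of your argument.) Your reasoning also nowhere uses $n=(d+1)f$, $f\geq 2$, or $d\geq 3$: for any $n\geq 3f+1$ it gives $\delta^*(S)\leq\frac{f}{n-f}\max_{e\in E_+}\|e\|_2\leq\frac{1}{\lfloor n/f\rfloor-1}\max_{e\in E_+}\|e\|_2$, which is stronger than the bound $\frac{1}{\lfloor n/f\rfloor-2}\max_{e\in E_+}\|e\|_2$ conjectured in Conjecture \ref{conjecture:1}; so your centroid argument in fact settles that conjecture (and, through Theorem \ref{uniform_general}, Conjecture \ref{conjecture:3} as well). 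What the paper's heavier geometric apparatus buys instead is \emph{exact} information in the $f=1$ case --- there $\delta^*(S)$ equals the inscribed-sphere radius of the input simplex (Lemma \ref{delta_lemma}) --- and the complementary bound $\frac{\min_{e\in E_+}\|e\|_2}{2}$ of Theorem \ref{e_min_basic}, neither of which the centroid witness provides.
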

	
	\begin{proof}
%
Consider multset $S$.
If $\max_{e\in E_+}\|e\|_2=0$, then the input of each non-faulty processes is identical to, say, $a_*$. Thus, at least $n-f=df$ points in $S$ equal $a_*$.
Thus, one subset of $S$ of size $n-f$ contains only $a_*$.
 Also, since $df\not<f+1$, each
subset of $S$ of size $n-f$ contains $a_*$.
Then, in Step 2 of algorithm ALGO, for $\delta=0$
$\Gamma_{(\delta,2)}(S)=\{a_*\}$.
Thus, each non-faulty process will choose $a_*$ as its output,
achieving $(0,2)$-consensus.
%
%

Hereafter, let us assume that $\max_{e\in E_+}\|e\|_2>0$.
We want to derive an upper bound on $\delta^*(S)$ such that $\Gamma_{(\delta^*(S),2)}(S)=\bigcap_{T\subseteq S, |T|=|S|-f}H_{(\delta^*(S),2)}(T)$ is not empty.

In the following, for brevity, we may refer to $\delta^*(S)$ simply as $\delta$.

 Let $P_i$, $i=1,\cdots,\binom{n}{f}$ be the subsets of $S$ of size $(n-f)=df$, and let $F_i=S-P_i$, $i=1,\cdots,\binom{n}{f}$. Now, since any $f$ of the processes may be faulty, just one of these size $n-f$ subsets ($P_i$'s)
is guaranteed to contain only non-faulty inputs. Therefore, we obtain the following equation for $\delta$,
by observing that the output of $(\delta,p)$- consensus must not be farther than $\delta$
from the convex hull of the non-faulty inputs.

		$$
		\delta=\min_{p\in \mathbb{R}^d} \max_{i=1,\cdots,\binom{n}{f}} dist(p,\mathcal{H}(P_i))
		$$

Let $p_0 \in \arg (\min_{p\in \mathbb{R}^d} \max_{i=1,\cdots,\binom{n}{f}} dist(p,\mathcal{H}(P_i)))$.


		Let $\{P_i\}$ be the set containing all $P_i$ where $i=1,\cdots,\binom{n}{f}$. Let $Q_1,\cdots,Q_m$ be all
the distinct subsets of $S$ such that $Q_i=P_{j_i}$ for some $j_i$ and $dist(p_0,\mathcal{H}(Q_i))=\delta$, $1\leq i\leq m$.


		Now we consider the following two cases.
		
		\begin{itemize}
			\item Case 1: $1\leq m\leq d$:
			
			Consider the intersection of $Q_i$. Since $|Q_i|=df$ and $|S-Q_i|=f$,
			$$
			\Bigg|\bigcap_{i=1}^m Q_i\Bigg|\geq (d+1)f-mf>0
			$$
			This implies that
\begin{eqnarray}
\bigcap_{i=1}^m \mathcal{H}(Q_i)\neq \emptyset. \label{e_HQi}
\end{eqnarray}
			
In Case 1, by contradiction, we prove that $\delta=0$.


 Suppose that $\delta>0$. Let $x_i\in Q_i$ such that $dist(p_0,x_i)=dist(p_0,\mathcal{H}(Q_i))=\delta$ where $i=1,\cdots,m$. Let $\pi^i$ be the supporting hyperplane of $Q_i$ such that $x_i\in \pi^i$ and $dist(p_0,\pi^i)=\delta$, $i=1,\cdots,m$.
%
%
 Let $S^i_+$ denote the half-space that contains $p_0$ and is delimited by $\pi^i$, and let $S^i_-$ denote the other half-space delimited by $\pi^i$, which contains $\mathcal{H}(Q_i)$. Let $S^i_+$ be the interior of $\pi^i$, and let $n^i$ denote the inward-pointing normal vector of $\pi^i$. We have $\bigcap_{i=1}^m S^i_-\neq\emptyset$, otherwise $\bigcap_{i=1}^m \mathcal{H}(Q_i)= \emptyset$, contradicting (\ref{e_HQi}).

Since $\delta>0$, $p_0$ is in the interior of $S_+^i$, $1\leq i\leq m$.
 Let $x\in \bigcap_{i=1}^m S^i_-$. Then, for the unit vector $q=\frac{p_0-x}{\|p_0-x\|_2}$ 
			$$
				\langle q, n^i\rangle > 0,   ~~~~~1\leq i\leq m
			$$
			since $p_0$ is in the interior of $S^i_+$, $x\in S^i_-$, and $n^i$ is the normal vector pointing towards $p_0$ of hyperplane $\pi^i$.
			
			By definition of $p_0$, we have $dist(p_0,\mathcal{H}(P_i))<\delta$ for $P_i\not\in \{Q_j~|~1\leq j\leq m\}$. Define $\epsilon$ such that
			$$\epsilon=\delta-\max_{P_i\not\in \{Q_j\}} \, dist(p_0,\mathcal{H}(P_i))$$
			Thus, $\delta\geq \epsilon>0$. Now we consider the distance of point $p'=p_0-\alpha q$ to all $P_i$'s, where 
			$$
				0<\alpha\ < \min \left(\min_{1\leq i\leq m}2\langle p_0-x_i,q\rangle,~\epsilon\right)
			$$
			Now, for $1\leq i\leq m$,
			\begin{equation*}
			\begin{aligned}
				dist(p',\mathcal{H}(Q_i))
				&\leq dist(p',x_i) \mbox{~~~~because $x_i\in \mathcal{H}(Q_i)$}\\
				&=(\langle p_0-\alpha q-x_i, p_0-\alpha q-x_i\rangle)^\frac{1}{2} \mbox{~~~~because $p'=p_0-\alpha q$}\\
				&=(\langle p_0-x_i, p_0-x_i\rangle-2\langle p_0-x_i, \alpha q\rangle+\alpha^2)^{\frac{1}{2}}\mbox{~~~~because $q$ is a unit vector}\\
				&=(\delta^2+\alpha(\alpha-2\langle p_0-x_i,  q\rangle))^\frac{1}{2}\\
				&<\delta \mbox{~~~~because of the definition of $\alpha$}
			\end{aligned}
			\end{equation*}
			For $P_i\not\in\{Q_j\}$, we have
			\begin{equation*}
			\begin{aligned}
				dist(p',\mathcal{H}(P_i))
				&\leq dist(p',p_0)+ dist(p_0,\mathcal{H}(P_i)) \mbox{~~~~by triangular inequality}\\
				&\leq\alpha + (\delta-\epsilon) \mbox{~~~~due to definitions of $p'$ and $\epsilon$}\\
				&<\delta \mbox{~~~~because by definition of $\alpha$, $\alpha<\epsilon$}
			\end{aligned}
			\end{equation*}
			Therefore, there exists a point $p'$ whose maximum distance to any $\mathcal{H}(P_i)$,
			$i=1,\cdots,\binom{n}{f}$,
			 is less than $\delta$, which contradicts the definition of $\delta$. Therefore, we must have $\delta=0$ in Case 1.
			Thus, the theorem is trivially true in Case 1.

%
			
			\item Case 2: $m\geq d+1$:
		If $\delta=0$, then the theorem is trivially true.

Now suppose that $\delta>0$.  Let $\{Q_j\}$ denote the set $\{Q_j~|~j=1,\cdots,m\}$.
If the intersection of the convex hulls of every choice of $d+1$ sets in $\{Q_j\}$ is non-empty, by
Theorem \ref{helly}, it follows that the intersection of the convex hulls of {\em all} the $m$ sets
in $\{Q_j\}$ is non-empty.
Then, by an argument similar to Case 1 above, we can show that $\delta=0$, which contradicts with the assumption that $\delta>0$. 
Therefore, there must exist $d+1$ sets in $\{Q_j\}$ such that the intersection of their convex hulls is empty.

			Let $Q'_1,\cdots,Q'_{d+1}$ denote $d+1$ distinct sets in $\{Q_j\}$ such that $\bigcap_{i=1}^{d+1}\mathcal{H}(Q'_i)=\emptyset$. It follows that $\bigcap_{i=1}^{d+1}Q'_i=\emptyset$. Let $F'_i=S-Q_i'$, $i=1,\cdots,d+1$.

Now, $|\bigcap_{i=1}^dQ'_{i}|\geq (d+1-d)f=f$. Since $\bigcap_{i=1}^{d+1}Q'_{i}=\emptyset$, we have $\bigcap_{i=1}^d Q'_{i}\subseteq F'_{d+1}$, then $|\bigcap_{i=1}^dQ'_{i}|\leq |F'_{d+1}|=f$. Hence $|\bigcap_{i=1}^dQ'_{i}|=f$, and $\bigcap_{i=1}^dQ'_{i}=F'_{d+1}$. Similarly,
we can show that 
			$$\bigcap_{i=1,\cdots,d+1,i\neq k}Q'_{i}=F'_{k}$$
			We can also show that $F'_i$'s are disjoint. For $s\neq t$,
			$$F'_s\bigcap F'_t=\left(\bigcap_{i=1,\cdots,d+1,i\neq s}Q'_{i}\right)~~\bigcap~~ \left(\bigcap_{i=1,\cdots,d+1,i\neq t}Q'_{i}\right)=\bigcap_{i=1}^{d+1}Q'_{i}=\emptyset$$.
			
			Also, since $|F'_i|=f$ and $F'_s\bigcap F'_t=\emptyset$ for $s\neq t$, it
follows that $\bigcup_{i=1}^{d+1}F'_{i}=S$, and thus, $|\bigcup_{i=1}^{d+1}F'_{i}|=(d+1)f$.
Thus the $(d+1)$
$F_i'$'s form a partition of $S$.
 Finally, since $Q'_k\bigcap F'_k=\emptyset$,  $Q'_k=\bigcup_{i=1,\cdots,{d+1}, i\neq k}F'_i$.

\begin{claim}
\label{claim:simplex}
Consider a set $Z$ of size $d+1$ consisting of one point each in $F_i'$.
Then the $d+1$ points in $Z$ are affinely independent, and $\mathcal{H}(Z)$
is a simplex in $d$-dimensions.
\end{claim}
\begin{claimproof}
The proof is by contradiction.
Suppose that the $d+1$ points in $Z$ are not affinely independent.
Then there must exist a subspace of dimension $\leq d-1$ that contains the $d+1$ points in $Z$.
Let $Z\cap F_i'=\{z_i\}$.
Let $K_k=\mathcal{H}(Z-\{z_k\})$. Then, $z_k\in \bigcap_{i\neq k} K_i$. Thus, every $d$ of $K_i$'s have a non-empty intersection. Then, by Theorem \ref{helly}, we have $\bigcap_{i=1}^{d+1} K_i\neq \emptyset$. By definition of $K_i$, $K_i\subseteq \mathcal{H}(Q'_i)$, therefore $\bigcap_{i=1}^{d+1}\mathcal{H}(Q'_i)\neq \emptyset$, which contradicts with the fact that $\bigcap_{i=1}^{d+1}\mathcal{H}(Q'_i)=\emptyset$. This proves the lemma.
\end{claimproof}
			
Consider a point $w_i\in F'_i$, $i=1,\cdots, d+1$.
Let $W = \{w_i~|~1\leq i\leq d+1\}$
and $W_k = W-\{w_k\}$, $1\leq k\leq d+1$.
Claim \ref{claim:simplex} implies that $\mathcal{H}(W)$ is a simplex.  Let us call this simplex $A$.

Consider the point $p_0$ defined previously as
$$
p_0\in\arg (\min_{p\in \mathbb{R}^d} \max_{i=1,\cdots,\binom{n}{f}} dist(p,\mathcal{H}(P_i)))
$$


\begin{claim}
\label{claim:Hw_i}
$\mathcal{H}(S)-\bigcup_{i=1}^{d+1}\mathcal{H}(Q'_i) \subseteq \mathcal{H}(W)$.
\end{claim}
\begin{claimproof}

 Consider any $x\in \mathcal{H}(S)-\bigcup_{i=1}^{d+1}\mathcal{H}(Q'_i)$. By Theorem \ref{caratheodory},
there exist $d+1$ points $v_1,v_2,\cdots,v_{d+1}$ such that
$\{v_1,v_2,\cdots,v_{d+1}\}\subset S$
and $x\in \mathcal{H}(\{v_1,v_2,\cdots,v_{d+1}\})$.
Also, since $x\in \mathcal{H}(S)-\bigcup_{i=1}^{d+1}\mathcal{H}(Q'_i)$, $x\not\in \bigcup_{i=1}^{d+1}\mathcal{H}(Q'_i)$.

Recall that $\bigcup_{1\leq i\leq d+1}F_i'=S$.
Since $v_i\in S$ and $v_j\in S$,
suppose that $v_i\in F'_{i"}$ and $v_j \in F'_{j"}$.
We claim that for $i\neq j$, $F'_{i"}\neq F'_{j"}$.
Otherwise, there exists $F_k'$ such that $v_l\not\in F_k'$ for $1\leq l\leq d+1$.
Then $x\in \mathcal{H}(S-F'_k)=\mathcal{H}(Q'_k)$, contradicting the fact that $x\not\in \bigcup_{i=1}^{d+1}\mathcal{H}(Q'_i)$. Therefore, without loss of generality, let us assume that $v_i\in F'_i$ for $1\leq i\leq d+1$.

 Consider a sequence of convex hulls $V_0=\mathcal{H}(\{v_1,v_2,\cdots,v_{d+1}\})$, $V_1=\mathcal{H}(\{w_1,v_2,\cdots,v_{d+1}\})$, $V_2=\mathcal{H}(\{w_1,w_2,\cdots,v_{d+1}\})$, $\cdots$, $V_{d+1}=\mathcal{H}(\{w_1,w_2,\cdots,w_{d+1}\})$.
By definition of $V_0$, $x\in V_0$. We now show that if $x\in V_i$, $0\leq i<d+1$, then $x\in V_{i+1}$. The proof is by contradiction. Suppose that $$x\in V_i$$ and $$x\not\in V_{i+1}$$ for some $i$, $0\leq i< d+1$.

$x\in V_i$ and $x\not\in \bigcup_{i=1}^{d+1}\mathcal{H}(Q'_i)$ together imply
the existence of weights $\alpha_i$ such that
			$$
				x=\alpha_1 w_1+\cdots+\alpha_i w_i+\alpha_{i+1} v_{i+1}+\alpha_{i+2} v_{i+2}+\cdots+\alpha_{d+1} v_{d+1}
			$$
			with $\sum_{j=1}^{d+1}\alpha_j=1$, and $\alpha_j> 0$ for $1\leq j\leq d+1$.
			If any of the above weights were to be 0, then $x$ would be a convex combination
			of $d$ points in $S$, implying that $x$ is contained in the convex hull of one of the
			sets in $\{Q'_j~|~1\leq j\leq d+1\}$; this contradicts with $x\not\in \bigcup_{i=1}^{d+1}\mathcal{H}(Q'_i)$.

Now recall that $\{v_i,w_i\}\subseteq F'_i$, $1\leq i\leq d+1$. Therefore,
by Claim \ref{claim:simplex}, the points in the set
$\{w_1,\cdots,w_{i+1},v_{i+2},\cdots,v_{d+1}\}$ are affinely independent.
This observation together with 
$x\not\in V_{i+1}$ implies that there exist weights $\beta_i$'s such that
			$$
				x=\beta_1 w_1+\cdots+\beta_i w_i+\beta_{i+1} w_{i+1}+\beta_{i+2} v_{i+2}+\cdots+\beta_{d+1} v_{d+1}
			$$
			with $\sum_{j=1}^{d+1}\beta_j=1$, and at least one weight $\beta_j<0$.
(Note that if all the weights were to be non-negative, then we would have $x\in V_{i+1}$,
which would contradict the assumptiom above.)

 We can also show that $\beta_{i+1}>0$. The proof is by contradiction.
Suppose that $\beta_{i+1}\leq 0$. Then by simple arrangement of the above two expressions for $x$, we have
			\begin{eqnarray}
				&& \frac{1}{\alpha_{i+1}-\beta_{i+1}}(\alpha_{i+1} v_{i+1}-\beta_{i+1} w_{i+1}) \nonumber \\
				& =& ~-\frac{1}{\alpha_{i+1}-\beta_{i+1}}(\alpha_1 w_1+\cdots+\alpha_i w_i+\alpha_{i+2} v_{i+2}+\cdots+\alpha_{d+1} v_{d+1}) \nonumber \\
				&&+\frac{1}{\alpha_{i+1}-\beta_{i+1}}(\beta_1 w_1+\cdots+\beta_i w_i+\beta_{i+2} v_{i+2}+\cdots+\beta_{d+1} v_{d+1}) \label{eq:y}
			\end{eqnarray}
			Let $\pi_{i+1}$ denote the hyperplane passing through $d$ points $w_{1},\cdots, w_{i},v_{i+2},\cdots,v_{d+1}$. Let $y=\frac{1}{\alpha_{i+1}-\beta_{i+1}}(\alpha_{i+1} v_{i+1}-\beta_{i+1} w_{i+1})$; then $y\in \mathcal{H}(F'_{i+1})$
because $\{v_{i+1},w_{i+1}\}\subseteq F'_{i+1}$.


The left side of (\ref{eq:y}) equals $y$. Then the
right side of (\ref{eq:y}) implies that
 $y\in \pi_{i+1}$. Then we have $d+1$ points $w_{1},\cdots, w_{i}, y,v_{i+2},\cdots,v_{d+1}$ on a hyperplane $\pi_{i+1}$; denote these points as $u_1,\cdots,u_{d+1}$, respectively.
%
%
Define $H_k=\mathcal{H}(\{u_i~|~i\neq k,~1\leq i\leq d+1\}$.
Define $H = \{H_k~|~1\leq k\leq d+1\}$, which contains $d+1$ convex sets ($H_k$'s).
Observe that $u_{k'} \in  \bigcap_{k\neq k'} H_k$.
Thus, any $d$ convex sets in $H$ have a non-empty intersection.
Also, $H_k \subset \pi_{i+1}$, which is a $(d-1)$-dimensional subspace. Then
by Theorem \ref{helly}, we have $\bigcap_{k=1}^{d+1}H_k\neq \emptyset$. Since $u_i\in \mathcal{H}(F'_i)$ and $Q'_k=\bigcup_{i=1,\cdots,{d+1}, i\neq k}F'_i$, we have 
			$$H_k\subseteq \mathcal{H}\left( \bigcup_{i=1,\cdots,{d+1}, i\neq k}\mathcal{H}(F'_i)\right) = \mathcal{H}(Q'_k)$$
			Then, $\bigcap_{k=1}^{d+1}H_k\neq \emptyset$ implies that $\bigcap_{k=1}^{d+1}\mathcal{H}(Q'_k)\neq\emptyset$, which is a contradiction.
 Thus, we have proved that $\beta_{i+1}>0$.
			
			Define $k$ as an index such that $\beta_k<0$ and $\frac{\alpha_k}{|\beta_k|}=\min\limits_{\{l|\beta_i<0\}}(\frac{\alpha_l}{|\beta_l|})$. Since $\beta_{i+1}>0$, $k\neq i+1$.
Now by using the above two equations for $x$, we obtain,
			\begin{equation*}
			\begin{aligned}
				x=&\frac{1}{1+\alpha_k/|\beta_k|}\left((\alpha_1 w_1+\cdots+\alpha_i w_i+\alpha_{i+1} v_{i+1}+\alpha_{i+2} v_{i+2}+\cdots+\alpha_{d+1} v_{d+1})\right.\\
				&~~~~~~~~~~~~~~~~\left.+\alpha_k/|\beta_k|(\beta_1 w_1+\cdots+\beta_i w_i+\beta_{i+1} w_{i+1}+\beta_{i+2} v_{i+2}+\cdots+\beta_{d+1} v_{d+1})\right)\\
				=&\frac{1}{1+\alpha_k/|\beta_k|}\big(\sum_{j=1}^i(\alpha_j+\frac{\alpha_k}{|\beta_k|}\beta_j) w_j+ (\alpha_{i+1} v_{i+1}+\frac{\alpha_k}{|\beta_k|}\beta_{i+1} w_{i+1})
\sum_{j=i+2}^{d+1}(\alpha_j+\frac{\alpha_k}{|\beta_k|}\beta_j) v_j
\big)
			\end{aligned}
			\end{equation*}

Observe that the last summation represents a convex combination of
$w_j,~1\leq j\leq i+1$ and $v_j,i+1\leq j\leq d+1$.
The weights for each of these terms is non-negative, with the weight of the term
with index $k$ being equal to $\alpha_k+\frac{\alpha_k}{|\beta_k|}\beta_k=0$.
Since $\{v_k,w_k\}\subseteq F'_k$ and $Q'_k = S - F'_k$, we have that
$x\in \mathcal{H}(S-F'_k)=\mathcal{H}(Q'_k)$. This contradicts with the fact that $x\not\in \bigcup_{i=1}^{d+1}\mathcal{H}(Q'_i)$. Therefore, we have proved that $x\in V_{i+1}$.

By induction, we have $x\in V_{d+1}=\mathcal{H}(\{w_1,w_2,\cdots,w_{d+1}\})$. Hence we have 
			$$
				\mathcal{H}(S)-\bigcup_{i=1}^{d+1}\mathcal{H}(Q'_i)\subseteq \mathcal{H}(\{w_1,w_2,\cdots,w_{d+1}\})
			$$
			                                                
\end{claimproof}

\begin{claim}
$p_0$ is contained in the simplex $A$ formed by $W=\{w_i~|~1\leq i\leq d+1\}$,
i.e., $p_0\in\mathcal{H}(W)$.
\end{claim}
\begin{claimproof}
			
We first show that $p_0$ cannot be outside $\mathcal{H}(S)$. The proof is by contradiction.

Suppose that $p_0\not\in \mathcal{H}(S)$. Consider the distance $D=dist(p_0,\mathcal{H}(S))$. Since $\mathcal{H}(Q'_i)\subseteq \mathcal{H}(\bigcup_{i=1}^{d+1}Q'_i)=\mathcal{H}(S)$, we have $\delta\geq D$. Let $x$ denote the projection of $p_0$ on $\mathcal{H}(S)$, that is, $x\in \mathcal{H}(S)$ and $dist(p_0,x)=D$. Since $\mathcal{H}(S)$ is convex, we know that $x$ is unique. By Theorem \ref{caratheodory}, there exists a subset $S'\subseteq S$, such that $x\in \mathcal{H}(S')$ and $|S'|\leq d+1$.
Let us name $\mathcal{H}(S')$ as $B$.
 Consider the inputs in $S'$, there are two cases:
\begin{itemize}
\item (Case i) $F'_i\cap S' \neq\emptyset$ for $1\leq i\leq d+1$:
Since $S'$ is of size $d+1$, it follows that $|F'_i\cap S'|=1$.
Let $F'_i\cap S'=\{u_i\}$.
Then $S'=\{u_i~|~1\leq i\leq d+1\}$.
By Claim \ref{claim:simplex}, $\mathcal{H}(S')=B$ is a simplex.
Thus, $x$ is contained in the simplex $\mathcal{H}(S')=\mathcal{H}(\{u_i~|~1\leq i\leq d+1\})$.

\begin{itemize}
\item Recall that $D=dist(p_0,\mathcal{H}(S))=dist(p_0,x)$.
\item Since $x\in B$, by definition of $dist(p_0,B)$, $dist(p_0,x) \geq dist(p_0,B)$.
\item
For simplex $B$,
let $\pi^i$ denote the facet containing points in $S'-\{u_i\}$. Since $u_i\in F'_i$ and $Q'_i=\bigcup_{l=1,\cdots,{d+1}, l\neq i}F'_l$, we have $\pi^i\subseteq \mathcal{H}(Q'_i)$.

Since $p_0\not\in\mathcal{H}(S)$, and $S'\subset S$, it follows that $p_0\not\in \mathcal{H}(S')=B$.
Thus, projection $y$ of $p_0$ on simplex $B$ must be on some facet of $B$. 
Suppose that this facet of $B$ is $\pi^k$ (i.e., $y\in \pi^k$).
Thus, $dist(p_0,B)=dist(p_0,\pi^k)$.

\item Since $\pi^k \subseteq \mathcal{H}(Q'_k)$, $dist(p_0,\pi^k)\geq dist(p_0,\mathcal{H}(Q'_k))$.
\item Finally, by definition of $Q'_k$, $dist(p_0,\mathcal{H}(Q'_k))=\delta$.
\end{itemize}
The above five observations together imply that $D\geq\delta$.
We previously showed that $\delta\geq D$. Therefore,
$D=\delta$.
That is, $dist(p_0,\mathcal{H}(S))=\delta$.

Since $dist(p_0, \mathcal{H}(Q'_i))$ also equals $\delta$ for $i=1,\cdots, d+1$, and 
$Q'_i\subset S$, projection of $p_0$ on $\mathcal{H}(Q'_i)$ and projection of $p_0$ on $\mathcal{H}(S)$ must
be identical.\footnote{Since $\mathcal{H}(S)$ is convex, and $p_0\not\in\mathcal{S}$,
there is a unique point $x\in \mathcal{H}(S)$ that is at distance $D=\delta$ from $p_0$.}
%
%
This implies that $x\in \mathcal{H}(Q'_i)$ for $i=1,\cdots,d+1$, contradicting with the fact that $\bigcap_{i=1}^{d+1} \mathcal{H}(Q'_i)=\emptyset$.

\item (Case ii) There exists $k$, $1\leq k\leq d+1$ such that $F'_i\cap S'=\emptyset$. This, together
with the facts that $Q'_k=\bigcup_{i=1,\cdots,{d+1}, i\neq k}F'_i$, and $S'\subseteq S=\bigcup_{i=1,\cdots,{d+1}}F'_i$, implies that $S'\subseteq Q'_k$.
Hence $dist(p_0,B)=dist(p_0,\mathcal{H}(S'))\geq dist(p_0, \mathcal{H}(Q'_k))=\delta$. 
Similar to Case i, here too we have $D=dist(p_0,\mathcal{H}(S))=dist(p_0,x)\geq dist(p_0,B)$.
Therefore, $D\geq \delta$.

Since we already showed that $\delta\geq D$, we have $D=\delta$. Then, by similar argument
as Case i above, we can show that $x\in \mathcal{H}(Q'_i)$ for $i=1,\cdots,d+1$, contradicting with the fact that $\bigcap_{i=1}^{d+1} \mathcal{H}(Q'_i)=\emptyset$.
\end{itemize}
Therefore $p_0$ cannot be outside $\mathcal{H}(S)$. Thus, $p_0\in\mathcal{H}(S)$.

			We now show that $p_0\not\in\bigcup_{i=1}^{d+1}\mathcal{H}(Q'_i)$. By assumption, $dist(p_0,\mathcal{H}(Q'_i))=\delta>0$, for $1\leq i\leq d+1$. If $p_0\in \bigcup_{i=1}^{d+1}\mathcal{H}(Q'_i)$,
then there exists $k$ such that $p_0\in \mathcal{H}(Q'_k)$. Therefore, $dist(p_0,\mathcal{H}(Q'_k))=0$, which contradicts with the assumption that $\delta>0$.
			
			Thus, we have shown that $p_0\in \mathcal{H}(S)-\bigcup_{i=1}^{d+1}\mathcal{H}(Q'_i)$. Recall that simplex $A=\mathcal{H}(W)$. By Claim \ref{claim:Hw_i}, $\mathcal{H}(S)-\bigcup_{i=1}^{d+1}\mathcal{H}(Q'_i)\subseteq \mathcal{H}(W)=A$. Therefore, $p_0$ is in the simplex
$A=\mathcal{H}(W)$.

\end{claimproof}


			Let $\pi'_k$ denotes the facet of simplex $A$ that contains $\{w_i~|~i\neq k,~1\leq i\leq d+1\}$.
That is, $\pi'_k=\mathcal{H}(\{w_i~|~i\neq k,~1\leq i\leq d+1\}$.
 Since $w_i\in F'_i$ and $Q'_k=\bigcup_{i=1,\cdots,{d+1}, i\neq k}F'_i$, we have $\pi'_k\subseteq \mathcal{H}(Q'_k)$. Hence 
			$$
			\delta=dist(p_0,\mathcal{H}(Q'_i))\leq dist(p_0,\pi'_i)
			$$
			for $i=1,\cdots, d+1$.

Let $S_i$ denote the area (i.e., $(d-1)$-dimensional volume) of facet $\pi_i'$
of simplex $A$. Also, let $r_A$ be the
radius of the sphere inscribed in simplex $A$.
Then volume of simplex $A$ is given by
$\frac{1}{d} \sum_{i=1}^{d+1} S_i r_A$ because the center of the
inscribed sphere is at distance $r_A$ from all the facets of $A$.
Similarly, since $p_0$ is inside simplex $A$, the volume of $A$
is also given by $\frac{1}{d} \sum_{i=1}^{d+1} S_i \, dist(p_0,\pi'_i)$.
Since $\pi_i'\subseteq \mathcal{H}(Q_i')$ and $dist(p_0,\mathcal{H}(Q_i'))=\delta$,
we have $\delta=dist(p_0,\mathcal{H}(Q_i'))\leq dist(p_0,\pi'_i)$ for $1\leq i\leq d+1$.
Thus, we get
\begin{eqnarray}
&& ~~~~~ \frac{1}{d} \sum_{i=1}^{d+1} S_i r_A ~=~
\frac{1}{d} \sum_{i=1}^{d+1} S_i \, dist(p_0,\pi'_i) ~\geq
\frac{1}{d} \sum_{i=1}^{d+1} S_i \, \delta \nonumber \\
&& \Rightarrow 
r_A \geq \delta \label{e:r_A}
\end{eqnarray}
where $r_A$ is the
radius of the sphere inscribed in simplex $A =\mathcal{H}(W)$.
Recall that $W$ includes one (arbitrary) point from each $F_i'$, $1\leq i\leq d+1$.

			Recall that there are $(d+1)f$ points in $S$,
and up to $f$ of them are received from faulty processes. Consider two cases:
\begin{itemize}
\item There exists $k$,  $1\leq k\leq d+1$, such that all the faulty inputs are contained in $F'_k$:
Then $\pi'_k$ is the convex hull of a subset of non-faulty inputs. By Lemma \ref{radius_lemma},
 we have $r_A<r_{\pi'_k}$, where $r_{\pi'_k}$ is the radius of inscribed sphere of $\pi'_k$
(in $d-1$ dimensions). By Lemma \ref{e_max_lemma}, we know that $r_{\pi'_k}< \frac{\max_{e\in E'}\|e\|_2}{d-1}$, where $E'$ is the set of edges between vertices of $\pi'_k$. Since $\pi'_k$ consists of only non-faulty inputs, we have $r_{\pi'_k}< \frac{\max_{e\in E'}\|e\|_2}{d-1}\leq \frac{\max_{e\in E_+}\|e\|_2}{d-1}$. Therefore, $\delta\leq r_A<r_{\pi'_k}< \frac{\max_{e\in E_+}\|e\|_2}{d-1}$.
\item There does not exist $k$ such that all the faulty inputs are contained in $F'_k$:
Since there are at most $f$ inputs, and $|F'_i|=f$ for each $i$, it follows
that each $F'_i$ contains at least one non-faulty input.
For $1\leq i\leq d+1$, let
$u_i\in F'_i$ be a non-faulty input. By Claim \ref{claim:simplex},
$C = \mathcal{H}(\{u_i~|~1\leq i\leq d+1\})$ is a simplex. 
Let $r_C$ denote the radius of the sphere inscribed in $C$. Then,
by (\ref{e:r_A}), we have $\delta\leq r_C$. Since 
by Lemma \ref{e_max_lemma}, we know that $r_C< \frac{\max_{e\in E''}\|e\|_2}{d}$, where $E''$ is the set of edges between the vertices of $C$. Since vectices
of $C$ are all non-faulty inputs, we have $r_C< \frac{\max_{e\in E''}\|e\|_2}{d}\leq \frac{\max_{e\in E_+}\|e\|_2}{d}$. Therefore, $\delta\leq r_C< \frac{\max_{e\in E_+}\|e\|_2}{d}
< \frac{\max_{e\in E_+}\|e\|_2}{d-1}$.
			\end{itemize}
			
Therefore, we obtain an upper bound for $\delta$ (i.e., $\delta^*(S)$) as
			$$
			\delta^*(S)<\frac{\max_{e\in E_{+}}\|e\|_2}{d-1}
			$$
where $E_{+}$ is the set of edges between the inputs of non-faulty processes.		
		\end{itemize}

	\end{proof}
	

Now we present a conjecture for the bound for the remaining cases. First we show that $\delta$ does not decrease when we remove some inputs.

\begin{lemma}\label{lemma:remove_nodes}
	Let $d\geq 3$, $f\geq 2$ and $3f+1< n \leq (d+1)f$. Consider the set of $n$ inputs $S=\{a_1,\cdots,a_{n}\}$ obtained in Step 1 of algorithm ALGO. Let the set of $n-1$ inputs $S'$ be obtained by removing any one inputs from $S$. Then
	$$
	\delta^*(S)\leq \delta^*(S')
	$$
\end{lemma}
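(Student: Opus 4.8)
The plan is to establish the stronger set containment $\Gamma_{(\delta,2)}(S')\subseteq \Gamma_{(\delta,2)}(S)$ for every fixed $\delta\geq 0$, and then read off the inequality on $\delta^*$ from it. Two elementary monotonicity facts will be used throughout. First, $H_{(\delta,2)}$ is monotone under inclusion of its argument: if $A\subseteq B$ then $\mathcal{H}(A)\subseteq\mathcal{H}(B)$, so by Definition \ref{def:Hdelta} the $\delta$-neighborhood of $\mathcal{H}(A)$ is contained in that of $\mathcal{H}(B)$, i.e. $H_{(\delta,2)}(A)\subseteq H_{(\delta,2)}(B)$. Second, for a fixed argument, $H_{(\delta,2)}$ (and hence $\Gamma_{(\delta,2)}$) is monotone in $\delta$: if $\delta_1\leq\delta_2$ then $\Gamma_{(\delta_1,2)}(\cdot)\subseteq\Gamma_{(\delta_2,2)}(\cdot)$, since the $\delta$-neighborhood only grows with $\delta$.

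Write $S'=S\setminus\{a_j\}$, so $|S|=n$ and $|S'|=n-1$; the defining subsets in $\Gamma$ have size $n-f$ for $S$ and $(n-1)-f=n-f-1$ for $S'$. Both sizes are nonnegative and no larger than the parent set, because $n>3f+1$ forces $n-f-1\geq 0$ and $f\geq 1$; thus all these subsets exist and $\Gamma$ is well defined for both $S$ and $S'$. To prove the containment, I would fix $x\in\Gamma_{(\delta,2)}(S')$ and an arbitrary subset $T\subseteq S$ with $|T|=n-f$, and show $x\in H_{(\delta,2)}(T)$. The key step is to extract from $T$ a defining subset of $S'$: produce $T'\subseteq S'$ with $|T'|=n-f-1$ and $T'\subseteq T$, taking $T'=T\setminus\{a_j\}$ when $a_j\in T$, and when $a_j\notin T$ (so $T\subseteq S'$ already) taking $T'$ to be $T$ with any single element deleted. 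In either case $T'$ is a legitimate defining subset for $\Gamma_{(\delta,2)}(S')$, so $x\in H_{(\delta,2)}(T')$; and since $T'\subseteq T$, inclusion-monotonicity of $H_{(\delta,2)}$ gives $x\in H_{(\delta,2)}(T)$. As $T$ was arbitrary, $x\in\Gamma_{(\delta,2)}(S)$, which establishes $\Gamma_{(\delta,2)}(S')\subseteq\Gamma_{(\delta,2)}(S)$.

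Finally I would convert this into the inequality on $\delta^*$. For any $\delta>\delta^*(S')$ the set $\Gamma_{(\delta,2)}(S')$ is non-empty, hence by the containment $\Gamma_{(\delta,2)}(S)$ is non-empty, so $\delta^*(S)\leq\delta$; letting $\delta\downarrow\delta^*(S')$ yields $\delta^*(S)\leq\delta^*(S')$. (Equivalently, since $\delta^*(S')$ is defined as the smallest feasible $\delta$, one may simply plug in $\delta=\delta^*(S')$ directly.) I do not expect a genuine obstacle here: the whole argument is combinatorial bookkeeping on subset sizes together with the two monotonicity observations. The only point requiring care, and the one I would state explicitly, is the matching of the defining-subset sizes ($n-f$ versus $n-f-1$) and the verification that the constructed $T'$ really lies in both $S'$ and $T$ in the two cases $a_j\in T$ and $a_j\notin T$; everything else follows immediately from Definition \ref{def:Hdelta}.
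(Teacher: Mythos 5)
Your proof is correct and is essentially the same argument as the paper's: both rest on the combinatorial fact that every size-$(n-f)$ subset $T$ of $S$ contains a size-$(n-1-f)$ subset $T'$ of $S'$, combined with monotonicity of $H_{(\delta,2)}$ under inclusion of its argument. The only difference is packaging — the paper works with the min--max characterization $\delta^*(S)=\min_{p\in\mathbb{R}^d}\max_{i} dist(p,\mathcal{H}(P_i))$ and compares the two max-functions pointwise, while you compare the sublevel sets $\Gamma_{(\delta,2)}(S')\subseteq\Gamma_{(\delta,2)}(S)$ and then pass to the thresholds; these are equivalent formulations of the same step.
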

\begin{proof}
	Let $P_i$, $i=1,\cdots,\binom{n}{f}$ be the subsets containing
$n-f$ inputs from $S$. Then,
	$$
	\delta^*(S)=\min_{p\in \mathbb{R}^d} \max_{i=1,\cdots,\binom{n}{f}} dist(p,\mathcal{H}(P_i))
	$$
	
	Let $P'_i$, $i=1,\cdots,\binom{n-1}{f}$ be the subsets
	containing $n-1-f$ inputs from $S'$. Then,
	$$
	\delta^*(S')=\min_{p\in \mathbb{R}^d} \max_{i=1,\cdots,\binom{n-1}{f}} dist(p,\mathcal{H}(P'_i))
	$$
	It is clear that every $\mathcal{H}(P'_i)$ is contained in some $\mathcal{H}(P_i)$, and every $\mathcal{H}(P_i)$ contains some $\mathcal{H}(P'_i)$. Therefore $$\max_{i=1,\cdots,\binom{n-1}{f}}dist(p,\mathcal{H}(P'_i))\geq \max_{i=1,\cdots,\binom{n}{f}}dist(p,\mathcal{H}(P_i))$$ for any $p\in \mathbb{R}^d$. Therefore $\delta^*(S')\geq\delta^*(S)$.
	
\end{proof}
	
	\begin{conjecture}\label{conjecture:1}
%
Let $d\geq 3$, $f\geq 2$ and $3f+1\leq n < (d+1)f$. Consider the set of $n$ inputs $S=\{a_1,\cdots,a_{n}\}$ obtained in Step 1 of algorithm ALGO. Then,
$$\delta^*(S)<\frac{\max_{e\in E_+}\|e\|_2}{\lfloor n/f \rfloor -2}$$
where $E_+$ is the set of edges between pairs of non-faulty inputs in $S$.
	\end{conjecture}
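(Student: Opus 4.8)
The plan is to generalize the proof of Theorem \ref{e_max_general}, replacing the fixed number $d+1$ of extremal constraints by a variable count $t$ that will be shown to satisfy $t\ge m:=\lfloor n/f\rfloor$. First I would dispose of the degenerate situations exactly as in Theorem \ref{e_max_general}: if $\max_{e\in E_+}\|e\|_2=0$ then all non-faulty inputs coincide and $\delta^*(S)=0$ (every $(n-f)$-subset still contains $n-2f\ge 1$ copies of the common point since $n\ge 3f+1$), and if $\delta^*(S)=0$ the claimed bound is immediate. So assume $\delta:=\delta^*(S)>0$. As before, write $\delta=\min_{p}\max_i \mathrm{dist}(p,\mathcal H(P_i))$ over all $(n-f)$-subsets $P_i$, fix an optimal $p_0$, and call a subset $Q$ \emph{active} if $\mathrm{dist}(p_0,\mathcal H(Q))=\delta$. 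The perturbation argument of Case~1 of Theorem \ref{e_max_general} carries over verbatim: if the active hulls had a common point, $p_0$ could be moved to strictly decrease the maximum distance; hence $\bigcap\mathcal H(Q)=\emptyset$ over all active $Q$, and in particular the active sets themselves have empty intersection.

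Next I would extract structure by Helly. Working inside $\mathrm{aff}(S)$, of dimension $D\le\min(d,n-1)$, the active hulls have empty intersection, so by Theorem \ref{helly} some minimal subfamily $Q'_1,\dots,Q'_t$ has $\bigcap_{i=1}^t\mathcal H(Q'_i)=\emptyset$ while every proper subfamily intersects, with $t\le D+1$. Writing $F'_i=S\setminus Q'_i$ (so $|F'_i|=f$), empty intersection of the $Q'_i$ is equivalent to $\bigcup_i F'_i=S$, whence $tf\ge n$ and therefore $t\ge\lceil n/f\rceil\ge m$. This inequality is the linchpin: since $n\ge 3f+1$ gives $m\ge 3$, we have $t-2\ge m-2\ge 1$, so it suffices to prove the sharper estimate
\begin{equation*}
\delta < \frac{\max_{e\in E_+}\|e\|_2}{t-2},
\end{equation*}
from which the conjectured bound follows because $\tfrac{1}{t-2}\le\tfrac{1}{m-2}$.

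To obtain this estimate I would reproduce the simplex machinery of Theorem \ref{e_max_general} with $d+1$ replaced by $t$: choose a representative $w_i\in F'_i$, form $W=\{w_1,\dots,w_t\}$, prove the analogue of Claim \ref{claim:simplex} (the $w_i$ are affinely independent, using Theorem \ref{helly} as before) so that $A=\mathcal H(W)$ is a $(t-1)$-simplex, prove the containment analogue of Claim \ref{claim:Hw_i} via Theorem \ref{caratheodory}, and deduce $p_0\in A$. Then the incentre/volume identity gives $\delta\le r_A$, and combining Lemma \ref{radius_lemma} with Lemma \ref{e_max_lemma} (in dimension $t-1$) bounds $r_A$ by $\tfrac{\max_{e}\|e\|_2}{t-2}$; restricting to a non-faulty facet or a non-faulty representative simplex, exactly as in the two cases at the end of Theorem \ref{e_max_general}, replaces $\max_e$ by $\max_{e\in E_+}$.

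The main obstacle is precisely where $n<(d+1)f$ departs from $n=(d+1)f$: covering $S$ by $t$ complements of size $f$ forces $tf\ge n$, so whenever $f\nmid n$ (indeed whenever $t>n/f$) the sets $F'_i$ \emph{cannot} be pairwise disjoint. Disjointness was used twice in Theorem \ref{e_max_general}: to select distinct independent representatives $w_i$, and to guarantee the facet containment $\pi'_k=\mathcal H(W\setminus\{w_k\})\subseteq\mathcal H(Q'_k)$, which needs $F'_i\subseteq Q'_k$ for $i\ne k$, i.e.\ $F'_i\cap F'_k=\emptyset$. Repairing this is the crux: I expect one must select the $w_i$ as a system of distinct representatives (verifying Hall's condition on the minimal family) and replace the facet-containment step by a subtler argument, or else augment $S$ with phantom points to reach a multiple of $f$ before running the disjoint argument. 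A secondary technical point is that when $t<D+1$ the point $p_0$ need not lie in $\mathrm{aff}(A)$, so the volume argument must be performed after projecting onto the affine subspace spanned by the active normals; note also that the monotonicity Lemma \ref{lemma:remove_nodes} does not shortcut this, since it only bounds $\delta^*(S)$ by the \emph{larger} value $\delta^*$ of its subsets. As partial progress, the case $f\mid n$ is clean: then $t=m$ and the $F'_i$ form a genuine partition, so the argument goes through as in Theorem \ref{e_max_general} with $d\mapsto m-1$, proving the conjecture in that sub-case.
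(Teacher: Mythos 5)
First, note that the statement you were asked to prove is Conjecture \ref{conjecture:1} of the paper: the authors themselves do not prove it, and they only establish the corresponding bound in the extremal case $n=(d+1)f$ (Theorem \ref{e_max_general}), explicitly leaving $3f+1\le n<(d+1)f$ open. So there is no proof in the paper to compare yours against; the question is whether your proposal closes the gap the authors left, and it does not — to your credit, you say so yourself. The step you call ``the crux,'' namely what to do when the complements $F'_i=S\setminus Q'_i$ fail to be pairwise disjoint, is precisely the obstruction separating the open conjecture from the proved Theorem \ref{e_max_general}, and your proposal leaves it unresolved, offering only candidate strategies (systems of distinct representatives, phantom points) without carrying any of them out. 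Your preliminary reductions are sound and genuinely useful: the perturbation argument showing that the active hulls must have empty intersection when $\delta^*(S)>0$ carries over, and applying Theorem \ref{helly} to extract a \emph{minimal} subfamily $Q'_1,\dots,Q'_t$ with empty hull intersection, then deducing $\bigcup_i F'_i=S$ and hence $t\ge\lceil n/f\rceil\ge\lfloor n/f\rfloor$, is exactly the right way to make the quantity $\lfloor n/f\rfloor$ appear. But everything after that — affine independence of the representatives $w_i$, the containment $\mathcal{H}(S)\setminus\bigcup_i\mathcal{H}(Q'_i)\subseteq\mathcal{H}(W)$, the facet containment $\pi'_k\subseteq\mathcal{H}(Q'_k)$, and the incircle/volume estimate — rests on the partition structure $\bigcap_{i\ne k}Q'_i=F'_k$, which only exists when $tf=n$.

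Second, your claimed partial result for $f\mid n$ is also not established. Minimality of the Helly subfamily yields the lower bound $t\ge n/f$, not the equality $t=n/f$: nothing prevents every minimal family from having size strictly between $n/f$ and $D+1$, and whenever $tf>n$ the sets $F'_i$ overlap, whether or not $f$ divides $n$. You cannot force a smaller family either, since Helly only guarantees \emph{some} subfamily of size at most $D+1$ with empty intersection, and proper subfamilies of a minimal one intersect by definition. So the only sub-case your argument actually covers is $t=n/f$, which you have no means to arrange. The honest summary is that your proposal reproduces the machinery of Theorem \ref{e_max_general}, correctly isolates why it does not extend below $n=(d+1)f$, and leaves the conjecture as open as the authors did.
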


	\subsubsection{Summary of Upper Bounds}

	As we can see from Theorems \ref{e_min_basic}, \ref{e_max_general} and Conjecture \ref{conjecture:1}, we can summarize the upper bounds for any $n\geq 3f+1$ and $f\geq 1$.
	
	\begin{table}[htp]
		\centering
		\caption{Summary of upper bounds}
		\label{table:summary}
		\begin{tabular}{|l|l|l|}
			\hline
			& $f=1$ & $f\geq2$ \\ \hline
		
		$n=(d+1)f$ & \multirow{2}{*}{$\min(\frac{\min_{e\in E_+}\|e\|_2}{2}, \frac{\max_{e\in E_+}\|e\|_2}{n-2})$(Theorem \ref{e_min_basic})} & $\frac{\max_{e\in E_+}\|e\|_2}{d-1}$(Theorem \ref{e_max_general}) \\ \cline{1-1} \cline{3-3} 
		 
		$3f+1 \leq n<(d+1)f$	&  & $\frac{\max_{e\in E_+}\|e\|_2}{\lfloor n/f \rfloor-2}$(Conjecture \ref{conjecture:1}) \\ \hline
		\end{tabular}
	\end{table}

	If Conjecture \ref{conjecture:1} is valid, we can give an uniform upper bound for this problem.
	
	\begin{conjecture}\label{conjecture:2}
Let $d\geq 3$, $f\geq 1$ and $3f+1\leq n \leq (d+1)f$. Consider the set of $n$ inputs $S=\{a_1,\cdots,a_{n}\}$ obtained in Step 1 of algorithm ALGO. Then,
$$\delta^*(S)<\frac{\max_{e\in E_+}\|e\|_2}{\lfloor n/f \rfloor -2},$$
where $E_+$ is the set of edges between pairs of non-faulty inputs in $S$.
	\end{conjecture}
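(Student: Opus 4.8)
The plan is to observe that Conjecture \ref{conjecture:2} is not an independent statement but a repackaging of results already in hand together with Conjecture \ref{conjecture:1}; I would prove it by a case split on the pair $(f,n)$, checking that the uniform denominator $\lfloor n/f\rfloor-2$ specializes to the denominator appearing in each case. When $f=1$, the range $3f+1\le n\le(d+1)f$ becomes $4\le n\le d+1$ and $\lfloor n/f\rfloor-2=n-2$, so the bound is exactly the one furnished by Theorem \ref{e_min_basic} (with the degenerate, linearly dependent case giving $\delta^*(S)=0$ outright). When $f\ge 2$ and $n=(d+1)f$ we have $\lfloor n/f\rfloor-2=d-1$, which is precisely Theorem \ref{e_max_general}. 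When $f\ge 2$ and $3f+1\le n<(d+1)f$ the claim is verbatim Conjecture \ref{conjecture:1}. Thus Conjecture \ref{conjecture:2} follows immediately once Conjecture \ref{conjecture:1} is established, and the entire burden rests on that single case. Throughout I assume $\max_{e\in E_+}\|e\|_2>0$; if all non-faulty inputs coincide then, exactly as in the proof of Theorem \ref{e_max_general}, $\delta^*(S)=0$ and the statement is read as an equality.

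To prove the missing case, I would adapt the argument of Theorem \ref{e_max_general} with the effective dimension $d$ replaced by $k:=\lfloor n/f\rfloor$, where now $3\le k\le d$. First set $\delta=\delta^*(S)=\min_{p\in\mathbb{R}^d}\max_i dist(p,\mathcal{H}(P_i))$ over the size-$(n-f)$ subsets $P_i$ of $S$, let $p_0$ be a minimizer, and let $Q_1,\dots,Q_m$ be the active subsets, i.e.\ those with $dist(p_0,\mathcal{H}(Q_i))=\delta$. If $\bigcap_i\mathcal{H}(Q_i)\neq\emptyset$, then the perturbation argument of Case 1 in Theorem \ref{e_max_general} (pushing $p_0$ slightly toward a common point of the bounding half-spaces) forces $\delta=0$ and we are done; so assume $\bigcap_i\mathcal{H}(Q_i)=\emptyset$, hence $\bigcap_iQ_i=\emptyset$ at the level of sets.

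The combinatorial heart is to extract a minimal subfamily $Q'_1,\dots,Q'_t$ with $\bigcap_{i=1}^t\mathcal{H}(Q'_i)=\emptyset$. Writing $F'_i=S-Q'_i$, emptiness of the intersection means $\bigcup_{i=1}^t F'_i=S$, and since $|F'_i|=f$ this yields $tf\ge n$, i.e.\ $t\ge\lceil n/f\rceil\ge k$. I would then build an effective simplex $A=\mathcal{H}(W)$ by choosing one representative $w_i$ from each $F'_i$, show $p_0\in A$ (the analogue of Claim \ref{claim:Hw_i} and the subsequent containment claim), and finally bound the inradius: $A$ is a $(t-1)$-simplex, so by Lemma \ref{radius_lemma} its inradius $r_A$ is strictly smaller than that of any facet, and by Lemma \ref{e_max_lemma} a facet (a $(t-2)$-simplex on non-faulty vertices, arranged via the same two-case split on where the faulty inputs fall as at the end of Theorem \ref{e_max_general}) has inradius $<\max_{e\in E_+}\|e\|_2/(t-2)$. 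Combining with $\delta\le r_A$ and $t\ge k$ gives
$$\delta^*(S)\le r_A<\frac{\max_{e\in E_+}\|e\|_2}{t-2}\le\frac{\max_{e\in E_+}\|e\|_2}{k-2},$$
which is the desired bound.

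The step I expect to be the main obstacle is establishing $p_0\in A$ together with the disjointness/partition structure of the $F'_i$. In Theorem \ref{e_max_general} the equality $n=(d+1)f$ was used crucially to force the $d+1$ sets $F'_i$ to be pairwise disjoint and to tile $S$ exactly, which is what made the inductive convex-combination argument of Claim \ref{claim:Hw_i} go through. When $n<(d+1)f$ we only obtain $tf\ge n$, so the $F'_i$ may overlap and need not cover $S$ without repetition; the leftover points must be absorbed without inflating $\max_{e\in E_+}\|e\|_2$. I would attempt to resolve this either by projecting the active configuration onto the affine hull of a suitable $t$-point transversal, reducing to an exact $(t-1)$-dimensional instance where the Theorem \ref{e_max_general} machinery applies verbatim, or by refining the $F'_i$ to a genuine partition using Helly-minimality and Theorem \ref{helly}; verifying that such a refinement preserves both $\bigcap\mathcal{H}(Q'_i)=\emptyset$ and the containment $p_0\in A$ is the delicate point on which a complete proof of Conjecture \ref{conjecture:1}, and hence of Conjecture \ref{conjecture:2}, hinges.
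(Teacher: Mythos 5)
Your case split is exactly the paper's own justification: the paper does not prove Conjecture~\ref{conjecture:2} but presents it as the uniform restatement of Theorem~\ref{e_min_basic} (the case $f=1$, where $\lfloor n/f\rfloor-2=n-2$, with the affinely dependent configuration giving $\delta^*(S)=0$ separately), Theorem~\ref{e_max_general} (the case $f\geq 2$, $n=(d+1)f$, where $\lfloor n/f\rfloor-2=d-1$), and Conjecture~\ref{conjecture:1} (the case $f\geq 2$, $3f+1\leq n<(d+1)f$); this is precisely the content of Table~\ref{table:summary} and of the sentence ``If Conjecture~\ref{conjecture:1} is valid, we can give an uniform upper bound.'' So your reduction is correct and matches the paper, including your handling of the degenerate case $\max_{e\in E_+}\|e\|_2=0$, which the paper glosses over in the same way.

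The caveat is that the statement you were given is a conjecture precisely because the remaining case, Conjecture~\ref{conjecture:1}, is open in the paper as well, and your sketch does not close it. The obstacle you name is the genuine one: the proof of Theorem~\ref{e_max_general} uses the equality $n=(d+1)f$ to force the complements $F'_i$ to be pairwise disjoint and to partition $S$, and this tiling is what powers Claim~\ref{claim:simplex} (affine independence of every transversal) and Claim~\ref{claim:Hw_i} (hence $p_0\in\mathcal{H}(W)$ and the inradius bound via Lemma~\ref{radius_lemma} and Lemma~\ref{e_max_lemma}). When $n<(d+1)f$ you only get $t\geq\lceil n/f\rceil$ active sets whose complements may overlap and need not cover $S$ as a partition; the transversal ``simplex'' may then be lower-dimensional, and neither claim survives as stated. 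Your two proposed repairs (projecting onto the affine hull of a $t$-point transversal, or refining the $F'_i$ to a genuine partition while preserving $\bigcap_i\mathcal{H}(Q'_i)=\emptyset$ and $p_0\in\mathcal{H}(W)$) are exactly the kind of step that would be needed, but they are not carried out, and you say so. In short: your proposal establishes exactly what the paper establishes --- the conditional, case-split statement --- and the step you flag as delicate is where the open problem actually lives.
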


	\subsection{General Upper Bounds for $(\delta, p)$-Relaxed Exact BVC}
	In this section, for general values of $p$, we derive the upper bounds for $\delta$ for $(\delta, p)$-Relaxed Exact BVC problem based on previous result on $(\delta, 2)$-Relaxed Exact BVC problem in synchronous systems. Recall that we require that input-dependent $\delta$ must be bounded as follows:
	$$
	\delta \leq \kappa(n,f,d,p) \max_{e\in E_+} \|e\|_p,
	$$
	where $\kappa(n,f,d,p)$ is a finite constant that may depend on number of processes $n$, number of failures $f$, dimension of the inputs $d$ and $L_p$ norm, but not on the inputs.

	\begin{theorem}[Holder's inequality\cite{hardy1952inequalities}]\label{inequality}
		For vector $x\in \mathbb{R}^d$, let $\|x\|_p$ denotes the $L_p$-norm of $x$. 
		For $1\leq r\leq p$,
		$$
		\|x\|_p\leq \|x\|_r\leq d^{(\frac{1}{r}-\frac{1}{p})} \|x\|_p.
		$$
	\end{theorem}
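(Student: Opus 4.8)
The plan is to prove the two inequalities separately. Both quantities $\|x\|_p$, $\|x\|_r$, and the bound $d^{(1/r-1/p)}\|x\|_p$ are homogeneous of degree one in $x$, so I may assume $x\neq 0$ (the case $x=0$ is trivial). Moreover, when $r=p$ the exponent $\tfrac1r-\tfrac1p$ is zero and both claimed bounds reduce to the equality $\|x\|_p=\|x\|_p$; hence I may assume $r<p$ throughout.

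First I would establish the left inequality $\|x\|_p\leq\|x\|_r$. By homogeneity I rescale so that $\|x\|_r=1$, i.e.\ $\sum_{i=1}^d |x[i]|^r=1$. Each summand then satisfies $|x[i]|^r\leq 1$, so $|x[i]|\leq 1$, and since $p>r$ this forces $|x[i]|^p\leq |x[i]|^r$ for every $i$. Summing over $i$ gives $\sum_{i=1}^d |x[i]|^p\leq \sum_{i=1}^d |x[i]|^r=1$, and taking $p$-th roots yields $\|x\|_p\leq 1=\|x\|_r$, which is the desired bound after undoing the normalization.

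Next I would prove the right inequality $\|x\|_r\leq d^{(1/r-1/p)}\|x\|_p$ by applying the classical Hölder inequality to the product $\sum_i |x[i]|^r\cdot 1$ with the conjugate exponents $s=p/r$ and $s'=p/(p-r)$, which satisfy $1/s+1/s'=1$. This gives
$$
\sum_{i=1}^d |x[i]|^r
\;\leq\;
\left(\sum_{i=1}^d |x[i]|^{rs}\right)^{1/s}\!\left(\sum_{i=1}^d 1^{s'}\right)^{1/s'}
=\left(\sum_{i=1}^d |x[i]|^{p}\right)^{r/p} d^{(p-r)/p}.
$$
Taking $r$-th roots of both sides yields $\|x\|_r\leq \|x\|_p\, d^{(p-r)/(pr)}=d^{(1/r-1/p)}\|x\|_p$, completing the proof.

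There is no genuine obstacle in this argument: it is a standard norm-equivalence fact. The only point demanding a little care is the bookkeeping of the power of $d$ in the Hölder step—selecting the conjugate exponents $p/r$ and $p/(p-r)$ correctly and simplifying $(p-r)/(pr)$ to $\tfrac1r-\tfrac1p$. If one prefers to avoid invoking Hölder, the upper bound can alternatively be obtained from Jensen's inequality applied to the convex map $t\mapsto t^{p/r}$ (the power-mean inequality), but the Hölder route stated above is the most direct.
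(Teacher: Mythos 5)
Your proof is correct. Note, though, that the paper itself offers no proof of this statement at all: it is quoted as a classical result with a citation to Hardy, Littlewood and P\'{o}lya, so there is no internal argument to compare against. What you have done is supply the standard self-contained derivation, and both halves check out: the left inequality $\|x\|_p\leq\|x\|_r$ via the normalization $\|x\|_r=1$ (so each $|x[i]|\leq 1$ and $|x[i]|^p\leq|x[i]|^r$), and the right inequality via the classical H\"{o}lder inequality applied to $\sum_i |x[i]|^r\cdot 1$ with conjugate exponents $s=p/r$ and $s'=p/(p-r)$, where the exponent bookkeeping $(p-r)/(pr)=\tfrac1r-\tfrac1p$ is handled correctly. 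Your reductions to $x\neq 0$ and $r<p$ are legitimate (the excluded cases are trivial), and your remark that the upper bound could instead come from Jensen applied to $t\mapsto t^{p/r}$ is also accurate. The only caveat worth flagging is that your H\"{o}lder step presumes $p<\infty$; this is consistent with the paper, whose definition of $\|\cdot\|_p$ is the finite-sum formula, and where the separate $L_\infty$ comparisons are cited to a different source, but if one wanted the statement to cover $p=\infty$ (reading $\tfrac1p=0$), the right inequality would instead follow directly from bounding each coordinate by $\|x\|_\infty$, giving $\|x\|_r\leq d^{1/r}\|x\|_\infty$.
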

	
	
Let us denote by $\delta_p^*(S)$ the smallest value of $\delta$
for which $(\delta,p)$-consensus is achievable for given set of inputs
$S$ obtained in Step 1 of algorithm ALGO.
Thus, $\delta^*(S)$ defined previously equals $\delta_2^*(S)$.

	
	\begin{theorem}\label{uniform_general}
Let $p\geq 2$, $d\geq 3$, $f\geq 1$ and $3f+1\leq n \leq (d+1)f$. Consider the set of $n$ inputs $S=\{a_1,\cdots,a_{n}\}$ obtained in Step 1 of the algorithm above. Suppose $(\delta,2)$-Relaxed Exact BVC problem can be solved with
$$\delta_2^*(S)< \kappa(n,f,d,2)\max_{e\in E_+}\|e\|_2$$
where $\kappa(n,f,d,2)$ is a constant, and $E_+$ is the set of edges between pairs of non-faulty inputs in $S$.
Then $(\delta,p)$-Relaxed Exact BVC problem can be solved with
$$\delta_p^*(S)< d^{(\frac{1}{2}-\frac{1}{p})}\kappa(n,f,d,2)\max_{e\in E_+}\|e\|_p$$
	\end{theorem}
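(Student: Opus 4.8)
The plan is to apply Hölder's inequality (Theorem~\ref{inequality}) twice: once to compare the two relaxed convex hulls, and once to convert the edge-length factor from the $L_2$ norm to the $L_p$ norm. The whole argument reduces to two short comparisons, so I expect the only delicate point to be getting the direction of each inequality right; there is no genuine obstacle, and the statement is essentially a corollary of the $(\delta,2)$ analysis already carried out.

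First I would show that for $p\geq 2$ and every multiset $T$ we have the containment $H_{(\delta,2)}(T)\subseteq H_{(\delta,p)}(T)$. Taking $r=2$ in the lower half of Theorem~\ref{inequality} gives $\|x\|_p\leq\|x\|_2$ for all $x\in\mathbb{R}^d$; hence if $\|u-v\|_2\leq\delta$ then $\|u-v\|_p\leq\delta$, so any point within $L_2$-distance $\delta$ of $\mathcal{H}(T)$ is also within $L_p$-distance $\delta$. Applying this with $T$ ranging over all subsets of $S$ of size $|S|-f$ yields $\Gamma_{(\delta,2)}(S)\subseteq\Gamma_{(\delta,p)}(S)$, so whenever $\Gamma_{(\delta,2)}(S)$ is non-empty, so is $\Gamma_{(\delta,p)}(S)$. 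Since algorithm ALGO selects its output from $\Gamma_{(\delta,2)}(S)$, that same output is a feasible choice for the $(\delta,p)$ problem, and therefore $\delta_p^*(S)\leq\delta_2^*(S)$.

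Next I would convert the hypothesized bound into an $L_p$ bound on the edges. For any edge vector $e$, taking $r=2$ in the upper half of Theorem~\ref{inequality} gives $\|e\|_2\leq d^{(1/2-1/p)}\|e\|_p$, and maximizing over $e\in E_+$ gives $\max_{e\in E_+}\|e\|_2\leq d^{(1/2-1/p)}\max_{e\in E_+}\|e\|_p$. Combining this with the hypothesis $\delta_2^*(S)<\kappa(n,f,d,2)\max_{e\in E_+}\|e\|_2$ and the inequality $\delta_p^*(S)\leq\delta_2^*(S)$ from the previous step produces the chain $\delta_p^*(S)\leq\delta_2^*(S)<\kappa(n,f,d,2)\max_{e\in E_+}\|e\|_2\leq d^{(1/2-1/p)}\kappa(n,f,d,2)\max_{e\in E_+}\|e\|_p$, which is exactly the claimed bound.

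The one place requiring care is the orientation of the two Hölder estimates: the inequality $\|x\|_p\leq\|x\|_2$ is what guarantees the $(\delta,p)$-relaxed hull is the larger region (so passing to $p\geq 2$ cannot increase $\delta_p^*$), whereas $\|x\|_2\leq d^{(1/2-1/p)}\|x\|_p$ is what absorbs the change of norm into the edge-length factor. Both directions follow immediately from Theorem~\ref{inequality} with $r=2$, so once these two containments are recorded the theorem follows in a couple of lines.
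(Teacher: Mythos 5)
Your proposal is correct and follows essentially the same route as the paper: both arguments apply H\"older's inequality (Theorem~\ref{inequality}) with $r=2$ twice, once to get $\delta_p^*(S)\leq\delta_2^*(S)$ and once to replace $\max_{e\in E_+}\|e\|_2$ by $d^{(\frac{1}{2}-\frac{1}{p})}\max_{e\in E_+}\|e\|_p$, then chain the three inequalities. The only difference is that you spell out the containment $\Gamma_{(\delta,2)}(S)\subseteq\Gamma_{(\delta,p)}(S)$ behind the step $\delta_p^*(S)\leq\delta_2^*(S)$, which the paper asserts without elaboration.
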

	\begin{proof}
		
		By Theorem \ref{inequality}, we know 
		$$\|e\|_2\leq d^{(\frac{1}{2}-\frac{1}{p})} \|e\|_p$$
		and $$\delta_p^*(S) \leq \delta_2^*(S).$$
		Hence 
		$$\delta_p^*(S)\leq \delta_2^*(S)<\kappa(n,f,d,2)\max_{e\in E_+}\|e\|_2\leq d^{(\frac{1}{2}-\frac{1}{p})}\kappa(n,f,d,2)\max_{e\in E_+}\|e\|_p$$

	\end{proof}
	\begin{conjecture}\label{conjecture:3}
		Let $p\geq 2$, $d\geq 3$, $f\geq 1$ and $3f+1\leq n \leq (d+1)f$. Consider the set of $n$ inputs $S=\{a_1,\cdots,a_{n}\}$ obtained in Step 1 of the algorithm above. Then $(\delta,p)$-Relaxed Exact BVC problem can be solved with
		$$\delta_p^*(S)< \frac{d^{(\frac{1}{2}-\frac{1}{p})}}{\lfloor n/f \rfloor -2}\max_{e\in E_+}\|e\|_p$$
		where $E_+$ is the set of edges between pairs of non-faulty inputs in $S$.
	\end{conjecture}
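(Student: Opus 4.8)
The plan is to obtain Conjecture \ref{conjecture:3} as an immediate consequence of the (still conjectural) $p=2$ bound of Conjecture \ref{conjecture:2}, passing from $p=2$ to general $p\geq 2$ exactly as in the proof of Theorem \ref{uniform_general}. Two facts drive this passage. First, for $p\geq 2$ one has the pointwise norm comparison $\|x\|_p\leq\|x\|_2$, which gives $H_{(\delta,2)}(S)\subseteq H_{(\delta,p)}(S)$ and hence $\Gamma_{(\delta,2)}(S)\subseteq\Gamma_{(\delta,p)}(S)$; consequently any $\delta$ making $(\delta,2)$-consensus feasible also makes $(\delta,p)$-consensus feasible, so $\delta_p^*(S)\leq\delta_2^*(S)$. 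Second, H\"older's inequality (Theorem \ref{inequality}) gives $\|e\|_2\leq d^{(1/2-1/p)}\|e\|_p$ for every edge $e$.

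Granting Conjecture \ref{conjecture:2}, we have $\delta_2^*(S)<\frac{1}{\lfloor n/f\rfloor-2}\max_{e\in E_+}\|e\|_2$, i.e.\ the constant in Theorem \ref{uniform_general} may be taken to be $\kappa(n,f,d,2)=\frac{1}{\lfloor n/f\rfloor-2}$. Substituting this value of $\kappa$ into the conclusion of Theorem \ref{uniform_general} yields
$$\delta_p^*(S)<d^{(1/2-1/p)}\,\kappa(n,f,d,2)\max_{e\in E_+}\|e\|_p=\frac{d^{(1/2-1/p)}}{\lfloor n/f\rfloor-2}\max_{e\in E_+}\|e\|_p,$$
which is precisely the bound asserted in Conjecture \ref{conjecture:3}. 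Thus the $p\geq 2$ statement carries no content beyond the $p=2$ statement together with the norm-comparison machinery already in place, and the reduction itself is routine.

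The genuine obstacle is therefore Conjecture \ref{conjecture:2}, which for $f\geq 2$ and $3f+1\leq n<(d+1)f$ is exactly the unproven Conjecture \ref{conjecture:1}; the cases $f=1$ (Theorem \ref{e_min_basic}) and $n=(d+1)f$ (Theorem \ref{e_max_general}) already supply $\kappa=\frac{1}{\lfloor n/f\rfloor-2}$. To attack Conjecture \ref{conjecture:1} I would adapt the Helly/Caratheodory argument of Theorem \ref{e_max_general}. Writing $\delta^*(S)=\min_p\max_i \mathrm{dist}(p,\mathcal{H}(P_i))$ over the size-$(n-f)$ subsets $P_i$, and letting $F_i'=S-Q_i'$ be the complements of the subsets tight at the optimizer $p_0$, the new feature when $n<(d+1)f$ is that the $F_i'$ (each of size $f$) can no longer furnish $d+1$ pairwise-disjoint blocks: any collection of pairwise-disjoint $F_i'$ has total size a multiple of $f$ bounded by $n$, so at most $\lfloor n/f\rfloor$ of them are disjoint. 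I expect this to force the relevant tight simplex (the analogue of $A=\mathcal{H}(W)$ in Claim \ref{claim:simplex}) to live in an affine subspace of dimension $\lfloor n/f\rfloor-1$ rather than $d$, so that the inscribed-sphere/facet estimate of Lemma \ref{e_max_lemma} produces the denominator $\lfloor n/f\rfloor-2$ in place of $d-1$. The hard part is making this dimension count rigorous: one must show that the containments $\mathcal{H}(S)\setminus\bigcup_i\mathcal{H}(Q_i')\subseteq\mathcal{H}(W)$ and $p_0\in\mathcal{H}(W)$ survive when the tight facets number fewer than $d+1$, and that the repeated Helly applications still close, which is where the combinatorics of overlapping (rather than partitioning) $F_i'$ becomes delicate. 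Note finally that a naive reduction — padding $S$ up to $(d+1)f$ points and invoking Theorem \ref{e_max_general} — does not work, since by Lemma \ref{lemma:remove_nodes} removing points only increases $\delta^*$, so padding bounds $\delta^*(S)$ from below rather than from above; a fresh argument along the lines just described is genuinely required.
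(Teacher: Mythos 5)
Your proposal is correct and takes exactly the paper's route: the paper itself justifies Conjecture \ref{conjecture:3} as ``directly obtained from Conjecture \ref{conjecture:2} and Theorem \ref{uniform_general},'' which is precisely your step of substituting $\kappa(n,f,d,2)=\frac{1}{\lfloor n/f\rfloor-2}$ into Theorem \ref{uniform_general} using $\delta_p^*(S)\leq\delta_2^*(S)$ and H\"older's inequality. Your closing sketch of how one might attack Conjecture \ref{conjecture:1} goes beyond anything in the paper (which leaves that case open), but the conditional reduction that constitutes the justification of this statement is identical to the paper's.
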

	This conjecture is directly obtained from Conjecture \ref{conjecture:2} and Theorem \ref{uniform_general}.

\section{Input-Dependent $\delta$ for $(\delta,p)$-Relaxed Consensus in Asynchronous Systems}
In this section, we show that in asynchronous systems, $(\delta,p)$-Relaxed Approximate BVC problem is solvable with fewer than $(d+2)f+1$ processes. We propose an algorithm called \textsl{Relaxed Verified Averaging Algorithm} based on \textsl{Verified Averaging Algorithm} \cite{tseng2013byzantine}, and derive similar bounds for $\delta$ based on results in the synchronous case. Similar to synchronous case, we also require the input-dependent $\delta$ be bounded as 
$$\delta\leq \kappa'(n,f,d,p) \max_{e\in E_+}\|e\|_p$$
where $\kappa'(n,f,d,p)$ is a parameter which has the same definition with $\kappa(n,f,d,p)$ in the synchronous case. Similar to synchronous case, we define $\delta_p^*(S)$ to be the optimal $\delta$ such that the $(\delta,p)$-Relaxed Approximate BVC problem is solvable.

The algorithm we propose is based on \textsl{Verified Averaging Algorithm}. More specifically, we only need to modify Function $H(\mathcal{V},t)$ in \textsl{Verified Averaging Algorithm} as the following.

\begin{definition}{Function $H_{(\delta,p)}(\mathcal{V},t)$}
	\begin{enumerate}
		\item Define multiset $X:=\{h|(h,j,t-1)\in\mathcal{V}\}$.
		\item If $t=0$ then \textbf{hull}$:=\bigcap_{C\subseteq X, |C|=|X|-f}\mathcal{H}_{(\delta,p)}(C)$. Deterministically pick a point \textbf{temp} from \textbf{hull}.
		\item If $t>0$ then \textbf{temp}$:=\sum_{h_i\in X}\frac{1}{|X|}h_i$.
		\item Return \textbf{temp}.
	\end{enumerate}
\end{definition}

Notice that by step $2$ in the definition of Function $H_{(\delta,p)}(\mathcal{V},t)$, the return value \textbf{temp} of round $t=0$ is a single vector, instead of a convex hull in \textsl{Verified Averaging Algorithm}. Then by step $3$ in the definition of Function $H_{(\delta,p)}(\mathcal{V},t)$, the return value \textbf{temp} will always be a single vector for any round $t\geq 1$.


\subsection{Algorithm}
\textsl{Relaxed Verified Averaging Algorithm} is basically \textsl{Verified Averaging Algorithm} with Function $H_{(\delta,p)}(\mathcal{V},t)$ instead of $H(\mathcal{V},t)$. The details of the algorithm can be found in \cite{tseng2013byzantine}.

Many results for \textsl{Verified Averaging Algorithm} are also valid for \textsl{Relaxed Verified Averaging Algorithm}. Hence in this section, we only sketch the proof for the correctness of the algorithm. Notice that necessary condition $n\geq 3f+1$ is also required similar to the synchronous case, since for \textsl{Relaxed Verified Averaging Algorithm}, $n\geq 3f+1$ is necessary to guarantee the correctness of \textsl{reliable broadcast} \cite{bracha1987asynchronous} used in the algorithm.

\subsection{Upper Bounds on $\delta$}
We have the following result for the asynchronous case.


\begin{theorem}\label{thm:asyn}
	Suppose $(\delta,p)$-Relaxed Exact BVC problem can be solved with
	$$\delta_p^*(S)< \kappa(n,f,d,p)\max_{e\in E_+}\|e\|_p$$
	where $\kappa(n,f,d,p)$ is a constant defined previously, and $E_+$ is the set of edges between pairs of non-faulty inputs in $S$.
	
	Then $(\delta,p)$-Relaxed Approximate BVC can be solved with
	$$
	\delta_p^*(S)<\kappa(n-f,f,d,p)\max_{e\in E_+}\|e\|_p
	$$
	where $\kappa(n,f,d,p)$, $S$ and $E_+$ are defined above.

\end{theorem}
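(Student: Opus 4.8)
The plan is to treat this as a reduction: the asynchronous problem on $n$ processes behaves like the synchronous problem on $n-f$ processes, because a non-faulty process can never afford to wait for the (up to $f$) slowest processes. I would first recall the round structure of \textsl{Relaxed Verified Averaging}: in round $t=0$ each non-faulty process collects a verified multiset $X$ and deterministically picks a point from $\Gamma_{(\delta,p)}(X)=\bigcap_{C\subseteq X,\,|C|=|X|-f}H_{(\delta,p)}(C)$, and in every round $t\geq 1$ it replaces its estimate by an average $\sum_{h\in X}\tfrac{1}{|X|}h$ of verified estimates. The entire correctness skeleton --- reliable broadcast, verification of reported values, eventual $\epsilon$-agreement, and termination --- is inherited verbatim from the \textsl{Verified Averaging Algorithm} \cite{tseng2013byzantine}, so the only genuinely new obligation is establishing \emph{validity with the claimed $\delta$}.

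For the round-$0$ bound, the key observation is that, since up to $f$ processes may be arbitrarily slow, a non-faulty process proceeds as soon as it holds verified values from $n-f$ processes; hence its verified multiset $X$ has $|X|\geq n-f$ and contains at most $f$ faulty inputs. Treating $X$ exactly as the input multiset of a synchronous system on $|X|$ processes with up to $f$ faults, the hypothesis applies and guarantees a non-empty $\Gamma_{(\delta,p)}(X)$ whenever $\delta\geq\delta_p^*(X)$, with $\delta_p^*(X)<\kappa(|X|,f,d,p)\max_{e\in E_+^X}\|e\|_p$, where $E_+^X$ is the set of edges among non-faulty inputs of $X$. Monotonicity of $\kappa$ in its first argument, together with Lemma \ref{lemma:remove_nodes} (adding inputs cannot increase $\delta^*$), gives $\kappa(|X|,f,d,p)\leq\kappa(n-f,f,d,p)$, and $E_+^X\subseteq E_+$ gives $\max_{e\in E_+^X}\|e\|_p\leq\max_{e\in E_+}\|e\|_p$. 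Fixing $\delta=\kappa(n-f,f,d,p)\max_{e\in E_+}\|e\|_p$, every non-faulty round-$0$ estimate then lies within distance $\delta$ of $\mathcal{H}(\text{non-faulty inputs of }X)\subseteq\mathcal{H}(I)$, i.e.\ inside $H_{(\delta,p)}(I)$, where $I$ is the multiset of all non-faulty inputs.

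It remains to propagate this through the averaging rounds. By Definition \ref{def:Hdelta}, $H_{(\delta,p)}(I)$ is the Minkowski sum of the convex hull $\mathcal{H}(I)$ with an $L_p$-ball of radius $\delta$, hence convex. The verification mechanism of the base algorithm certifies that any reported round-$0$ value entering an average --- even one from a faulty process --- was a legitimate selection from $\Gamma_{(\delta,p)}$ of a verified set, and therefore also lies in $H_{(\delta,p)}(I)$. Every subsequent estimate is thus a convex combination of points of the convex set $H_{(\delta,p)}(I)$ and remains inside it, securing $(\delta,p)$-Relaxed Validity; combined with the inherited $\epsilon$-agreement and termination, this establishes $(\delta,p)$-Relaxed Approximate BVC for $\delta=\kappa(n-f,f,d,p)\max_{e\in E_+}\|e\|_p$, i.e.\ $\delta_p^*(S)<\kappa(n-f,f,d,p)\max_{e\in E_+}\|e\|_p$.

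I expect the main obstacle to be the second paragraph: rigorously pinning down that the verified round-$0$ set $X$ plays the role of a synchronous input set of $n-f$ processes with $f$ faults, and that the verification step indeed prevents a faulty process from injecting a round-$0$ value outside $H_{(\delta,p)}(I)$ while still passing verification. Making the $\kappa(|X|,\cdot)\leq\kappa(n-f,\cdot)$ monotonicity step precise --- and checking it for the concrete $\kappa$ values of the summary table rather than only the conjectured $\tfrac{1}{\lfloor n/f\rfloor-2}$ form --- is the delicate point; the convexity, agreement, and termination arguments are routine once round-$0$ validity is in hand.
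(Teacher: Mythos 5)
Your proposal is correct and takes essentially the same route as the paper's proof: establish round-$0$ validity by applying the synchronous hypothesis to the verified multiset $X$ with $|X|\geq n-f$ (which is exactly where $\kappa(n-f,f,d,p)$ comes from), propagate validity through the averaging rounds via convex combinations of round-$0$ values, and inherit $\epsilon$-agreement and termination from the \textsl{Verified Averaging Algorithm} of \cite{tseng2013byzantine}. If anything, your treatment is slightly more careful than the paper's, which passes from $\kappa(|X|,f,d,p)$ to $\kappa(n-f,f,d,p)$ without comment, whereas you explicitly flag that this step needs monotonicity in the first argument (or Lemma~\ref{lemma:remove_nodes}) and that $H_{(\delta,p)}(I)$ must be convex for the averaging rounds to preserve validity.
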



\begin{proof}

	First we prove that \textsl{Relaxed Verified Averaging Algorithm} solves $(\delta,p)$-Relaxed Approximate BVC, namely satiesfies $(\delta,p)$-Relaxed Validity and $\epsilon$-Agreement properties after a large enough number of asynchronous rounds, as long as $H_{(\delta,p)}(\mathcal{V},0)$ is non-empty.

{\em For $(\delta,p)$-Relaxed Validity:}

In round $0$, by Reliable Broadcast and the definition of $H_{(\delta,p)}(\mathcal{V},0)$, we know that \textbf{hull} in Fuction $H_{(\delta,p)}(\mathcal{V},0)$ is a non-empty subset of $(\delta,p)$-relaxed convex hull of non-faulty inputs. Hence by similar proof in Lemma 5 \cite{tseng2013byzantine}, $h_i[0]$ is $(\delta,p)$-relaxed valid (i.e., in the $(\delta,p)$-relaxed convex hull of non-faulty inputs) for any process $i$ verified in round $0$.

In round $t>1$, by similar argument in Theorem 2 \cite{tseng2013byzantine}, for process $i$ is verified in round $t$, we have $h_i[t]$ being convex combination of $h_j[0]$'s where process $j$ is verified in round $0$. Thus $h_i[t]$ is $(\delta,p)$-relaxed valid for all non-faulty process $i$ in round $t$. By induction, the algorithm satisfies $(\delta,p)$-Relaxed Validity condition.

{\em For $\epsilon$-Agreement:}

Notice that for round $t= 0$, the return value of $H_{(\delta,p)}(\mathcal{V},0)$ is a point.
Since a point is a special case of a convex hull and the algorithm remains identical to \textsl{Verified Averaging Algorithm} except round $0$, the argument in Theorem 2 \cite{tseng2013byzantine} for $\epsilon$-agreement also applies.

	Now We can show that when $H_{(\delta,p)}(\mathcal{V},0)$ is non-empty, we have the upper bound $\delta_p^*(S)<\kappa(n-f,f,d,p)\max_{e\in E_+}\|e\|_p$.
	
	Consider \textbf{hull}$=\bigcap_{C\subseteq X,|C|=|X|-f}\mathcal{H}_{(\delta,p)}(C)$ in the definition of $H_{(\delta,p)}(\mathcal{V},0)$. Since $X$ contains at most $f$ faulty inputs, and $\max_{e\in E_{X+}}\|e\|_p\leq \max_{e\in E_{+}}\|e\|_p$ where $E_{X+}$ is the set of inputs of non-faulty processes in $X$, we know that when \textbf{hull} is non-empty, we have the bound $\delta^*_p(S)<\kappa(|X|,f,d,p)\max_{e\in E_+}\|e\|_p$ by assumption. Notice that in the algorithm when $H_{(\delta,p)}(\mathcal{V},0)$ is called, we have $|X|=|\mathcal{V}|\geq n-f$. Hence when $H_{(\delta,p)}(\mathcal{V},0)$ is non-empty, we have $\delta_p^*(S)<\kappa(n-f,f,d,p)\max_{e\in E_+}\|e\|_p$

Hence $(\delta,p)$-Relaxed Approximate BVC can be solved with 
$$
\delta_p^*(S)<\kappa(n-f,f,d,p)\max_{e\in E_+}\|e\|_p
$$

\end{proof}

\begin{conjecture}\label{conjecture:4}
Let $p\geq 2$, $d\geq 3$, $f\geq 1$ and $3f+1\leq n \leq (d+2)f$. Then $(\delta,p)$-Relaxed Approximate BVC problem can be solved with
$$\delta^*_p(S)<\frac{d^{(\frac{1}{2}-\frac{1}{p})}}{\lfloor n/f \rfloor -3}\max_{e\in E_{+}}\|e\|_p$$
where $E_+$ is the set of edges between pairs of non-faulty inputs in $S$.
\end{conjecture}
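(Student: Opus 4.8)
The plan is to obtain this conjecture from the synchronous general upper bound of Conjecture~\ref{conjecture:3} together with the reduction of Theorem~\ref{thm:asyn}, exactly as Conjecture~\ref{conjecture:3} was itself obtained from Conjecture~\ref{conjecture:2} and Theorem~\ref{uniform_general}. Recall that Theorem~\ref{thm:asyn} states that whenever the synchronous $(\delta,p)$-Relaxed Exact BVC is solvable with $\delta_p^*(S)<\kappa(n,f,d,p)\max_{e\in E_+}\|e\|_p$, the asynchronous $(\delta,p)$-Relaxed Approximate BVC is solvable with $\delta_p^*(S)<\kappa(n-f,f,d,p)\max_{e\in E_+}\|e\|_p$. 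Thus the asynchronous bound for $n$ processes is just the synchronous bound evaluated at $n-f$; this shift by $f$ reflects the fact that in round $0$ of \textsl{Relaxed Verified Averaging} the multiset $X$ on which the relaxed hull is computed only satisfies $|X|\geq n-f$.

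Concretely, I would take the synchronous constant given by Conjecture~\ref{conjecture:3}, namely $\kappa(n,f,d,p)=d^{(1/2-1/p)}/(\lfloor n/f\rfloor-2)$, valid for $3f+1\leq n\leq (d+1)f$, and substitute $n-f$ for $n$. Using the elementary identity $\lfloor (n-f)/f\rfloor=\lfloor n/f\rfloor-1$, the denominator becomes $\lfloor (n-f)/f\rfloor-2=\lfloor n/f\rfloor-3$, so that
$$\kappa(n-f,f,d,p)=\frac{d^{(1/2-1/p)}}{\lfloor n/f\rfloor-3}.$$
Feeding this into Theorem~\ref{thm:asyn} yields exactly the claimed bound $\delta_p^*(S)<\frac{d^{(1/2-1/p)}}{\lfloor n/f\rfloor-3}\max_{e\in E_+}\|e\|_p$.

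It then remains to reconcile the ranges. Conjecture~\ref{conjecture:3} applies at argument $n-f$ precisely when $3f+1\leq n-f\leq (d+1)f$, i.e. $4f+1\leq n\leq (d+2)f$, and on this subrange the derivation is complete. For $3f+1\leq n\leq 4f-1$ one has $\lfloor n/f\rfloor=3$, so the asserted bound is $+\infty$ and holds vacuously: $n\geq 3f+1$ guarantees correctness of the reliable broadcast used in round $0$, and the algorithm then terminates with a valid output for any sufficiently large $\delta$. The single leftover value $n=4f$ (where $\lfloor n/f\rfloor=4$ and the denominator is $1$) forces $n-f=3f$, just below the synchronous threshold $3f+1$, so the reduction does not apply verbatim; here I would argue the bound on the round-$0$ relaxed hull directly, since the guaranteed block of non-faulty inputs in $X$ spans a low-dimensional simplex to which the inscribed-sphere estimate of Lemma~\ref{e_max_lemma} still applies.

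I expect the principal obstacle to be that the whole argument is conditional: it inherits its validity from Conjecture~\ref{conjecture:3}, which in turn rests on the still-open Conjecture~\ref{conjecture:2}, so this route cannot prove the statement unconditionally. Aside from that conditional reduction, the only genuinely new technical content is the small-$n$ boundary, and in particular the case $n=4f$, where the synchronous machinery does not transfer directly and one must verify non-emptiness of the round-$0$ relaxed hull---and hence the target bound on $\delta_p^*(S)$---from first principles.
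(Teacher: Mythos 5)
Your derivation is exactly the paper's: the paper justifies this conjecture with the single remark that it is ``directly obtained from Conjecture~\ref{conjecture:3} and Theorem~\ref{thm:asyn},'' i.e., substituting $n-f$ into the synchronous constant via the identity $\lfloor (n-f)/f\rfloor = \lfloor n/f\rfloor - 1$, just as you do. Your additional scrutiny of the range---noting that the reduction only covers $4f+1\le n\le (d+2)f$, that the stated bound has a zero denominator for $3f+1\le n\le 4f-1$, and that $n=4f$ escapes Theorem~\ref{thm:asyn} entirely since $n-f=3f$ falls below the synchronous threshold---is more careful than the paper itself, which silently glosses over these boundary cases.
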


	This conjecture is directly obtained from Conjecture \ref{conjecture:3} and Theorem \ref{thm:asyn}.


\section{Summary}
This paper studies \textsl{k-Relaxed Byzantine vector consensus} and \textsl{$(\delta,p)$-Relaxed Byzantine vector consensus} with constant $\delta$, and $\delta$ dependent on the inputs, respectively. For the first two relaxed version of Byzantine vector consensus problem, the tight necessary and sufficient conditions remain unchanged compared to the original problem in both synchronous and asynchronous systems. For the third relaxed version, the tight conditions can be relaxed. We establish partial results concerning the upper bounds of $\delta$ in terms of different number of processes, and propose a conjecture for one remaining case.

	
	\bibliographystyle{abbrv}
	\bibliography{ref}  

\begin{thebibliography}{10}

\bibitem{abraham2005optimal}
I.~Abraham, Y.~Amit, and D.~Dolev.
\newblock Optimal resilience asynchronous approximate agreement.
\newblock In {\em Principles of Distributed Systems}, pages 229--239. Springer,
  2005.

\bibitem{akira2014radii}
A.~Akira~Toda.
\newblock Radii of the inscribed and escribed spheres of a simplex.
\newblock {\em International Journal of Geometry}, 3(2), 2014.

\bibitem{barany1982generalization}
I.~B{\'a}r{\'a}ny.
\newblock A generalization of carath{\'e}odory's theorem.
\newblock {\em Discrete Mathematics}, 40(2):141--152, 1982.

\bibitem{bracha1987asynchronous}
G.~Bracha.
\newblock Asynchronous byzantine agreement protocols.
\newblock {\em Information and Computation}, 75(2):130--143, 1987.

\bibitem{danzer1963helly}
L.~Danzer, B.~Gr{\"u}nbaum, and V.~Klee.
\newblock Helly's theorem and its relatives, 1963.

\bibitem{dolev1986reaching}
D.~Dolev, N.~A. Lynch, S.~S. Pinter, E.~W. Stark, and W.~E. Weihl.
\newblock Reaching approximate agreement in the presence of faults.
\newblock {\em Journal of the ACM (JACM)}, 33(3):499--516, 1986.

\bibitem{fischer1990easy}
M.~J. Fischer, N.~A. Lynch, and M.~Merritt.
\newblock {\em Easy impossibility proofs for distributed consensus problems}.
\newblock Springer, 1990.

\bibitem{fischer1985impossibility}
M.~J. Fischer, N.~A. Lynch, and M.~S. Paterson.
\newblock Impossibility of distributed consensus with one faulty process.
\newblock {\em Journal of the ACM (JACM)}, 32(2):374--382, 1985.

\bibitem{hardy1952inequalities}
G.~H. Hardy, J.~E. Littlewood, and G.~P{\'o}lya.
\newblock {\em Inequalities}.
\newblock Cambridge university press, 1952.

\bibitem{herlihy2014computing}
M.~Herlihy, S.~Rajsbaum, M.~Raynal, and J.~Stainer.
\newblock Computing in the presence of concurrent solo executions.
\newblock In {\em LATIN 2014: Theoretical Informatics}, pages 214--225.
  Springer, 2014.

\bibitem{kothe1983topological}
G.~K{\"o}the and G.~K{\"o}the.
\newblock {\em Topological vector spaces}.
\newblock Springer, 1983.

\bibitem{lamport1982byzantine}
L.~Lamport, R.~Shostak, and M.~Pease.
\newblock The byzantine generals problem.
\newblock {\em ACM Transactions on Programming Languages and Systems (TOPLAS)},
  4(3):382--401, 1982.

\bibitem{lynch1989hundred}
N.~Lynch.
\newblock A hundred impossibility proofs for distributed computing.
\newblock In {\em Proceedings of the eighth annual ACM Symposium on Principles
  of distributed computing}, pages 1--28. ACM, 1989.

\bibitem{mendes2013multidimensional}
H.~Mendes and M.~Herlihy.
\newblock Multidimensional approximate agreement in byzantine asynchronous
  systems.
\newblock In {\em Proceedings of the forty-fifth annual ACM symposium on Theory
  of computing}, pages 391--400. ACM, 2013.

\bibitem{tseng2013byzantine}
L.~Tseng and N.~Vaidya.
\newblock Byzantine convex consensus: An optimal algorithm.
\newblock {\em arXiv preprint arXiv:1307.1332}, 2013.

\bibitem{tseng2014asynchronous}
L.~Tseng and N.~H. Vaidya.
\newblock Asynchronous convex hull consensus in the presence of crash faults.
\newblock In {\em Proceedings of the 2014 ACM symposium on Principles of
  distributed computing}, pages 396--405. ACM, 2014.

\bibitem{tverberg1966generalization}
H.~Tverberg.
\newblock A generalization of radon's theorem.
\newblock {\em J. London Math. Soc}, 41(1):123--128, 1966.

\bibitem{vaidya2014iterative}
N.~H. Vaidya.
\newblock Iterative byzantine vector consensus in incomplete graphs.
\newblock In {\em Distributed Computing and Networking}, pages 14--28.
  Springer, 2014.

\bibitem{vaidya2013byzantine}
N.~H. Vaidya and V.~K. Garg.
\newblock Byzantine vector consensus in complete graphs.
\newblock In {\em Proceedings of the 2013 ACM symposium on Principles of
  distributed computing}, pages 65--73. ACM, 2013.

\end{thebibliography}

\newpage


\appendix

\section{Proof of Lemma \ref{lemma:imposs}}\label{app:a}
	
	\begin{proof}
The proof is essentially identical to the impossibility proof of scalar Byzantine consensus in \cite{lynch1989hundred}. We first show $(\delta, p)$-consensus is impossible for $n\leq 3$ processes with one faulty process.

Let $\mathbf{0}^d$ denotes the vector in dimension $d$ and all its elements are $0$.
Let $\mathbf{1}^d$ denotes the vector in dimension $d$ and all its elements are $1$.

Suppose that a correct algorithm $A$ exists.
		
Suppose that processes $p,q,r$ can solve $(\delta, p)$-relaxed Byzantine vector consensus, where $\delta
\leq \kappa`\|x-y\|_p$ and $n=3$, $f=1$. Consider three scenarios, namely an execution of the system, in Figure \ref{fig:scenario}.
		
		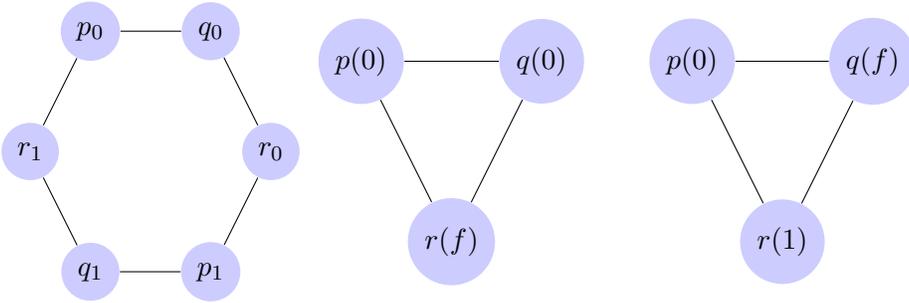
\begin{figure}[htp]
			\centering
			\begin{tikzpicture}
			[scale=.8,auto=left,every node/.style={circle,fill=blue!20}]
			\node (p0) at (2,7) {$p_0$};
			\node (q0) at (4,7) {$q_0$};
			\node (r0) at (5,5) {$r_0$};
			\node (q1) at (2,3) {$q_1$};
			\node (p1) at (4,3) {$p_1$};
			\node (r1) at (1,5) {$r_1$};
			
			\foreach \from/\to in {p0/q0,r0/q0,p1/r0,q1/p1,r1/q1,p0/r1}
			\draw (\from) -- (\to);
			
			\node (p) at (6.5,6.5) {$p(0)$};
			\node (q) at (9.5,6.5) {$q(0)$};
			\node (r) at (8,3.5) {$r(f)$};
			
			\foreach \from/\to in {p/q,r/q,p/r}
			\draw (\from) -- (\to);
			
			\node (p) at (12,6.5) {$p(0)$};
			\node (q) at (15,6.5) {$q(f)$};
			\node (r) at (13.5,3.5) {$r(1)$};
			
			\foreach \from/\to in {p/q,r/q,p/r}
			\draw (\from) -- (\to);
			
			\end{tikzpicture}
			\caption{Scenarios for impossibility proof}
			\label{fig:scenario}
		\end{figure}

		In the first scenario $A$, the system consists two copies of $p,q,r$, namely $p_0,q_0,r_0,p_1,q_1,r_1$, joined into a ring. The processes $p_0,q_0,r_0$ start with initial value $\mathbf{0}^d$, and the processes $p_1,q_1,r_1$ start with initial value $\mathbf{1}^d$.
		
		Consider the second scenario $B$, consisting of one copy of $p,q,r$. Both $p,q$ start with initial value $\mathbf{0}^d$, and $r$ is faulty. Since $r$ can behave arbitraryly, it can send $p$ exactly what $r_1$ send to $p_0$ in scenario $A$, and also send $q$ what $r_0$ send to $q_0$ in scenario $A$. According to the $(\delta,p)$-Relaxed Validity condition, the decision vector at each non-faulty process must be in the \textsl{$(\delta,p)$-relaxed convex hull} of the input vectors at the non-faulty processes. Notice that here $\delta\leq\kappa\|\mathbf{0}^d-\mathbf{0}^d\|_p=0$, the convex hull of feasible outputs is not relaxed. Therefore the decision vector of $p$ and $q$ must be $\mathbf{0}^d$ in scenario $B$. Therefore, $p_0$ and $q_0$ will also decide on $\mathbf{0}^d$ in scenario $A$. By similar argument, $q_1$ and $r_1$ will decide on $\mathbf{1}^d$ in scenario $A$.
		
		Now consider the third scenario $C$, where there are one copy of $p,q,r$. $p$ starts with $\mathbf{0}^d$, $r$ starts with $\mathbf{1}^d$, and $q$ is faulty. Let $q$ send $p$ exactly what $q_0$ send to $p_0$ in scenario $A$, and also send $r$ what $q_1$ send to $r_1$ in scenario $A$. $p$ and $r$ has to decide on a single output in scenario $C$, and so does $p_0$ and $r_1$ in scenario $A$. However, it contradicts with the previous argument that $p_0$ must decide on $\mathbf{0}^d$ and $r_1$ must decide on $\mathbf{1}^d$. Therefore $(\delta, p)$-relaxed Byzantine vector consensus is impossible for $n=3$, $f=1$.
		
		As for $f>1$, we can use simulation approach to show $(\delta, p)$-relaxed Byzantine vector consensus is impossible for $n=(d+1)f$, completing the proof.
	\end{proof}
	
	Since it is impossible to solve $(\delta, p)$-relaxed Byzantine vector consensus for $n\leq 3f$, we only consider the case where $n\geq 3f+1$.

\section{Proof of Theorem \ref{thm:k:async}}\label{app:b}
\begin{proof}
	Since $2\leq k\leq d-1$, $d\geq 3$.
	~\newline
	{\em Sufficiency:}
	By Theorem \ref{thm:approximate}, and due to the equivalence of the original Approximate BVC and
	$d$-Relaxed Approximate consensus, for $d\geq 2$, $n\geq (d+2)f+1$ is sufficient for $d$-Relaxed Approximate BVC.
	Then by Lemma \ref{lemma:k:4}, this condition is also sufficient for $k$-Relaxed Approximate BVC where $2\leq k\leq d-1$.
	
	{\em Necessity:}
	Similar to the synchronous case, we first prove that $n\geq d+3$ is necessary for $f=1$ and $k=2$ by contradition. Suppose that $n=d+2$ and $2$-Relaxed Approximate BVC is achievable using a certain algorithm when $f=1$. Similar to the synchronous case, we assume that all processes follow the specified algorithm. 
	
	Let the $i^{th}$ column of the following $d\times (d+2)$ matrix $S$ be an input vector of the $i^{th}$ process, where $0<2\epsilon<\gamma$. We can show that these $d+2$ inputs lead to an empty solution set.
	
	\begin{equation*}
	S=\begin{pmatrix}
	\gamma & 0 & \cdots & \cdots & 0 &-\gamma & 0\\
	2\epsilon& \gamma & 0 & \cdots & 0 &-\gamma & 0\\
	\vdots&  \ddots & \ddots & \ddots & \vdots &\vdots & \vdots\\
	2\epsilon & \cdots & 2\epsilon & \gamma & 0 & -\gamma & 0\\
	2\epsilon & \cdots & \cdots & 2\epsilon & \gamma & -\gamma & 0\\
	\end{pmatrix}
	\end{equation*}
	In column $i$, $1\leq i\leq d$, the first $i-1$ elements equal 0, the $i$-th element equals $\gamma$, and the rest of the elements equal $2\epsilon$. In column $d+1$, all elements are $-\gamma$. In column $d+2$, all elements are $0$.
	
	Let $s_i$ denote the $i$-th column of matrix $S$ above, that is, the
	input of $i$-th process. Here we use an argument similar to that in the proof of Theorem 4 in \cite{vaidya2013byzantine}. Define $S^{j}=\{s_i:1\leq s_i\leq d+1$ and $i\neq j \}$, $S^{d+2}=\{s_i:1\leq s_i\leq d+1\}$. 
	
	Since a correct algorithm must tolerate one failure, process $i$ must terminate in finite steps even when process ${d+2}$ takes no step (but other processes do take steps). When process $i$ terminates, it cannot distinguish the following $d+1$ cases:
	\begin{itemize}
		\item
		Process $d+2$ has crashed: In order to satisfy the \textsl{k-Relaxed Validity} condition, the output of process $p_i$ must be in the \textsl{k-relaxed convex hull} of input $s_1,\cdots,s_{d+1}$, namely $H_k(S^{d+2})$.
		\item
		Process $j$ is faulty, process $d+2$ is slow: In order to satisfy the \textsl{k-Relaxed Validity} condition, process $i$ cannot trust the process $j$. Therefore the output of process $i$ must be in $H_k(S^{j})$. Since process $j$ may be any process other
		than process $i$ and $d+2$, its output must be in 
		$\bigcap\limits_{j\neq i, 1\leq j \leq d+1}H_k(S^{j})$.
	\end{itemize}
	
	Since $\bigcap\limits_{j\neq i, 1\leq j \leq d+1}H_k(S^{j})\subseteq H_k(S^{d+2})$, the output of process $i$ must be in
	$$
	\bigcap_{j\neq i, 1\leq j \leq d+1}H_k(S^{j})=\bigcap_{\substack{D\in D_k\\j\neq i, 1\leq j \leq d+1}}g_D^{-1}(\mathcal{H}(g_D(S^j)))
	$$
	We denote the above set that must contain
	the output of process $i$ as
	$$
	\Psi_i(S)=\bigcap_{\substack{D\in D_k\\j\neq i, 1\leq j \leq d+1}}g_D^{-1}(\mathcal{H}(g_D(S^j)))
	$$
	We will consider the output set for several different processes now.
	\begin{enumerate}
		\item Consider the output set $\Psi_1(S)$ of process 1:
		\begin{itemize}
			\item Observation 1: First consider $D=\{d-1,d\}$ and $j=d$. For each vectors in $S^{d}$, the $d$-th coordinate is less than or equal to $2\epsilon$ since $\gamma>0$ and $\epsilon>0$. Hence the $d$-th coordinate of vectors in $\Psi_1(S)$ must be less than or equal to $2\epsilon$. Then consider $D=\{d-1,d\}$ and $j=d+1$. Similarly, the $d$-th coordinate of vectors in $\Psi_1(S)$ must be greater than or equal to $2\epsilon$. Therefore the $d$-th coordinate of vectors in $\Psi_1(S)$ must be $2\epsilon$.
			
			\item Observation 2: Consider $D=\{t,t+1\}$ where $1\leq\ t\leq d-1$ and $j=t+1$. 
			As we can see, the $t$-th coordinate of vectors in $S^{t+1}$ is greater than or equal to the $t+1$-th coordinate for $1\leq\ t\leq d-1$, since $2\epsilon<\gamma$. Hence the $t$-th coordinate of vectors in $\Psi_1(S)$ must be greater than or equal to the $t+1$-th coordinate for $1\leq\ t\leq d-1$.
			
			\item Observation 3: Combining Obsevation 1 and 2, we have the first coordinate of vectors in $\Psi_1(S)$ must be greater than or equal to $2\epsilon$.
		\end{itemize}
		
		\item Consider the output set $\Psi_2(S)$ of process 2:
		\begin{itemize}
			\item Observation 4: First consider $D=\{1,2\}$ and $j=1$. For each vectors in $S^{1}$, the first coordinate is less than or equal to $0$, since $\gamma>0$. Hence the first coordinate of vectors in $\Psi_2(S)$ must be less than or equal to $0$. Then consider $D=\{1,2\}$ and $j=d+1$. Since the first coordinate of
			all vectors in $S^{d+1}$ is non-negative, the
			first coordiate of all vectors in $\Psi_2(S)$ must also be greater than or equal to $0$. 
			Combining the two observations,
			the first coordinate of vectors in $\Psi_2(S)$ must be $0$.
		\end{itemize}
	\end{enumerate}
	
	By Observation 3 and Observation 4, $\epsilon$-Agreement is violated since we have $\|v_1-v_2\|_\infty\geq 2\epsilon$, for any $v_1\in \Psi_1(S)$ and $v_2\in \Psi_2(S)$. Therefore we have proved $n=d+2$ is not sufficient for $f=1$, $k=2$. 
	
	As for $f>1$, we can use simulation approach to show $n=(d+2)f$ is not sufficient \cite{lamport1982byzantine}. Therefore, $n \geq (d + 2)f + 1$ is necessary for $f \geq 1$, $k=2$.

	Now, for any vector $x$, $\|x\|_r\leq \|x\|_p$ when $1\leq p\leq r$ \cite{kothe1983topological}. This
	implies that if $\epsilon$-agreement is not achieved under the $L_\infty$-norm,
	then $\epsilon$-agreement is also not achieved under the $L_p$-norm, where $1\leq p<\infty$.
	So the above bound on $n$ holds for any $L_p$-norm, $p\geq1$.
	
\end{proof}

\section{Proof of Theorem \ref{thm:delta:async}}\label{app:c}
\begin{proof}
	For $d=1$, the bound $n\geq 3f+1$ is tight for $(\delta,p)$-Relaxed Approximate BVC \cite{fischer1990easy,abraham2005optimal}. Therefore we only consider the case $d\geq2$.
	
	{\em Sufficiency:}
	By Theorem \ref{thm:exact}, and due to the equivalence of the original Approximate BVC and
	$(0, p)$-Relaxed Approximate BVC, for $d\geq 2$ and $1\leq p$, $n\geq (d+2)f+1$ is sufficient for $(0, p)$-Relaxed Approximate BVC. Then by Lemma \ref{lemma:delta:4}, this condition is also sufficient for $(\delta, p)$-Relaxed Approximate BVC where $0<\delta<\infty$.
	
	{\em Necessity:}	
	Similar to the synchronous case, we first prove the necessary condition for \textsl{$(\delta, \infty)$-Relaxed Approximate BVC}.
	
	We first prove that $n\geq d+3$ is necessary for $f=1$ case. The proof of necessity is by contradiction. Suppose that $n=d+2$ and $(\delta,\infty)$-Relaxed Approximate BVC is achievable using a certain algorithm. 
	
	Analogous to the proof of Theorem \ref{thm:k:sync}, we assume that any faulty process follows the algorithm correctly.
	Let the $i^{th}$ column of the following $d\times (d+2)$ matrix $S$ be an input vector of the $i^{th}$ process. We show that these $d+2$ inputs lead to empty output when $x>2d\delta+\epsilon$.
	\begin{equation*}
	S=\begin{pmatrix}
	x & 0 & \cdots & \cdots & 0 & 0& 0\\
	0& x & 0 & \cdots & 0 & 0& 0\\
	\vdots&  \ddots & \ddots & \ddots & \vdots &\vdots&\vdots\\
	0 & \cdots & 0 & x & 0 & 0& 0\\
	0 & \cdots & \cdots & 0 & x & 0& 0\\
	\end{pmatrix}
	\end{equation*}
	
	By arguments similar to the proof of Theorem \ref{thm:k:async}, the output of process $i$ must be in
	$$
	\Psi_i(S)=\bigcap_{j\neq i, 1\leq j \leq d+1}H_{(\delta,\infty)}(S^{j})
	$$
	
	Let us consider the output for different processes:
	\begin{enumerate}
		\item Consider the output set $\Psi_1(S)$ of process 1:
		\begin{itemize}
			\item Observation 1: Consider $2\leq j\leq d$. Since the $j$-th element of all vectors in $S^j$ is 0, the $j$-th element of the vectors in $\Psi_1(S)$ is less or equal than $\delta$, due to the definition of $(\delta,\infty)$-Relaxed Validity.
			
			\item Observation 2: Consider $j=d+1$.
			Recall that the vectors
			in $H_{(\delta,\infty)}(T)$ are within distance $\delta$ (where the
			distance is measured using the $L_\infty$ norm) of the convex hull $\mathcal{H}(T)$.
			For a given vector $v$ in $\mathcal{H}(T)$,
			let $\alpha_t$ be the weight attached to input $s_t$
			to obtain $v$ (i.e., $v$ is a weighted linear combination of $s_t$'s
			with weights being $\alpha_t$'s). By Observation 1, in order to have non-empty $\Psi_1(S)$, for $2\leq t\leq d$, we must have
			$$
			\alpha_t x-\delta\leq \delta
			$$
			that is, $\alpha_t\leq 2\delta/x$. Hence the weight of $s_1$ in the original convex hull must be larger than or equal to $1-(d-1)2\delta/x$. Therefore the first element of the vectors in $\Psi_1(S)$ is $\geq x\times (1-(d-1)2\delta/x)-\delta=x-(2d-1)\delta$.
		\end{itemize}
		
		\item Consider the output set $\Psi_2(S)$ of process 2:
		\begin{itemize}
			\item Observation 3: Consider $j=1$. Since the first element of all vectors in $S^1$ is $0$, the first element of the vectors in $\Psi_2(S)$ is less or equal than $\delta$, due to the definition of $(\delta,\infty)$-Relaxed Validity.
		\end{itemize}
	\end{enumerate}
	
	According to the assumption, we have $x-(2d-1)\delta-\delta>\epsilon$. Hence we have $\|v_1-v_2\|_\infty>\epsilon$, for any $v_1\in \Psi_1(S)$ and $v_2\in \Psi_2(S)$. Therefore $n=d+2$ is not sufficient for $f=1$. 
	
	For $f>1$, we can use the simulation approach to show $n=(d+2)f$ is not sufficient \cite{lamport1982byzantine}. Therefore, $n \geq (d + 2)f + 1$ is necessary for $f \geq 1$, completing the proof for \textsl{$(\delta, \infty)$-Relaxed Approximate BVC}.
	By an argument similar to the synchronous case, the above bound also
	extends to $(\delta,p)$-relaxed approximate BVC, $p\geq 1$.
	
	Since if $\epsilon$-agreement is not achieved under the $L_\infty$-norm,
	then $\epsilon$-agreement is also not achieved under the $L_p$-norm, where $1\leq p<\infty$.
	So the above bound on $n$ holds for any $L_p$-norm, $p\geq1$.
\end{proof}

\end{document}